\author{Tiziano Dalmonte, Charles Grellois, Nicola Olivetti}
\title{\Intlogic s: \\ A general framework\footnote{Preliminary version. This work was partially supported by the Project TICAMORE ANR-16-CE91-0002-01.}}
\date{\begin{small}\emph{Aix Marseille Univ, Universit\'e de Toulon, CNRS, LIS, Marseille, France}\end{small}}
\theoremstyle{definition}
\newtheorem{theorem}{Theorem}[section]
\theoremstyle{definition}
\newtheorem{lemma}[theorem]{Lemma}
\theoremstyle{definition}
\newtheorem{fact}[theorem]{Fact}
\theoremstyle{definition}
\theoremstyle{definition}
\newtheorem{proposition}[theorem]{Proposition}
\theoremstyle{definition}
\newtheorem{corollary}[theorem]{Corollary}
\theoremstyle{definition}
\newtheorem{definition}{Definition}[section]
\theoremstyle{definition}
\newtheorem{example}{Example}[section]
\newcommand{\mc}{\mathcal}
\newcommand{\seq}{\Rightarrow}
\newcommand{\wk}{$\mathsf{wk}$}
\newcommand{\lwk}{$\mathsf{L}$\wk}
\newcommand{\rwk}{$\mathsf{R}$\wk}
\newcommand{\cut}{$\mathsf{Cut}$}
\newcommand{\ctr}{$\mathsf{ctr}$}
\newcommand{\efq}{\textsf{efq}}
\newcommand{\modusponens}{\textsf{mp}}
\newcommand{\hnbox}{\thickness{\textsf{{N$_\Box$}}}} 
\newcommand{\hndiam}{\thickness{\textsf{{N$_\diam$}}}} 
\newcommand{\hc}{\thickness{\textsf{{C}}}} 
\newcommand{\genznbox}{\hnbox} 
\newcommand{\genzndiam}{\hndiam}
\newcommand{\Estar}{\thickness{\textsf{E}}$^*$}
\newcommand{\E}{\thickness{\textsf{E}}}
\newcommand{\EN}{\thickness{\textsf{EN}}}
\newcommand{\EC}{\thickness{\textsf{EC}}}
\newcommand{\ECN}{\thickness{\textsf{ECN}}}
\newcommand{\EM}{\thickness{\textsf{M}}}
\newcommand{\EMN}{\thickness{\textsf{MN}}}
\newcommand{\EMC}{\thickness{\textsf{MC}}}
\newcommand{\EMCN}{\thickness{\textsf{MCN}}}
\newcommand{\K}{\thickness{\textsf{K}}}
\newcommand{\smallone}{$_\textup{1}$}
\newcommand{\smalltwo}{$_\textup{2}$}
\newcommand{\smallthree}{$_\textup{3}$}
\newcommand{\thickness}{}
\newcommand{\intuitionistic}{\thickness{\textsf I}}
\newcommand{\unoE}{\intuitionistic\thickness{\textsf{E\smallone}}} 
\newcommand{\dueE}{\intuitionistic\thickness{\textsf{E\smalltwo}}}   
\newcommand{\treE}{\intuitionistic\thickness{\textsf{E\smallthree}}} 
\newcommand{\unoM}{\intuitionistic\thickness{\textsf{M}}}        
\newcommand{\unoENbox}{\intuitionistic\thickness{\textsf{E\smallone N$_\Box$}}} 
\newcommand{\unoENdiam}{\intuitionistic\thickness{\textsf{E\smallone N$_\diam$}}} 
\newcommand{\dueENbox}{\intuitionistic\thickness{\textsf{E\smalltwo N$_\Box$}}}
\newcommand{\dueENdiam}{\intuitionistic\thickness{\textsf{E\smalltwo N$_\diam$}}}
\newcommand{\treENbox}{\intuitionistic\thickness{\textsf{E\smallthree N$_\Box$}}}
\newcommand{\treENdiam}{\intuitionistic\thickness{\textsf{E\smallthree N$_\diam$}}}
\newcommand{\MNbox}{\intuitionistic\thickness{\textsf{MN$_\Box$}}} 
\newcommand{\MNdiam}{\intuitionistic\thickness{\textsf{MN$\diamond$}}}
\newcommand{\unoEC}{\intuitionistic\thickness{\textsf{E\smallone C}}} 
\newcommand{\dueEC}{\intuitionistic\thickness{\textsf{E\smalltwo C}}}   
\newcommand{\treEC}{\intuitionistic\thickness{\textsf{E\smallthree C}}} 
\newcommand{\unoMC}{\intuitionistic\thickness{\textsf{MC}}}        
\newcommand{\unoENboxC}{\intuitionistic\thickness{\textsf{E\smallone CN$_\Box$}}} 
\newcommand{\unoENdiamC}{\intuitionistic\thickness{\textsf{E\smallone CN$_\diam$}}} 
\newcommand{\dueENboxC}{\intuitionistic\thickness{\textsf{E\smalltwo CN$_\Box$}}}
\newcommand{\dueENdiamC}{\intuitionistic\thickness{\textsf{E\smalltwo CN$_\diam$}}}
\newcommand{\treENboxC}{\intuitionistic\thickness{\textsf{E\smallthree CN$_\Box$}}}
\newcommand{\treENdiamC}{\intuitionistic\thickness{\textsf{E\smallthree CN$_\diam$}}}
\newcommand{\MNboxC}{\intuitionistic\thickness{\textsf{MCN$_\Box$}}} 
\newcommand{\MNdiamC}{\intuitionistic\thickness{\textsf{MCN$_\diam$}}}
\newcommand{\iboxE}{\ibox\E}
\newcommand{\iboxEN}{\ibox\EN}
\newcommand{\iboxEC}{\ibox\EC}
\newcommand{\iboxECN}{\ibox\ECN}
\newcommand{\iboxEM}{\ibox\EM}
\newcommand{\iboxEMN}{\ibox\EMN}
\newcommand{\iboxEMC}{\ibox\EMC}
\newcommand{\iboxEMCN}{\ibox\EMCN}
\newcommand{\idiamE}{\idiam\E}
\newcommand{\idiamEN}{\idiam\EN}
\newcommand{\idiamEM}{\idiam\EM}
\newcommand{\idiamEMN}{\idiam\EMN}
\newcommand{\gentzen}{\textsf{G}.}
\newcommand{\g}{$\mathsf{G^*}$}
\newcommand{\gunoE}{\gentzen\unoE}
\newcommand{\gdueE}{\gentzen\dueE}    
\newcommand{\gtreE}{\gentzen\treE}  
\newcommand{\gunoM}{\gentzen\unoM}
\newcommand{\logichedue}{\dueE(\axC,\axNdiam,\axNbox)}
\newcommand{\calcolidue}{\gentzen\logichedue}
\newcommand{\gC}{$\mathsf{GC}^*$}
\newcommand{\gunoEC}{\gentzen\unoEC} 
\newcommand{\gdueEC}{\gentzen\dueEC}    
\newcommand{\gtreEC}{\gentzen\treEC}  
\newcommand{\gunoMC}{\gentzen\unoMC}
\newcommand{\gMNboxC}{\gentzen\MNboxC}
\newcommand{\giboxE}{\gentzen\iboxE}
\newcommand{\giboxEN}{\gentzen\iboxEN}
\newcommand{\giboxEC}{\gentzen\iboxEC}
\newcommand{\giboxECN}{\gentzen\iboxECN}
\newcommand{\giboxEM}{\gentzen\iboxEM}
\newcommand{\giboxEMN}{\gentzen\iboxEMN}
\newcommand{\giboxEMC}{\gentzen\iboxEMC}
\newcommand{\giboxEMCN}{\gentzen\iboxEMCN}
\newcommand{\gidiamE}{\gentzen\idiamE}
\newcommand{\gidiamEN}{\gentzen\idiamEN}
\newcommand{\gidiamEM}{\gentzen\idiamEM}
\newcommand{\gidiamEMN}{\gentzen\idiamEMN}
\newcommand{\h}{$\mathsf{H^*}$}
\newcommand{\axMdiam}{$\mathsf M_\Diamond$}
\newcommand{\axCdiam}{$\mathsf C_\Diamond$}
\newcommand{\axNdiam}{$\mathsf N_\Diamond$}
\newcommand{\axMbox}{$\mathsf M_\Box$}
\newcommand{\axCbox}{$\mathsf C_\Box$}
\newcommand{\axNbox}{$\mathsf N_\Box$}
\newcommand{\dualbox}{$\mathsf{Dual_\Box}$}
\newcommand{\dualdiam}{$\mathsf{Dual_\Diamond}$}
\newcommand{\axKdiam}{$\mathsf K_\Diamond$}
\newcommand{\axKbox}{$\mathsf K_\Box$}
\newcommand{\axM}{\thickness{\textsf M}}
\newcommand{\axC}{\thickness{\textsf C}}
\newcommand{\axN}{\thickness{\textsf N}}
\newcommand{\axT}{\thickness{\textsf T}}
\newcommand{\axD}{\thickness{\textsf D}}
\newcommand{\axquattro}{\thickness{\textsf 4}}
\newcommand{\axcinque}{\thickness{\textsf 5}}
\newcommand{\rebox}{$\mathsf{E_\Box}$}
\newcommand{\rediam}{$\mathsf{E_\Diamond}$}
\newcommand{\rmbox}{$\mathsf{Mon_\Box}$}
\newcommand{\rmdiam}{$\mathsf{Mon_\Diamond}$}
\newcommand{\rulenbox}{$\mathsf{Nec}$}
\newcommand{\intunoa}{$\mathsf{weak_a}$}
\newcommand{\intunob}{$\mathsf{weak_b}$}
\newcommand{\intduea}{$\mathsf{neg_a}$}
\newcommand{\intdueb}{$\mathsf{neg_b}$}
\newcommand{\inttre}{$\mathsf{str}$}
\newcommand{\apiceseq}{seq}
\newcommand{\grebox}{$\mathsf{E_\Box^{\apiceseq}}$}
\newcommand{\grediam}{$\mathsf{E_\Diamond^{\apiceseq}}$}
\newcommand{\grmbox}{$\mathsf{M_\Box^{\apiceseq}}$}
\newcommand{\grmdiam}{$\mathsf{M_\Diamond^{\apiceseq}}$}
\newcommand{\grulenbox}{$\mathsf{N_\Box^{\apiceseq}}$}
\newcommand{\grulendiam}{$\mathsf{N_\Diamond^{\apiceseq}}$}
\newcommand{\gintunoa}{$\mathsf{weak_a^{\apiceseq}}$}
\newcommand{\gintunob}{$\mathsf{weak_b^{\apiceseq}}$}
\newcommand{\gintduea}{$\mathsf{neg_a^{\apiceseq}}$}
\newcommand{\gintdueb}{$\mathsf{neg_b^{\apiceseq}}$}
\newcommand{\ginttre}{$\mathsf{str^{\apiceseq}}$}
\newcommand{\greboxc}{$\mathsf{E_\Box C^{\apiceseq}}$}
\newcommand{\grmboxc}{$\mathsf{M_\Box C^{\apiceseq}}$}
\newcommand{\gintunobc}{$\mathsf{weak_bC^{\apiceseq}}$}
\newcommand{\gintdueac}{$\mathsf{neg_aC^{\apiceseq}}$}
\newcommand{\gintduebc}{$\mathsf{neg_bC^{\apiceseq}}$}
\newcommand{\ginttrec}{$\mathsf{strC^{\apiceseq}}$}
\newcommand{\inseq}{$\mathsf{Ax}$}
\newcommand{\lbot}{$\mathsf{L\bot}$}
\newcommand{\lland}{$\mathsf{L\land}$}
\newcommand{\rland}{$\mathsf{R\land}$}
\newcommand{\llor}{$\mathsf{L\lor}$}
\newcommand{\rlor}{$\mathsf{R\lor}$}
\newcommand{\limp}{$\mathsf{L}$$\mathsf{\imp}$}
\newcommand{\rimp}{$\mathsf{R}$$\mathsf{\imp}$}
\newcommand{\rneg}{$\mathsf{R\neg}$}
\newcommand{\X}{\logic}
\newcommand{\gX}{\gentzen\X}
\newcommand{\gtrei}{$\mathsf{G3ip}$}
\newcommand{\gtre}{$\mathsf{G3}$}
\newcommand{\il}{\textsf{IPL}}
\newcommand{\cl}{\textsf{CPL}}
\newcommand{\diam}{\Diamond}
\newcommand{\ax}{\AxiomC}
\newcommand{\uinf}{\UnaryInfC}
\newcommand{\binf}{\BinaryInfC}
\newcommand{\llab}{\LeftLabel}
\newcommand{\rlab}{\RightLabel}
\newcommand{\disp}{\DisplayProof}
\newcommand{\W}{\mathcal W}
\newcommand{\nbox}{\mathcal N_\Box}
\newcommand{\ndiam}{\mathcal N_\Diamond}
\newcommand{\V}{\mathcal V}
\newcommand{\M}{\mathcal M}
\newcommand{\N}{\mathcal N}
\newcommand{\less}{\preceq}
\newcommand{\more}{\succeq}
\newcommand{\R}{\mathcal R}
\newcommand{\Nk}{\mathcal N_k}
\newcommand{\Mk}{\mathcal M_k}
\newcommand{\rel}{\mathscr R}
\newcommand{\Mc}{\mathcal M^c}
\newcommand{\Vc}{\mathcal V^c}
\newcommand{\lessc}{\preceq^c}
\newcommand{\Nc}{\mathcal N^c}
\newcommand{\Wc}{\mathcal W^c}
\newcommand{\Mcplus}{\mathcal M^c_+}
\newcommand{\McHW}{\mathcal M^c_{\textup{\HW}}}
\newcommand{\McCK}{\mathcal M^c_{\textup{\CK}}}
\newcommand{\wmodel}{\HW-model}
\newcommand{\Wstar}{\W^*}
\newcommand{\Mstar}{\M^*}
\newcommand{\Vstar}{\V^*}
\newcommand{\Rstar}{\R^*}
\newcommand{\lessstar}{\less^*}
\newcommand{\morestar}{\more^*}
\newcommand{\f}{\mathbf f}
\newcommand{\ff}{(\f, \{\f\})}
\newcommand{\longto}{\longrightarrow}
\newcommand{\pow}{\mathcal P}
\newcommand{\atm}{Atm}
\newcommand{\sbf}{Sbf}
\newcommand{\G}{\Gamma}
\newcommand{\up}{\uparrow_{pr}$$}
\newcommand{\logic}{\thickness{\textsf L}}
\newcommand{\logicone}{\logic$_1$}
\newcommand{\logictwo}{\logic$_2$}
\newcommand{\glogic}{\gentzen\logic}
\newcommand{\vd}{\vdash}
\newcommand{\Vd}{\Vdash}
\newcommand{\Vdr}{\Vdash_{r}}
\newcommand{\Vdk}{\Vdash_{k}}
\newcommand{\Vdn}{\Vd}
\newcommand{\setneg}{-}
\newcommand{\pos}{^+}
\newcommand{\wij}{Wijesekera}
\newcommand{\imp}{\supset}
\newcommand{\coimp}{\subset}
\newcommand{\monomodalbox}{\ibox\E$^*$}
\newcommand{\monomodaldiam}{\idiam\E$^*$}
\newcommand{\ibox}{$\Box$-\intuitionistic}
\newcommand{\idiam}{$\diam$-\intuitionistic}
\newcommand{\lan}{\mathcal L}
\newcommand{\ldiam}{\mc L_\diam}
\newcommand{\lbox}{\mc L_\Box}
\newcommand{\connuno}{$weakInt$}
\newcommand{\Connuno}{$weakInt$}
\newcommand{\connduea}{$negInt_a$}
\newcommand{\conndueb}{$negInt_b$}
\newcommand{\Connduea}{$negInt_a$}
\newcommand{\Conndueb}{$negInt_b$}
\newcommand{\Conndue}{$negInt$}
\newcommand{\conntre}{$strInt$}
\newcommand{\Conntre}{$strInt$}
\newcommand{\Connunobis}{Weak interaction}
\newcommand{\Conntrebis}{Strong interaction}
\newcommand{\Connduebisaaux}{Negation closure int\_a}
\newcommand{\Connduebisbaux}{Negation closure int\_b}
\newcommand{\wcondition}{$WInt$}
\newcommand{\wconditionbis}{$WInt'$}
\newcommand{\lessclosure}{$\less$-closure}
\newcommand{\nboxc}{\nbox^c}
\newcommand{\ndiamc}{\ndiam^c}
\newcommand{\nboxplus}{\nbox^+}
\newcommand{\ndiamplus}{\ndiam^+}
\newcommand{\GW}{\gentzen\HW}
\newcommand{\GCK}{\gentzen\CK}
\newcommand{\Wrule}{$\mathsf{W^{seq}}$}
\newcommand{\HW}{\CCDL$^{\mathsf p}$}
\newcommand{\CK}{\thickness{\textsf{CK}}}
\newcommand{\CCDL}{\thickness{\textsf{CCDL}}}
\newcommand{\PLL}{\thickness{\textsf{PLL}}}
\newcommand{\wclass}{w_\sim}
\newcommand{\vclass}{v_\sim}
\newcommand{\alphaclass}{\alpha_\sim}
\newcommand{\Aclass}{[A]^\sim}
\newcommand{\Aunoclass}{[A_1]^\sim}
\newcommand{\Anclass}{[A_n]^\sim}
\newcommand{\Bclass}{[B]^\sim}
\newcommand{\Mf}{\M^*}
\newcommand{\Wf}{\W^*}
\newcommand{\lessf}{\less^*}
\newcommand{\Vf}{\V^*}
\newcommand{\nboxf}{\nbox^*}
\newcommand{\ndiamf}{\ndiam^*}
\newcommand{\Mcirc}{\M^\circ}
\newcommand{\nboxcirc}{\nbox^\circ}
\newcommand{\ndiamcirc}{\ndiam^\circ}
\newcommand{\Mn}{\M_n}
\newcommand{\Mr}{\M_r}
\newcommand{\w}{w}
\newcommand{\reqone}{R1}
\newcommand{\reqtwo}{R2}
\newcommand{\reqthree}{R3}
\newcommand{\cupledintmodel}{coupled intuitionistic neighbourhood model}
\newcommand{\Intmodel}{Intuitionistic neighbourhood model}
\newcommand{\komodel}{Kojima's model}
\newcommand{\intlogic}{intuitionistic non-normal modal logic}
\newcommand{\Intlogic}{Intuitionistic non-normal modal logic}
\newcommand{\intmonologic}{intuitionistic non-normal monomodal logic}
\newcommand{\intbilogic}{intuitionistic non-normal bimodal logic}
\newcommand{\Intbilogic}{Intuitionistic non-normal bimodal logic}
\newcommand{\boxmodel}{$\Box$-INM}
\newcommand{\diammodel}{$\diam$-INM}
\newcommand{\intmodel}{CINM}
\begin{document}
\maketitle

\begin{abstract}
We define a family of intuitionistic non-normal modal logics;
they can bee seen as intuitionistic counterparts of classical ones.
We first consider monomodal logics, which contain only one between Necessity and Possibility.
We then consider the more important case of bimodal logics,
which contain both modal operators.
In this case we define several interactions between Necessity and Possibility of 
increasing strength, although weaker than duality.
For all logics we provide both a Hilbert axiomatisation and a cut-free sequent calculus,
on its basis we also prove their decidability.
We then give a semantic characterisation of our logics in terms of  
neighbourhood models. 
Our semantic framework captures modularly not only our systems 
but also already known intuitionistic non-normal modal logics such as Constructive K (CK) and
the propositional fragment of Wijesekera's Constructive Concurrent Dynamic Logic.
\end{abstract}

\section{Introduction}

Both intuitionistic modal logic and non-normal modal logic have been studied for a long time. 
The  study of modalities with an intuitionistic basis goes back to Fitch in the late 40s (Fitch \cite{Fitch})
 and has led to an important stream of research. We can very schematically identify two  traditions: so-called Intuitionistic modal logics \emph{versus} Constructive modal logics. Intuitionistic modal logics have been  systematised  by Simpson \cite{Simpson}, whose main goal is to define an analogous of classical modalities justified from an intuitionistic point of view. On the other hand, constructive modal logics are  mainly motivated by their applications to computer science, such as the type-theoretic interpretations (Curry--Howard correspondence, typed lambda calculi), verification and knowledge representation,\footnote{For a recent survey see Stewart \emph{et al.} \cite{Stewart} and references therein.}
 but also by their mathematical semantics (Goldblatt \cite{Goldblatt}).

On the other hand, non-normal modal logics have been strongly motivated on a philosophical and epistemic ground. They are called ``non-normal'' as they do not satisfy
all the axioms and rules of the minimal normal modal logic \K. They have been studied since the seminal works of Scott, Lemmon, and Chellas (\cite{Scott}, \cite{Chellas}, see Pacuit \cite{Pacuit}  for a survey), 
 and can be seen as generalisations of standard modal logics. They  have found an interest in several areas such as epistemic and deontic reasoning, reasoning about games, and reasoning about probabilistic notions such as ``truth in most of the cases''.

Although the two areas have grown up  seemingly without any interaction, it can be noticed that some intuitionistic or constructive  modal logics investigated in the literature contain non-normal modalities. The prominent example is  the logic \CCDL{} proposed by \wij{} \cite{\wij}, whose  propositional fragment
(that we call \HW) has been recently investigated by Kojima \cite{Kojima}. This logic has a normal $\Box$ modality and a non-normal $\Diamond$ modality, where  $\Diamond$ does not distribute over the $\lor$, that is 
$$(\textup{\axCdiam}) \quad\quad \Diamond(A \lor B)\imp \Diamond A \lor \Diamond B$$
is not valid. The original motivation by \wij{} comes from Constructive Concurrent Dynamic Logic, but the logic has also an interesting epistemic interpretation in terms of internal/external observers proposed by Kojima. A related system is Constructive \K{} (\CK), that has been  proposed by Bellin \emph{et al.} \cite{Bellin} and further investigated by Mendler and de Paiva \cite{Mendler1}, Mendler and Scheele \cite{Mendler2}. 
This system not only rejects \axCdiam, but also its nullary version $\Diamond \bot \imp \bot$ (\axNdiam). 
In contrast all these systems assume a normal interpretation of $\Box$ so that 
$$\Box (A \land B) \imp\coimp (\Box A \land \Box B)$$
is always assumed. 
A further example is Propositional Lax Logic (\PLL) by  Fairtlough and Mendler \cite{Fairtlough}, an intuitionistic monomodal logic for hardware verification where the modality 
does not validate the rule of necessitation.

Finally, all intuitionistic modal logics reject the interdefinability of the two operators:
 $$\Box A \imp\coimp\neg \Diamond \neg A$$
 and its boolean equivalents.

To the best of our knowledge, no systematic investigation of non-normal modalities with an intuitionistic base has been carried out so far.  Our aim is to lay down a general framework which can accommodate in a uniform way intuitionistic counterparts of the classical cube of non-normal modal logics, as well as \HW{} and  \CK{} mentioned above. As we shall see, the adoption of an intuitionistic base leads to a finer analysis of non-normal modalities than in the classical case.  In addition to the motivations for classical non-normal modal logics briefly recalled above,  an intutionistic interpretation of non-normal modalities may be justified by more  specific interpretations, of which we mention two:
\begin{itemize}
	\item The \textbf{deontic interpretation}: The standard interpretation of deontic operators $\Box$ (Obligatory), 
$\Diamond$ (Permitted) is normal: but it has been known for a long time that the normal interpretation is problematic when dealing for instance with ``Contrary to duty obligations".%
\footnote{For a survey on puzzles related to a normal interpretation of the deontic modalities see McNamara \cite{McNamara}.}
One solution is to adopt a non-normal interpretation, rejecting in particular the monotonicity principle (from $A \imp B$ is valid infer $\Box A\imp\Box B$). 
Moreover, a constructive reading of the deontic modalities would further reject their interdefinability: 
one may require that the permission of $A$ must be justified explicitly or positively (say by a proof from a corpus of norms) and not just established by the fact that $\neg A$ is not obligatory
(see for instance the distinction between weak and strong permissions in von Wright \cite{Wright}).
	\item The \textbf{contextual interpretation}: A contextual reading of the modal operators  is proposed in Mendler and de Paiva \cite{Mendler1}. 
In this interpretation $\Box A$ is read as ``$A$ holds in all contexts'' and $\Diamond A$ as ``A holds in some context''. This interpretation invalidates \axCdiam, while retaining the distribution of $\Box$ over conjunction (\axCbox). But this contextual interpretation is not the only possible one. 
	We can interpret $\Box A$ as $A$ is ``justified'' (proved) in some context $c$, no matter what is meant by a context (for instance a knowledge base), and $\Diamond A$ as $A$ is ``compatible'' (consistent) with every context. With this interpretation both operators would be non-normal as they would satisfy neither \axCbox, nor \axCdiam. 
\end{itemize}
As we said, our aim is to provide a general framework for non-normal modal logics with an intuitionistic base. However, in order to identify and restrain  the family of logics of interest, we adopt some criteria, which partially coincide with Simpson's requirements (Simpson \cite{Simpson}):
 \begin{itemize}
 	\item The modal logics should be conservative extensions of \il.
 	\item The disjunction property must hold.
 	\item The two modalities should not be interdefinable.
 	\item We do not consider systems containing the controversial \axCdiam. 
 \end{itemize}
Our starting point is the study of \emph{monomodal} systems, which extend \il{} with either $\Box$ or $\Diamond$, but not both. We consider the monomodal logics  corresponding to the classical cube  generated by the weakest logic \E{} extended with conditions \textsf{M, N, C} 
(with the exception of \axCdiam). 
We give an axiomatic characterisation of these logics  and equivalent cut-free sequent systems similar to the one by Lavendhomme and Lucas \cite{Lavendhomme} for the classical case.

Our main interest is however in logics which contain \emph{both} $\Box$ and $\Diamond$, and allow some form of interaction between the two. Their interaction is always  weaker than  interdefinability. 
In order to define logical systems we take a proof-theoretical perspective:  the existence of a simple cut-free system, as in the monomodal case,  is our criteria to identify meaningful systems. \emph{A system is retained if the combination of sequent rules amounts to a cut-free system}. 

It turns out that one can distinguish \emph{three} degrees of interaction between  $\Box$ and $\Diamond$, that are determined by answering to the question,  for any two formulas $A$ and $B$:
\begin{quote}
\emph{under what conditions $\Box A$ and $\Diamond B$ are jointly inconsistent?}
\end{quote}
Since there are \emph{three} degrees of interaction,  even the weakest classical logic \E{} has \emph{three} intuitionistic counterparts of increasing strength. When combined with \textsf{M, N, C} properties of the classical cube, we end up with a family of 24 distinct systems, all enjoying a cut-free calculus  and, as we prove, an equivalent Hilbert axiomatisation. This shows that intuitionistic non-normal modal logic allows for finer distinctions  whence a richer theory  than in the classical case.

The existence of a cut-free calculus for each of the logics  has some important consequences: We can prove that all systems are indeed distinct, that all of them are ``good'' extensions of intuitionistic logic, and more importantly that all of them are decidable.

We then tackle the problem of giving a semantic characterisation of this family of logics. The natural setting is to consider an intuitionistic version of neighbourhood models for classical logics. Since we want to deal with the language containing both $\Box$ and $\diam$, we consider neighbourhood models containing \emph{two} distinct neighbourhood functions $\nbox$ and $\ndiam$. 
As in standard intuitionistic models, they also contain a partial order on worlds. Different forms of interaction between the two modal operators correspond to different (but natural) conditions relating the two neighbourhood functions. 
By considering further closure conditions of neighbourhoods, analogous to the classical case,  we can show that this semantic characterises \emph{modularly} the full family of logics.  Moreover we prove, through a filtration argument, 
that most of the logics have the \emph{finite model property}, thereby obtaining a semantic proof of their decidability.

It is worth noticing that in the (easier) case of 
intuitionistic monomodal logic with only $\Box$ 
a similar semantics and a matching completeness theorem have been given by Goldblatt \cite{Goldblatt}.
More recently, Goldblatt's semantics for the intuitionistic version of 
system \E{} has been reformulated and extended to axiom \axT{} by Witczak \cite{Witczak2}.

But  our neighbourhood models have a wider application than the characterisation of the family of logics mentioned above. We show that adding suitable  \emph{interaction conditions} between $\nbox$ and $\ndiam$ we can capture \HW{} as well as \CK. We show this fact first directly by proving that both \HW{}  and \CK{}  are sound and complete with respect to our models satisfying  an additional condition.
We then prove the same result by relying on some pre-existing semantics of these two logics and by transforming models. In case of \HW, there exists already a characterisation of it in terms of neighbourhood models, given by Kojima \cite{Kojima}, although the type of models is different, in particular Kojima's models contain only one neighbourhood function. 

The case of \CK{} is more complicated, whence more interesting: this logic is  characterised  by a relational semantics defined in terms of  Kripke models of a \emph{peculiar} nature: they  contain ``fallible'' worlds, \emph{i.e.} worlds  which force $\bot$. We are able to show directly that relational models can be transformed into our neighbourhood models satisfying a specific interaction condition and \emph{vice versa}. 

All in all,  we get that the well-known \CK{} can be characterised by  neighbourhood models, after all rather standard structures, alternative to non-standard Kripke models with fallible worlds. This fact  provides further evidence in favour of our neighbourhood semantics as a versatile tool to analyse intuitionistic non-normal modal logics.

\section{Classical non-normal modal logics}\label{section classical}
\subsection{Hilbert systems}
Classical non-normal modal logics are defined on a propositional modal language $\lan$
based on a set $\atm$ of countably many propositional variables.
Formulas are given by the following grammar, where $p$ ranges over $\atm$:
\begin{center}
$A ::= p \mid \bot \mid A\land A \mid A\lor A \mid A\imp A \mid \Box A \mid \diam A$.
\end{center}

We use $A, B, C$ as metavariables for formulas of $\lan$.
$\top$, $\neg A$ and $A\imp\coimp B$ are abbreviations for, respectively, $\bot\imp\bot$, $A\imp\bot$ and $(A\imp B)\land(B\imp A)$.
We take both modal operators $\Box$ and $\diam$ as primitive
(as well as all boolean connectives), 
as it will be convenient for the intuitionistic case. 
Their duality in classical modal logics is recovered by adding to any system one of the duality axioms
\dualbox{} or \dualdiam{} (Figure \ref{modal axioms}),
which are equivalent in the classical setting.

\begin{figure}
\noindent
\fbox{\begin{minipage}{34em}

\vspace{0.1cm}

\begin{tabular}{l l l l l}

\multicolumn{5}{l}{ \vspace{0.2cm}\textbf{\emph{a.} Modal axioms and rules defining non-normal modal logics}} \\

\vspace{0.2cm}
\rebox 
&
\ax{$A\imp B$}\ax{$B \imp A$}
\binf{$\Box A\imp \Box B$}\disp 
&
&
\rediam 
&
\ax{$A\imp B$}\ax{$B\imp A$}
\binf{$\diam A\imp \diam B$}\disp \\

\vspace{0.2cm}
\axMbox & $\Box (A\land B) \imp \Box A$ && \axMdiam & $\diam A \imp \diam(A\lor B)$ \\

\vspace{0.2cm}
\axCbox & $\Box A \land \Box B \imp \Box (A\land B)$  && \axCdiam & $\diam(A\lor B) \imp \diam A \lor \diam B$\\

\vspace{0.5cm}
\axNbox  & $\Box\top$ && \axNdiam & $\neg\diam\bot$ \\

\multicolumn{5}{l}{\vspace{0.2cm} \textbf{\emph{b.} Duality axioms}} \\

\vspace{0.5cm}
\dualbox & $\diam A \imp\coimp \neg\Box\neg A$ && \dualdiam & $\Box A \imp\coimp \neg\diam\neg A$ \\

\multicolumn{5}{l}{\vspace{0.2cm}\textbf{\emph{c.} Further relevant modal axioms and rules}} \\

\vspace{0.2cm}
\axKbox & $\Box(A \imp B) \imp (\Box A \imp \Box B)$ &\qquad&
\axKdiam & $\Box(A \imp B) \imp (\diam A \imp \diam B)$ \\

\multicolumn{2}{l}{\rulenbox \ \ \ax{$A$} \uinf{$\Box A$} \disp
\qquad \ \rmbox \ \ax{$A\imp B$} \uinf{$\Box A \imp \Box B$} \disp} 
&& 
\rmdiam & \ax{$A\imp B$} \uinf{$\diam A \imp \diam B$} \disp \\
\end{tabular}

\end{minipage}}
\caption{\label{modal axioms} Modal axioms.}  
\end{figure}

The weakest classical non-normal modal logic \E{} is defined in language $\lan$
by extending classical propositional logic (\cl)  with a duality axiom and rule \rebox{},
and it can be extended further 
by adding any combination of axioms \axMbox, \axCbox{} and \axNbox.
We obtain in this way eight distinct systems (Figure \ref{classical cube}),
which compose the family of classical non-normal modal logics.

Equivalent axiomatisations for these systems 
are given by considering the modal axioms in the right-hand column of Figure \ref{modal axioms}($a$).
Thus, logic \E{} could be defined by extending \cl{} with axiom \dualbox{} and rule \rediam,
and its extensions are given by adding combinations of axioms \axMdiam, \axCdiam{} and \axNdiam.

It is worth recalling that axioms \axMbox, \axMdiam{} and \axNbox{}
are syntactically equivalent with the rules \rmbox, \rmdiam{} and \rulenbox, respectively,
and that axiom \axKbox{} is derivable from \axMbox{} and \axCbox.
As a consequence, we have that the top system \EMCN{} 
is equivalent to the weakest classical normal modal logic \K.

\begin{figure}
\qquad \hfill
\resizebox{4cm}{3cm}{
\begin{tikzpicture}
	\node (a) at  (0,0)  {\E};
    \node (b) at (0, 2.1) {\EM};
    \node  (c) at (-1.5, 0.7) {\EC};
    \node (d) at (2.3, 0.7) {\EN};
    \node (e) at (-1.5, 2.8) {\EMC};
    \node (f) at (2.3, 2.8) {\EMN};
    \node (g) at (0.8, 1.4) {\ECN};
    \node (h) at (0.8, 3.5) {\EMCN{} (\K)};

\draw (a) -- (b);
\draw (a) -- (c);
\draw (a) -- (d);
\draw (b) -- (e);
\draw (b) -- (f);
\draw (c) -- (e);
\draw [dashed] (c) -- (g);
\draw (d) -- (f);
\draw [dashed] (d) -- (g);
\draw (e) -- (h);
\draw (f) -- (h);
\draw [dashed] (g) -- (h);
\end{tikzpicture}
}
\hfill \qquad
\caption{\label{classical cube} The classical cube.}
\end{figure}
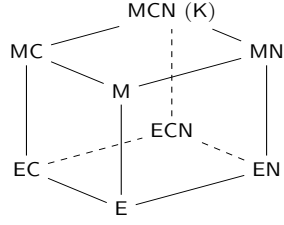

\subsection{Neighbourhood semantics}
The standard semantics for classical non-normal modal logics is based on the so-called 
neighbourhood (or minimal, or Scott-Montague) models.

\begin{definition}\label{classical neighbourhood models}
A \emph{neighbourhood model} is a triple $\M=\langle\W,\N,\V\rangle$,
where  $\W$ is a non-empty set,
$\N$ is a neighbourhood function $\W\longto\pow(\pow(\W))$,
and $\V$ is a  valuation function $\W\longto\atm$.
A neighbourhood model is supplemented, closed under intersection, or contains the unit,
if $\N$ satisfies the following properties:

\vspace{0.2cm}
\begin{tabular}{l l}
If $\alpha\in\N(w)$ and $\alpha\subseteq\beta$, then $\beta\in\N(w)$ & (Supplementation);\\

If $\alpha,\beta\in\N(w)$, then $\alpha\cap\beta\in\N(w)$ & (Closure under intersection);\\

$\W\in\N(w)$ \ for all $w\in \W$ & (Containing the unit).\\
\end{tabular}

\vspace{0.2cm}
\noindent
The forcing relation $w\Vd A$ is defined inductively as follows:

\vspace{0.2cm}
\begin{tabular}{l l l}
$w\Vd p$ & iff & $p\in\V(w)$; \\
$w\not\Vd \bot$; \\
$w\Vd B\land C$ & iff & $w\Vd A$ and $w\Vd B$; \\
$w\Vd B\lor C$ & iff & $w\Vd A$ or $w\Vd B$; \\
$w\Vd B \imp C$ & iff & $w \Vd B$ implies $w \Vd C$; \\
$w\Vd\Box B$ & iff & $[B]\in\N(w)$; \\
$w\Vd\diam B$ & iff & $\W\setminus[B]\notin\N(w)$; \\
\end{tabular}

\vspace{0.2cm}
\noindent
where $[B]$ denotes the set $\{v\in\W \mid v\Vd B\}$, 
called the \emph{truth set} of $B$.
\end{definition}

We can also recall that in the supplemented case, the forcing conditions for modal formulas are equivalent to the following ones:

\vspace{0.2cm}
\begin{tabular}{l l l}
$w\Vd\Box B$ & iff & there is $\alpha\in\N(w)$ s.t.~$\alpha\subseteq[B]$; \\
$w\Vd\diam B$ & iff & for all $\alpha\in\N(w)$, $\alpha\cap[B]\not=\emptyset$. \\
\end{tabular}

\begin{theorem}[Chellas \cite{Chellas}]
Logic \E(\axM,\axC,\axN) is sound and complete with respect to neighbourhood models
(which in addition are supplemented, closed under intersection and contain the unit).
\end{theorem}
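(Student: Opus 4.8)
The plan is to treat soundness by a routine validity check and completeness by a canonical-model construction whose neighbourhood function is tuned to the axioms present. For soundness I would verify that \rebox{} preserves validity --- which holds because the clause for $\Box B$ depends only on the truth set $[B]$, so $[A]=[B]$ in every model forces $\Box A$ and $\Box B$ to agree everywhere --- and that \axMbox, \axCbox, \axNbox{} are valid exactly on supplemented models, models closed under intersection, and models containing the unit, respectively, each by a one-line computation from the forcing clauses. The $\diam$ clauses are covered through the duality axiom present in every system.

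For completeness I would take the canonical model $\Mc=\langle\Wc,\Nc,\Vc\rangle$ with $\Wc$ the maximal consistent sets of the logic, $\Vc(w)=\{p:p\in w\}$, and proof sets $|A|=\{w\in\Wc : A\in w\}$. In the base case I set $\Nc(w)=\{|A| : \Box A\in w\}$, and when \axMbox{} is present I use the supplemented version $\Nc(w)=\{\alpha\ssq\Wc : |B|\ssq\alpha \text{ for some } \Box B\in w\}$. The core is the Truth Lemma $w\Vd A$ iff $A\in w$, by induction on $A$, whose only delicate case is the modal one. I would lean on the Lindenbaum-based fact that $|A|=|B|$ implies $\vd A\imp\coimp B$ (and $|B|\ssq|A|$ implies $\vd B\imp A$): then $|A|\in\Nc(w)$ yields some $\Box B\in w$ with $|B|=|A|$ (resp.\ $|B|\ssq|A|$), so $\vd A\imp\coimp B$ (resp.\ $\vd B\imp A$) and, via \rebox{} (resp.\ \rmbox{}), $\Box A\in w$; the converse inclusion is immediate. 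The $\diam$ case reduces to the $\Box$ case through duality, since $w\Vd\diam A$ iff $\Wc\setminus|A|\notin\Nc(w)$, i.e.\ $|\neg A|\notin\Nc(w)$, iff $\Box\neg A\notin w$ iff $\diam A\in w$.

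It then remains to confirm that $\Nc$ falls in the intended class, which is where the modularity is checked. Presence of \axNbox{} gives $\Box\top\in w$ and $|\top|=\Wc$, so the unit lies in $\Nc(w)$; presence of \axCbox{} gives, from $\Box A,\Box B\in w$, that $\Box(A\land B)\in w$ with $|A\land B|=|A|\cap|B|$, so $\Nc(w)$ is closed under intersection. Completeness of each system then follows in the usual way: if $\not\vd A$ then $\{\neg A\}$ is consistent, extends to some $w\in\Wc$, and the Truth Lemma gives $w\not\Vd A$, so $A$ is not valid on the class.

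The step I expect to require the most care is keeping the argument uniform over all eight combinations --- in particular verifying that the supplemented canonical function stays closed under intersection when both \axMbox{} and \axCbox{} hold, so that the single definition of $\Nc$ serves the whole upper half of the cube while the minimal definition covers the lower half. Concretely, for the supplemented $\Nc$ one checks that $\alpha,\beta\in\Nc(w)$ arise from $\Box A,\Box B\in w$ with $|A|\ssq\alpha$ and $|B|\ssq\beta$, whence $\Box(A\land B)\in w$ and $|A\land B|=|A|\cap|B|\ssq\alpha\cap\beta$. Once both forms of the Truth Lemma are in place, these closure verifications are short computations, and the eight results assemble into the single modular statement.
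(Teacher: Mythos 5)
Your proof is correct, but note that the paper does not actually prove this theorem: it is stated as a known background result and attributed to Chellas \cite{Chellas}, and your argument is precisely the standard proof from that source --- a minimal canonical neighbourhood function $\Nc(w)=\{|A| \mid \Box A\in w\}$ for the systems without \axMbox, its supplemented variant when \axMbox{} is present, a truth lemma resting on the fact that identity (resp.\ inclusion) of proof sets yields provable equivalence (resp.\ provable implication) and hence, via \rebox{} (resp.\ \rmbox), membership of the boxed formula, and the short modular closure checks for \axNbox{} and \axCbox{} exactly as you give them. Two remarks: first, your reduction of the $\diam$ clauses to the $\Box$ clauses is legitimate here only because the paper's formulation of classical \E{} includes the duality axiom \dualbox, so it is worth stating explicitly that \dualbox{} itself is valid on all neighbourhood models (immediate, since $[\neg A]=\W\setminus[A]$ classically); second, your two-pronged strategy is the same one the paper later adapts to its intuitionistic systems in Lemmas \ref{canonical model} and \ref{supplemented canonical model}, where prime sets replace maximal consistent sets and the duality shortcut is unavailable, forcing separate canonical neighbourhood functions for $\Box$ and $\diam$.
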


\section{Intuitionistic non-normal monomodal logics}\label{section monomodal}

Our definition of \intlogic s{} begins with
monomodal logics, that is logics containing only one modality, either $\Box$ or $\diam$.
We first define the axiomatic systems, and then present their sequent calculi.

Under ``intuitionistic modal logics'' we understand any modal logic \logic{} 
that
extends intuitionistic propositional logic (\il{})
and satisfies the following requirements:

\begin{itemize}
\item[(\reqone)]  
\logic{} is conservative over \il: its non-modal fragment coincides with \il.

\item[(\reqtwo)] 
\logic{} satisfies the disjunction property:
if $A\lor B$ is derivable, then at least one formula
between $A$ and $B$ is also derivable.
\end{itemize}

\subsection{Hilbert systems}

From the point of view of axiomatic systems, 
two different classes of \intmonologic s
can be defined by analogy with the definition of classical non-normal modal logics (cf. Section \ref{section classical}).
Intuitionistic modal logics are modal extensions of \il, for which we consider the following axiomatisation:

\vspace{0.2cm}
\noindent
\begin{tabular}{l l l l l}
$\imp$-$1$ & $A \imp (B \imp A)$ && $\land$-$1$ & $A\land B \imp A$ \\
$\imp$-$2$ & $(A \imp (B \imp C)) \imp ((A \imp B) \imp (A \imp C))$ && $\land$-$2$ & $A \land B \imp B$\\
$\lor$-$1$ & $A \imp A\lor B$ && $\land$-$3$ & $A \imp (B \imp A \land B)$ \\
$\lor$-$2$ & $B \imp A \lor B$ && \efq & $\bot \imp A$ \\
$\lor$-$3$ & $(A \imp C) \imp ((B \imp C) \imp (A \lor B \imp C))$ && \modusponens & \ax{$A$}\ax{$A\imp B$}\binf{$B$}\disp \\
\end{tabular}

 \vspace{0.2cm}
We define over \il{}  two families of \intmonologic s, 
that depend on the considered modal operator, and
are called therefore the $\Box$- and the $\diam$-family.
The $\Box$-family is defined in language $\lbox := \lan\setminus\{\diam\}$
by adding to \il{} the rule \rebox{} and any combination of axioms \axMbox, \axCbox{} and \axNbox.
The $\diam$-family is instead defined in language $\ldiam := \lan\setminus\{\Box\}$
by adding to \il{} the rule \rediam{} and any combination of axioms  \axMdiam{} and \axNdiam.
It is work remarking that 
we don't consider intuitionistic non-normal modal logics containing axiom \axCdiam.
We denote the resulting logics by, respectively,
\monomodalbox{} and \monomodaldiam, 
where \Estar{} replaces any system of the classical cube 
(for $\diam$-logics, any system non containing \axCdiam).

Notice that, having rejected the definability of the lacking modality,
$\Box$- and $\diam$-logics are distinct,
as $\Box$ and $\diam$ behave differently.
Moreover, as a consequence of the fact 
that the systems in the classical cube are pairwise non-equivalent, 
we have that the $\Box$-family contains eight distinct logics,
while the $\diam$-family contains four distinct logics
(something not derivable in a classical system is clearly
not derivable in the corresponding intuitionistic system).
It is also worth noticing that,  as it happens  in the classical case, 
axioms \axMbox, \axMdiam{} and \axNbox{} are interderivable, respectively, with rules
\rmbox, \rmdiam{} and \rulenbox, and that \axKbox{} is derivable from \axMbox{} and \axCbox{}
(as the standard derivations are intuitionistically valid).

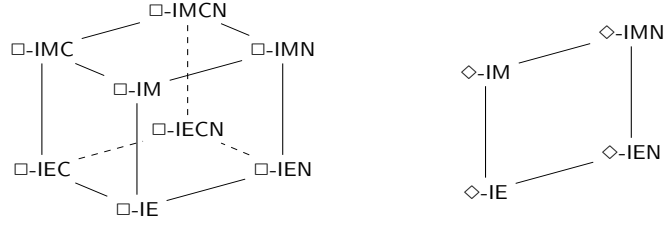
\begin{figure}
\hfill
\resizebox{9cm}{3cm}{
\begin{tikzpicture}
	\node (a) at  (0,0)  {\iboxE};
    \node (b) at (0, 2.1) {\iboxEM};
    \node  (c) at (-1.5, 0.7) {\iboxEC};
    \node (d) at (2.3, 0.7) {\iboxEN};
    \node (e) at (-1.5, 2.8) {\iboxEMC};
    \node (f) at (2.3, 2.8) {\iboxEMN};
    \node (g) at (0.8, 1.4) {\iboxECN};
    \node (h) at (0.8, 3.5) {\iboxEMCN};

\draw (a) -- (b);
\draw (a) -- (c);
\draw (a) -- (d);
\draw (b) -- (e);
\draw (b) -- (f);
\draw (c) -- (e);
\draw [dashed] (c) -- (g);
\draw (d) -- (f);
\draw [dashed] (d) -- (g);
\draw (e) -- (h);
\draw (f) -- (h);
\draw [dashed] (g) -- (h);

	\node (a2) at  (5.5,0.3)  {\idiamE};
    \node (b2) at (5.5, 2.4) {\idiamEM};
    \node (d2) at (7.8, 1) {\idiamEN};
    \node (f2) at (7.8, 3.1) {\idiamEMN};

\draw (a2) -- (b2);
\draw (a2) -- (d2);
\draw (b2) -- (f2);
\draw (d2) -- (f2);
\end{tikzpicture}
}
\hfill\qquad
\caption{\label{monomodal int cubes} The lattices of intuitionistic non-normal monomodal logics.}
\end{figure}

\subsection{Sequent calculi}

We now present sequent calculi for intuitionistic non-normal monomodal logics.
The calculi are defined as modal extensions of a given sequent calculus for \il. 
We take \gtrei{} as base calculus (Figure \ref{g3i}),
and extend it with suitable combinations of the modal rules in Figure \ref{basic modal sequent rules}.
The $\Box$-rules can be compared with the rules given 
in Lavendhomme and Lucas \cite{Lavendhomme},
where sequent calculi for classical non-normal modal logics are presented.
However, our rules are slightly different as 
(i) they have a single formula in the right-hand side of sequents;
and (ii) contexts are added to the left-hand side of sequents appearing in the conclusion.
Restriction (i) is adopted in order to have single-succedent calculi (as \gtrei{} is),
while with (ii) we implicitly embed weakening in the application of the modal rules.
We consider the sequent calculi to be defined 
by the modal rules that are added to \gtrei{}.
The calculi are the following.

\begin{figure}
\hfill
\fbox{\begin{minipage}{30em}
\vspace{0.1cm}
\begin{tabular}{p{5.5cm} p{5cm}} 

\inseq{} \ $p,\G\seq p$
&
\lbot{} \ $\bot,\G\seq A$ \\

\vspace{0.5cm} & \vspace{0.5cm}\\

\ax{$A, B, \G\seq C$}\llab{\lland}
\uinf{$A\land B, \G\seq C$}\disp 
&
 \ax{$\G\seq A$}\ax{$\G\seq B$}\llab{\rland}
\binf{$\G\seq A\land B$}\disp \\

\vspace{0.5cm} & \vspace{0.5cm}\\

\ax{$A,\G\seq C$}\ax{$B,\G\seq C$}\llab{\llor}
\binf{$A\lor B, \G\seq C$}\disp 
&
 \ax{$\G\seq A_i$}\llab{\rlor}\rlab{($i=0,1$)}
\uinf{$\G\seq A_0\lor A_1$}\disp \\

\vspace{0.5cm} & \vspace{0.5cm}\\

\ax{$A\imp B, \G\seq A$}\ax{$B, \G \seq C$}\llab{\limp}
\binf{$A\imp B, \G\seq C$}\disp 
&
\ax{$A, \G\seq B$}\llab{\rimp}
\uinf{$\G\seq A \imp B$}\disp \\
\end{tabular}
\end{minipage}}
\hfill \qquad
\caption{\label{g3i} Rules of \gtrei{} (Troelstra and Schwichtenberg \cite{Troelstra}).}
\end{figure}

\begin{figure}
\hfill
\fbox{\begin{minipage}{30em}
\vspace{0.1cm}
\begin{tabular}{p{5.5cm} p{5cm}}
\ax{$A\seq B$}\ax{$B\seq A$}\llab{\grebox}
\binf{$\G, \Box A\seq \Box B$}\disp  
&
\ax{$A\seq B$}\ax{$B\seq A$}\llab{\grediam}
\binf{$\G, \diam A\seq \diam B$}\disp  \\

\vspace{0.5cm} & \vspace{0.5cm}\\

\ax{$A\seq B$}\llab{\grmbox}
\uinf{$\G, \Box A\seq \Box B$}\disp  
&
\ax{$A\seq B$}\llab{\grmdiam}
\uinf{$\G, \diam A\seq \diam B$}\disp  \\

\vspace{0.5cm} & \vspace{0.5cm}\\

\ax{$\seq A$}\llab{\grulenbox}
\uinf{$\G\seq \Box A$}\disp 
&
\ax{$A\seq$}\llab{\grulendiam}
\uinf{$\G, \diam A\seq B$}\disp \\

\vspace{0.5cm} & \vspace{0.5cm}\\

\multicolumn{2}{l}{\ax{$A_1,...,A_n\seq B$ \quad $B\seq A_1$  ... $B\seq A_n$}\llab{\greboxc}\rlab{\ ($n\geq 1$)}
\uinf{$\G, \Box A_1, ..., \Box A_n\seq \Box B$}\disp} \\

\vspace{0.5cm} & \vspace{0.5cm}\\

\multicolumn{2}{l}{\ax{$A_1,...,A_n\seq B$}\llab{\grmboxc}\rlab{\ ($n\geq 1$)}
\uinf{$\G, \Box A_1, ..., \Box A_n\seq \Box B$}\disp} \\
\end{tabular}
\end{minipage}}
\hfill \qquad
\caption{\label{basic modal sequent rules} Modal rules for Gentzen calculi.}
\end{figure}

\vspace{0.2cm}
\begin{tabular}{l l l l l l l}
\giboxE{} & := & \grebox{} & \quad\quad & \giboxEC{} & := & \greboxc{} \\

\giboxEM{} & := & \grmbox{} & \quad & \giboxEMC{} & := & \grmboxc{} \\

\giboxEN{} & := & \grebox{} + \grulenbox & \quad & \giboxECN{} & := & \greboxc{} + \grulenbox \\

\giboxEMN{} & := & \grmbox{} + \grulenbox & \quad & \giboxEMCN{} & := & \grmboxc{} + \grulenbox \\

\gidiamE{} & := & \grediam{} \\

\gidiamEM{} & := & \grmdiam{} \\

\gidiamEN{} & := & \grediam{} + \grulendiam \\

\gidiamEMN{} & := & \grmdiam{} + \grulendiam \\
\end{tabular}

\vspace{0.2cm}
Notice that - as in Lavendhomme and Lucas \cite{Lavendhomme} - 
axiom \axCbox{} doesn't have a corresponding sequent rule,
but it is captured by modifying the rules \grebox{} and \grmbox.
In particular, these rules are replaced by \greboxc{} and \grmboxc{}, respectively, 
that are the generalisations of \grebox{} and \grmbox{} with $n$ principal formulas (instead of just one) in the left-hand side of sequents.
Observe that \greboxc{} and \grmboxc{} are non-standard, as they introduce an  
arbitrary number of modal formulas with a single application, 
and that \grebox{} has in addition an arbitrary number of premisses.
An other way to look at \greboxc{} and \grmboxc{} is to see them as infinite sets of rules, 
each set containing a standard rule for any $n\geq 1$.
Under the latter interpretation the calculi are anyway non-standard 
as they are defined by infinite sets of rules.

We now prove the admissibility of some structural rules,
and then show the equivalence between the sequent calculi and the Hilbert systems.

\begin{proposition}\label{admissibility wk ctr monomodal}
The following weakening and contraction rules are height-preserving admissible in any monomodal calculus:
\begin{center}
\ax{$\G \seq B$}
\llab{\lwk}
\uinf{$\G, A \seq B$}
\disp
\qquad
\ax{$\G \seq $}
\llab{\rwk}
\uinf{$\G \seq A$}
\disp
\qquad
\ax{$\G, A, A \seq B$}
\llab{\ctr}
\uinf{$\G, A \seq B$}
\disp.
\end{center}
\end{proposition}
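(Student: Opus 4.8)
The plan is to prove all three rules height-preserving admissible by induction on the height of the derivation of the premise, handling the two weakening rules first and contraction afterwards; for contraction I would invoke, as a standard auxiliary step, the height-preserving invertibility of the propositional rules of \gtrei. This gives the usual dependency chain weakening $\to$ invertibility $\to$ contraction.

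For \lwk{} and \rwk{} the argument is almost mechanical. The base cases are the axioms \inseq{} and \lbot: adding a formula on the left (for \lwk) or filling an empty succedent with an arbitrary formula (for \rwk) again yields an axiom. In the inductive step one inspects the last rule. For the propositional rules of \gtrei{} one applies the induction hypothesis to the premises and reapplies the rule; the only mild care is needed for \limp{} and \rimp, where the weakening formula must be propagated into the premises, but since \gtrei{} is context-sharing this causes no trouble. For every modal rule the situation is even simpler: the premises of \grebox, \grmbox, \grulenbox, \greboxc, \grmboxc{} (and their $\diam$-counterparts) do not mention the context $\G$ at all, so for \lwk{} I would merely enlarge $\G$, and for \rwk{} I would reinstantiate the succedent metavariable, reapplying the same rule to the same premises, so the height does not even increase.

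Contraction is where the real work lies. Proceeding again by induction on height, when neither occurrence of the contracted formula $A$ is principal in the last rule I apply the induction hypothesis to the premises and reapply the rule. When one occurrence is principal and $A$ is a propositional formula introduced by the matching left rule, I use height-preserving invertibility on the \emph{second} occurrence inside the premise(s) and then contract the exposed subformulas by the induction hypothesis: this is exactly the standard \gtrei{} argument. The genuinely new cases are the modal ones. If $A = \Box A'$ (or $\diam A'$) is principal in a modal rule while its second occurrence sits in the context $\G$, I simply reapply the rule after deleting one context copy, leaving the premises untouched, so the step stays height-preserving.

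The hard part will be the multi-principal rules \greboxc{} and \grmboxc, in the case where two \emph{principal} boxed formulas coincide, say $\Box A_i = \Box A_j$ with $i\neq j$. Here I would contract $A_i$ and $A_j$ inside the left premise $A_1,\dots,A_n \seq B$ by the induction hypothesis (this premise has strictly smaller height), discard the now-redundant right premise $B\seq A_j$ in the case of \greboxc, and reapply the rule with $n-1$ principal formulas. This is legitimate because two distinct positions coincided, so $n\geq 2$ and hence $n-1\geq 1$, respecting the side condition of the rule. I expect this reduction in the number of principal formulas, which is specific to the non-standard rules that introduce arbitrarily many modal formulas at once, to be the only delicate point of the whole proof.
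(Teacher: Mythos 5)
Your proposal is correct and takes essentially the same approach as the paper's proof: induction on the height of the derivation of the premiss, with weakening absorbed by the arbitrary conclusion context of the modal rules (and, for \rwk, by the free succedent of \grulendiam, the only modal rule whose conclusion can have an empty succedent), and with the crucial contraction case being two coinciding principal formulas in \greboxc{} or \grmboxc, which the paper resolves exactly as you do: contract the unboxed copies in the premiss by the induction hypothesis and reapply the rule with one fewer principal formula. The only cosmetic difference is that the paper dismisses the propositional cases as ``standard'' where you explicitly route them through height-preserving invertibility of the \gtrei{} rules.
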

\begin{proof}
By induction on $n$, we show that whenever the premiss of an application of \lwk{}, \rwk{} or \ctr{}
has a derivation of height $n$, then its conclusion has a derivation of the same height.
As usual, the proof considers the last rule applied in the derivation of the premiss
(when the premiss is not an initial sequent).
%If this is a rule of \gtrei{}, then the proof is as in Troelstra and Schwichtenberg \cite{Troelstra}. \nb{Togliere riferimento a Basic Proof Theory nella prova}
For rules of \gtrei{} the proof is standard.
For modal rules,
left and right weakening are easily handled.
For istance, the premiss $\G \seq$ of \rwk{} is necessarily derived by \grulendiam.
Then $\G$ contains a formula $\diam B$ that is 
principal in the application of \grulendiam{},
which in turn has $B\seq$ as premiss.
By a different application of \grulendiam{} to $B\seq$ 
we can derive $\G \seq A$ for any $A$.

The proof is also immediate for contraction,
where the most interesting case is possibly when both occurrences of $A$ in the premiss $\G, A, A \seq B$ of \ctr{}
are principal in the last rule applied
in its derivation.
In this case,
the last rule is either \greboxc{} or \grmboxc.
If it is \grmboxc{}, then $A\equiv \Box C$ for some $C$,
and the sequent is derived from
$D_1,...,D_n,C,C \seq$ for some $\Box D_1, ..., \Box D_n$ in $\G$.
By i.h. we can apply \ctr{} to the last sequent and obtain $D_1,...,D_n,C \seq$,
and then by \grmboxc{} derive sequent $\G, A \seq B$, which is the conclusion of \ctr{}  
(the proof is analogous for \greboxc).
\end{proof}

We now show that the  cut rule
\begin{center}
\ax{$\G \seq A$}
\ax{$\G, A \seq B$}
\llab{\cut}
\binf{$\G \seq B$}
\disp
\end{center}
is admissible in any monomodal calculus.
The proof is based on the following notion of weight of formulas:

\begin{definition}[Weight of formulas]\label{weight of formulas}
Function $\w$ assigning to each formula $A$ its weight $\w(A)$ is defined as follows:
$\w(\bot) = 0$;
$\w(p) = 1$;
$\w(A \circ B) = \w(A) + \w(B) + 1$ for $\circ \equiv \land, \lor, \imp$;
and
$\w(\Box A) = \w(\diam A) = \w(A) + 2$.
\end{definition}

Observe that,
given the present definition,    
$\neg A$ has a smaller weight than $\Box A$ and $\diam A$.
Although irrelevant to the next theorem, this will be 
used in Section \ref{section bimodal} for  the proof of cut elimination in bimodal calculi.

\begin{theorem}\label{cut elim monomodal}
Rule \cut{} is admissible in any monomodal calculus.
\end{theorem}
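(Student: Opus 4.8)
The plan is to prove cut admissibility by the standard Gentzen-style double induction, showing that every application of \cut{} can be eliminated. The primary induction is on the weight $\w(A)$ of the cut formula $A$ (Definition \ref{weight of formulas}), and the secondary, nested induction is on the sum of the heights of the derivations of the two premisses $\G\seq A$ and $\G,A\seq B$. The argument runs by cases on the last rules applied in the two premiss derivations. First I would dispose of the easy cases: when one premiss is an initial sequent (\inseq{} or \lbot{}), the cut is either trivially removed or the conclusion is itself an instance of an axiom; and when $A$ is not principal in at least one of the two final inferences, I permute the cut upward past that inference, reducing the height sum and invoking the secondary induction hypothesis. Here the height-preserving admissibility of weakening and contraction from Proposition \ref{admissibility wk ctr monomodal} is essential to realign the contexts after permutation.

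The substantive part is the collection of \emph{principal cases}, where the cut formula $A$ is principal in both premisses. For the propositional connectives these are the familiar \gtrei{} reductions, which replace a cut on $A\circ B$ by one or two cuts on the strictly smaller subformulas, legitimate by the primary induction on weight. The genuinely new work concerns a modal cut formula $A\equiv\Box C$ or $A\equiv\diam C$ introduced by the modal rules of Figure \ref{basic modal sequent rules}. Since each calculus is monomodal, in any single derivation only one family of modal rules can introduce $A$ principally, so I treat the $\Box$-calculi and the $\diam$-calculi separately and, within each, go through the rule variants (\grebox/\grmbox, \greboxc/\grmboxc, \grulenbox, and \grediam/\grmdiam, \grulendiam).

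The key observation making the modal cases go through is that these rules introduce $\Box C$ (resp. $\diam C$) on the right from premisses about $C$ itself, while on the left they carry the same modal formula; cutting them together yields a cut on a premiss whose modal nesting is reduced, i.e. on a formula of strictly smaller weight (recall $\w(\Box C)=\w(\diam C)=\w(C)+2$). Concretely, for the \grmboxc{}/\grmboxc{} principal case, the right premiss derives $\G,\Box A_1,\dots,\Box A_n\seq\Box C$ from $A_1,\dots,A_n\seq C$, and the left context supplies a derivation that when combined yields, after applying the induction hypothesis to cut on the smaller formula $C$, a derivation of the desired conclusion; the \greboxc{} variant is analogous, with its extra right premisses $C\seq A_i$ handled symmetrically. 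The main obstacle I anticipate is precisely the nonstandard shape of \greboxc{} and \grmboxc{}, which introduce an arbitrary number $n$ of boxed formulas (and, for \greboxc{}, an arbitrary number of premisses) in a single inference: one must ensure the reduction correctly matches up all $n$ principal formulas and their side premisses, and that contraction (again from Proposition \ref{admissibility wk ctr monomodal}) is available to merge duplicated contexts when the left premiss is also derived by a multi-formula modal rule. Once these modal reductions are checked to strictly decrease the weight of the cut formula, the double induction closes and \cut{} is admissible in every monomodal calculus.
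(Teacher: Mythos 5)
Your proposal is correct and follows essentially the same route as the paper: a double induction with primary induction on the weight of the cut formula (Definition \ref{weight of formulas}) and subinduction on cut height, dispatching the non-principal cases by standard permutation (using Proposition \ref{admissibility wk ctr monomodal}) and reducing each principal modal case --- including the multi-formula rules \greboxc{} and \grmboxc{} and the rules \grulenbox{}, \grulendiam{} --- to cuts on strictly smaller-weight formulas, exactly as in the paper's case analysis. The only blemish is cosmetic: in your description of the \grmboxc{}/\grmboxc{} case you call the premiss concluding $\G,\Box A_1,\dots,\Box A_n\seq\Box C$ the ``right'' premiss of the cut when it is the left one, but the reduction you describe is the intended one.
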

\begin{proof}
Given a derivation of a sequent with some applications of \cut{}, 
we show how to remove any such application and obtain 
a derivation of the same sequent without \cut.
The proof is by double induction,
with primary induction on the weight of the cut formula and subinduction on the cut height. 
We recall that, for any application of \cut,
the cut formula is the formula which is deleted by that application, while
the cut height is the sum of the heights of the derivations of the premisses of \cut.

We just consider the cases in which the cut formula is principal in the last rule applied in the derivation of both premisses of \cut.
Moreover, we treat explicitly only 
the cases in which both premisses are derived by modal rules,
as the non-modal cases are already considered in the proof of cut admissibility for \gtrei,
and because modal and non-modal rules don't interact in any relevant way.

$\bullet$
(\greboxc{}; \greboxc). \ Let $\G_1 = A_1, ..., A_n$ and $\G_2 = C_1, ..., C_m$. 
We have the following situation:

\begin{center}
\ax{$\G_1 \seq B$ \quad $B \seq A_1$ \ ... \ $B \seq A_n$}
\llab{\greboxc}
\uinf{$\G, \Box \G_1, \Box \G_2 \seq \Box B$}
\ax{$B, \G_2 \seq D$ \quad $D \seq B$ \quad $D \seq C_1$ \ ... \ $D \seq C_m$}
\rlab{\greboxc}
\uinf{$\G, \Box B, \Box\G_1,  \Box\G_2 \seq \Box D$}
\rlab{\cut}
\binf{$\G, \Box\G_1, \Box\G_2 \seq \Box D$}
\disp
\end{center}
The proof is converted as follows,
with several applications of \cut{} with $B$ as cut formula,
hence with a cut formula of smaller weight.
First we derive

\begin{center}
\ax{$\G_1 \seq B$}
\llab{\wk}
\uinf{$\G_1, \G_2 \seq B$}
\ax{$B, \G_2 \seq D$}
\rlab{\wk}
\uinf{$B, \G_1, \G_2 \seq D$}
\rlab{\cut}
\binf{$\G, \G_1, \G_2 \seq D$}
\disp
\end{center}
Then for any $1\leq i \leq n$, we derive

\begin{center}
\ax{$D \seq B$}
\ax{$B \seq A_i$}
\rlab{\wk}
\uinf{$B, D \seq A_i$}
\rlab{\cut}
\binf{$D \seq A_i$}
\disp
\end{center}
Finally we can apply \greboxc{} as follows

\begin{center}
\ax{$\G_1, \G_2 \seq D$ \quad $D \seq A_1$ \ ... \ $D \seq A_n$ \quad $D \seq C_1$ \ ... \ $D \seq C_m$}
\rlab{\greboxc}
\uinf{$\G, \Box\G_1,  \Box\G_2 \seq \Box D$}
\disp
\end{center}

$\bullet$
(\grmboxc{}; \grmboxc{}) is analogous to (\greboxc{}; \greboxc). 
(\grebox{}; \grebox{}) and (\grmbox{}; \grmbox{}) are the particular cases where $n, m = 1$.

$\bullet$
(\grulenbox{}; \greboxc). \ Let $\G_1 = B_1, ..., B_n$. The situation is as follows:

\begin{center}
\ax{$\seq A$}
\llab{\grulenbox}
\uinf{$\G, \Box \G_1 \seq \Box A$}
\ax{$A, \G_1 \seq C$ \quad $C \seq A$ \quad $C \seq B_1$ \ ... \ $C \seq B_n$}
\rlab{\greboxc}
\uinf{$\G, \Box A, \Box\G_1 \seq \Box C$}
\rlab{\cut}
\binf{$\G, \Box\G_1 \seq \Box C$}
\disp
\end{center}
The proof is converted as follows,
with an application of \cut{} on a cut formula of smaller weight.

\begin{center}
\ax{$\seq A$}
\llab{\wk}
\uinf{$\G_1 \seq A$}
\ax{$A, \G_1 \seq C$}
\llab{\cut}
\binf{$\G_1 \seq C$}
\ax{$C \seq B_1$ \ ... \ $C \seq B_n$}
\rlab{\greboxc}
\binf{$\G, \Box\G_1 \seq \Box C$}
\disp
\end{center}

$\bullet$
(\grulenbox{}; \grmboxc{}) is analogous to (\grulenbox{}; \greboxc{}). 
(\grulenbox{}; \grebox{}) and (\grulenbox{}; \grmbox{}) are the particular cases where $n = 1$.

$\bullet$ (\grediam{}; \grediam{}) and (\grmdiam{}; \grmdiam{}) 
are analogous to (\grebox{}; \grebox{}) and (\grmbox{}; \grmbox{}), respectively.

$\bullet$ (\grediam{}; \grulendiam). \ We have

\begin{center}
\ax{$A \seq B$}
\ax{$B \seq A$}
\llab{\grediam}
\binf{$\G, \diam A \seq \diam B$}
\ax{$B \seq$}
\rlab{\grulendiam}
\uinf{$\G, \diam A, \diam B \seq C$}
\rlab{\cut}
\binf{$\G, \diam A \seq C$}
\disp
\end{center}

which become

\begin{center}
\ax{$A \seq B$}
\ax{$B \seq$}
\rlab{\wk}
\uinf{$A, B \seq$}
\rlab{\cut}
\binf{$A \seq$}
\rlab{\grulendiam}
\uinf{$\G, \diam A \seq C$}
\disp
\end{center}

$\bullet$ (\grmdiam{}; \grulendiam) is analogous to (\grediam{}; \grulendiam).
\end{proof}

As a consequence of the admissibility of \cut{}
we obtain the equivalence between the sequent calculi and the axiomatic systems.

\begin{proposition}\label{equiv monomodal}
Let \X{} be any intuitionistic non-normal monomodal logic. 
Then calculus \gX{} is equivalent to system \X.
\end{proposition}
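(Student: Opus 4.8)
The plan is to prove the two inclusions separately: every theorem of the Hilbert system \X{} is derivable in the sequent calculus \gX, and conversely every sequent derivable in \gX{} yields a corresponding theorem of \X. The standard route is to relate Hilbert-derivability of a formula $A$ with derivability of the sequent ${\seq A}$ (more generally, to interpret a sequent $\G \seq B$ as the formula $\bigwedge\G \imp B$). Throughout, the admissibility of \cut{} (Theorem~\ref{cut elim monomodal}) and of weakening and contraction (Proposition~\ref{admissibility wk ctr monomodal}) will be the workhorses.

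For the direction from \X{} to \gX, I would argue by induction on the length of a Hilbert derivation. First I would check that each propositional axiom schema ($\imp$-$1$, $\imp$-$2$, the $\land$- and $\lor$-axioms, \efq) has its corresponding sequent ${\seq A}$ derivable in \gtrei; this is the classical soundness of \gtrei{} for \il{} and is routine. The rule \modusponens{} is simulated using \cut: from ${\seq A}$ and ${\seq A\imp B}$ one applies \limp{} and \cut{} to obtain ${\seq B}$, which is where cut-admissibility guarantees we stay inside the cut-free system. The genuinely modal part is to derive each modal axiom/rule. For instance \axMbox{} (equivalently \rmbox) is matched by \grmbox, \axNbox{} by \grulenbox, the congruence rule \rebox{} by \grebox, and crucially \axCbox{} must be recovered from the $n$-premiss rules \greboxc{}/\grmboxc; here one instantiates the generalised rule with $n=2$ and uses the congruence/identity derivations ${B\seq B}$ to produce $\Box A \land \Box B \seq \Box(A\land B)$. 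The $\diam$-side is symmetric, using \grediam, \grmdiam{} and \grulendiam.

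For the converse direction, from \gX{} to \X, I would show by induction on the height of a \gX-derivation that if $\G \seq B$ is derivable then $\bigwedge\G \imp B$ (read as $B$ when $\G$ is empty) is a theorem of \X. The initial sequents and all \gtrei{} rules are handled exactly as in the classical completeness proof for \il, translating each rule into a propositional deduction. The modal rules require, for each of them, an admissible ``lifting'': given that the premisses translate to \X-theorems, one must produce the translation of the conclusion. The congruence rule \grebox{} translates using \rebox{} together with the fact that $A\seq B$ and $B\seq A$ yield $A\imp B$ and $B\imp A$ by i.h.; the added context $\G$ is absorbed by $\imp$-$1$ (weakening on the antecedent). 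The rule \greboxc{} with its $n$ principal formulas is translated by combining \axCbox{} (to contract $\Box A_1 \land \cdots \land \Box A_n$ into $\Box(A_1\land\cdots\land A_n)$) with \rebox, and \grulendiam{} uses \rediam{} together with \efq-style reasoning that $\diam A$ with $A\seq$ forces any conclusion.

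The main obstacle, and the step deserving the most care, is the faithful simulation of the non-standard generalised rules \greboxc{} and \grmboxc{} in both directions, since these bundle \axCbox{} together with congruence/monotonicity and an arbitrary arity $n$. In the \gX-to-\X{} direction one must verify that the iterated use of \axCbox{} correctly aggregates the $n$ boxed antecedents, and in the \X-to-\gX{} direction one must check that the single application of \greboxc{}/\grmboxc{} genuinely captures the combined force of \axCbox{} plus \rebox/\rmbox; the arbitrary number of premisses in \greboxc{} means the induction must be set up to handle all $n\geq 1$ uniformly. Everything else is a routine but lengthy bookkeeping over the propositional rules, which I would only sketch rather than carry out in full.
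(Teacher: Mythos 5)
Your proposal is correct and takes essentially the same approach as the paper: both show each Hilbert axiom and rule derivable in \gX{} (simulating \modusponens{} via \cut{} and recovering \axCbox{} by instantiating \greboxc{} with $n=2$), and conversely show each modal sequent rule derivable in \X{} under the reading of $\G\seq B$ as $\bigwedge\G\imp B$, aggregating the $n$ boxed antecedents of \greboxc{}/\grmboxc{} by iterated applications of \axCbox. The paper is merely terser, leaving the inductive translation over the \gtrei{} rules implicit where you spell it out.
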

\begin{proof}
The axioms and rules of \X{} are derivable in \gX. 
For the axioms of \il{} and \modusponens{} we can consider their derivations in \gtrei,
as \gX{} enjoys admissibility of \cut.
Here we show that any modal rule allows us to derive the corresponding axiom:

\vspace{0.5cm}
\noindent
\ax{$\seq A \imp B$}
\llab{\wk}
\uinf{$A \seq A \imp B$}
\ax{$A, A\imp B \seq B$}
\llab{\cut}
\binf{$A \seq  B$}
\ax{$ \seq B\imp A$}
\rlab{\wk}
\uinf{$B \seq B\imp A$}
\ax{$B, B \imp A \seq A$}
\rlab{\cut}
\binf{$B \seq A$}
\rlab{\grebox}
\binf{$\Box A \seq \Box B$}
\rlab{\rimp}
\uinf{$\seq \Box A \imp \Box B$}
\disp

\vspace{0.5cm}
\ax{$A, B \seq A$}
\ax{$A, B \seq B$}
\llab{\rland}
\binf{$A, B \seq A \land B$}
\ax{$A, B \seq A$}
\rlab{\lland}
\uinf{$A \land B \seq A$}
\ax{$A, B \seq B$}
\rlab{\lland}
\uinf{$A \land B \seq B$}
\rlab{\greboxc}
\TrinaryInfC{$\Box A, \Box B \seq \Box(A \land B)$}
\rlab{\lland}
\uinf{$\Box A \land \Box B \seq \Box(A \land B)$}
\rlab{\rimp}
\uinf{$\seq \Box A \land \Box B \imp \Box(A \land B)$}
\disp

%\noindent
\ax{$\seq \top$}
\rlab{\grulenbox}
\uinf{$\seq \Box \top$}
\disp
\quad

%\vspace{1cm}
\noindent
\ax{$\bot \seq$}
\rlab{\grulendiam}
\uinf{$\diam\bot \seq$}
\rlab{\rneg}
\uinf{$\seq \neg \diam\bot$}
\disp
%\quad
\hfill
\ax{$A, B \seq A$}
\rlab{\lland}
\uinf{$A \land B \seq A$}
\rlab{\grmbox}
\uinf{$\Box (A \land B) \seq \Box A$}
\rlab{\rimp}
\uinf{$\seq \Box (A \land B) \imp \Box A$}
\disp
%\quad
\hfill
\ax{$A \seq A$}
\rlab{\rlor}
\uinf{$A \seq A \lor B$}
\rlab{\grmdiam}
\uinf{$\diam A \seq \diam(A \lor B)$}
\rlab{\rimp}
\uinf{$\seq \diam A \imp \diam(A \lor B)$}
\disp

\vspace{0.5cm}
\noindent
Moreover, the rules of \gX{} are derivable in \X.
As before, it suffices to consider the modal rules.
The derivations are in most cases straightforward, we just consider the following.

$\bullet$ \ If \X{} contains \axNbox, then \grulenbox{} is derivable.
Assume $\vd_\X A$. Then by \rulenbox{} (which is equivalent to \axNbox), $\vd_\X \Box A$.

$\bullet$ \ If \X{} contains \axNdiam, then \grulendiam{} is derivable.
Assume $\vd_\X A \imp \bot$. Since $\vd_\X \bot \imp A$, by \grediam, $\vd_\X \diam A \imp \diam \bot$.
Then $\vd_\X \neg\diam\bot \imp \neg\diam A$, and, since $\vd_\X \neg\diam\bot$,
we have $\vd_\X \neg\diam A$.

$\bullet$ \ If \X{} contains \axCbox, then \greboxc{} is derivable.
Assume $\vd_\X A_1\land ... \land A_n \imp B$ and $\vd_\X B\imp A_i$ for all $1\leq i\leq n$.
Then $\vd_\X B\imp  A_1\land ... \land A_n$.
By \rebox, $\vd_\X  \Box (A_1\land ... \land A_n) \imp \Box B$.
In addition, by several applications of \axCbox, $\vd_\X  \Box A_1\land ... \land \Box A_n \imp \Box (A_1\land ... \land A_n)$.
Therefore $\vd_\X  \Box A_1\land ... \land \Box A_n \imp \Box B$.
\end{proof}

\section{Intuitionistic non-normal bimodal logics}\label{section bimodal}

In this section we present intuitionistic non-normal modal logics with both %modalities 
$\Box$ and $\diam$.
In this case we first present their sequent calculi, 
and then give equivalent axiomatisations.

A simple way to define \intbilogic s would be by considering the fusion 
of two monomodal logics that belong respectively to the $\Box$- and to the $\diam$-family.
Given two logics \monomodalbox{} and \monomodaldiam{},
their fusion in language $\lbox \cup \ldiam$ is 
 the smallest bimodal logics containing 
\monomodalbox{} and \monomodaldiam{} 
 (for the sake of simplicity we can assume that
$\mathcal L_\Box$ and $\mathcal L_\diam$ share the same set of propositional variables, and
differ only with respect to $\Box$ and $\diam$).
The resulting logic is axiomatised simply by adding to \il{} the modal axioms and rules of 
\monomodalbox{}, plus the modal axioms and rules of \monomodaldiam{}.

It is clear, however, that in the resulting systems
the modalities don't interact at all, as there is no axiom involving both $\Box$ and $\diam$.
On the contrary, finding suitable interactions between the modalities is often the main issue
when intuitionistic bimodal logics are concerned.
In that case, by reflecting the fact that in \il{} connectives are not interderivable,
it is usually required that $\Box$ and $\diam$ are not dual.
We take the lacking of duality as an  additional requirement
for the definition of \intbilogic s:

\begin{itemize}
\item[(\reqthree)] $\Box$ and $\diam$ are not interdefinable.
\end{itemize}

In order to define \intbilogic s by the axiomatic systems, we would need 
to select the axioms between a plethora of possible formulas satisfying (\reqthree).
If we look for instance at the literature on intuitionistic normal modal logics,
we see that many different axioms have been considered,
and the reasons for the specific choices are varied.
We take therefore a different way, 
and define the logics starting with their sequent calculi.
In particular we proceed as follows.

\begin{figure}
\hfill 
\fbox{\begin{minipage}{29em}
\vspace{0.1cm}
\begin{tabular}{p{5cm} p{5cm}} 
\ax{$\seq A$}\ax{$B\seq $}\llab{\gintunoa}
\binf{$\G, \Box A, \diam B \seq C$}\disp 
&
\ax{$A \seq$}\ax{$\seq B$}\llab{\gintunob}
\binf{$\G, \Box A, \diam B \seq C$}\disp  
\\

\vspace{0.5cm} & \vspace{0.5cm}\\

\ax{$A,B\seq$}\ax{$\neg A\seq B$}\llab{\gintduea}
\binf{$\G, \Box A, \diam B \seq C$}\disp 
& 
\ax{$A,B\seq$}\ax{$\neg B\seq A$}\llab{\gintdueb}
\binf{$\G, \Box A, \diam B \seq C$}\disp \\

\vspace{0.5cm} & \vspace{0.5cm}\\

\multicolumn{2}{c}
{\ax{$A, B\seq$}\llab{\ginttre}
\uinf{$\G, \Box A, \diam B \seq C$}\disp} \\
\end{tabular}
\end{minipage}}
\hfill \qquad
\caption{\label{basic interaction rules} Interaction rules for sequent calculi.}
\end{figure}

\begin{itemize}
\item[(i)] 
\Intbilogic s are defined by their sequent calculi.
The calculi are conservative extensions of a given calculus for \il,
and have as modal rules some characteristic rules of \intmonologic s,
plus some rules connecting $\Box$ and $\diam$. 
In addition, we require that the \cut{} rule is admissible.
As usual, this means that adding rule \cut{} to the calculus does not extend the set of derivable sequents.

\item[(ii)] 
To the purpose of defining the basic systems,
we consider only interactions between $\Box$ and $\diam$
that can be seen as forms of ``weak duality principles''.
In order to satisfy (\reqthree), we require that these interactions are strictly weaker than \dualbox{} and \dualdiam{},
in the sense that \dualbox{} and \dualdiam{} must  not be derivable in any corresponding system.

\item[(iii)] 
We will distinguish logics that are monotonic and logics that are non-monotonic.
Moreover, the logics will be distinguished by the different strength of interactions between the modalities.
\end{itemize}

The above points are realised in practice as follows.
As before, we take \gtrei{} (Figure \ref{g3i}) as base calculus for intuitionistic logics.
This is extended with combinations of the characteristic rules of \intmonologic s in Figure \ref{basic modal sequent rules}.
The difference is that now the calculi contain both some rules for $\Box$ and some rules for $\diam$.
In order to distinguish monotonic and non-monotonic logics,
we require that the calculi contain either both \grebox{} and  \grediam{}
(in this case the corresponding logic will be non-monotonic), 
or both \grmbox{} and \grmdiam{} (corresponding to monotonic logics).
In addition, the calculi will contain some of the  interaction rules in Figure \ref{basic interaction rules}.
Since the logics are also distinguished according to the different strenghts of the interactions between the modalities, 
we require that the calculi contain either both \gintunoa{} and  \gintunob, or both \gintduea{} and \gintdueb, or \ginttre.

In the following we present the sequent calculi for \intbilogic s
obtained by following our methodology. 
After that, for each sequent calculus we present an equivalent axiomatisation.

\subsection{Sequent calculi}

In the first part, we focus on sequent calculi for logics containing only axioms between \axMbox, \axMdiam, \axNbox{} and \axNdiam{}
(that is, we don't consider axiom \axCbox{}).
The calculi are obtained by adding to \gtrei{} (Figure \ref{g3i})
suitable combinations of the modal rules in Figures \ref{basic modal sequent rules} and \ref{basic interaction rules}.
Although in principle any combination of rules could define a calculus, 
we accept only those calculi that satisfy the restrictions explained above.
This entails in particular the need of studying cut elimination.
As usual, the first step to do towards the study of cut elimination is to prove the admissibility of the other structural rules.

\begin{proposition}
Weakening and contraction are height-preserving admissible
in any sequent calculus defined by a combination of modal rules
in Figures \ref{basic modal sequent rules} and \ref{basic interaction rules}
 that satisfies the restrictions explained above.
\end{proposition}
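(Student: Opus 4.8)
The plan is to prove the statement by induction on the height of the derivation of the premiss, exactly as in Proposition \ref{admissibility wk ctr monomodal}, but now accounting for the interaction rules of Figure \ref{basic interaction rules}. First I would observe that weakening (both \lwk{} and \rwk) is straightforward: the interaction rules \gintunoa, \gintunob, \gintduea, \gintdueb{} and \ginttre{} all add an arbitrary context $\G$ to the left-hand side and carry an arbitrary principal formula $C$ on the right, so extra left-context or a fresh succedent can be accommodated by reapplying the very same rule with the enlarged context. The only point needing care is \rwk{} applied to a premiss of the form $\G\seq$ with empty succedent; as in the monomodal case such a sequent must be derived by a rule that discharges the succedent, namely \grulendiam, and one simply reapplies that rule to obtain $\G\seq A$ for arbitrary $A$. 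The interaction rules themselves have a genuine conclusion formula $C$, so they never produce an empty succedent and cause no difficulty here.

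The substantive part is height-preserving admissibility of \ctr. I would again argue by induction on derivation height, inspecting the last rule applied to the premiss $\G, A, A\seq B$. For rules of \gtrei{} and for the purely monomodal rules the argument is already covered by the proof of Proposition \ref{admissibility wk ctr monomodal}. The new cases are those in which the last rule is an interaction rule and one (or both) of the contracted occurrences of $A$ is principal in it. The critical point is that the principal formulas of every interaction rule lie in the context $\G$ together with the explicit $\Box A, \diam B$, so when a contracted formula is principal it must be one of these modal formulas, and contraction on it simply amounts to choosing a rule instance with one fewer copy of that modal formula in $\G$.

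The main obstacle, and the case I would treat explicitly, is when both occurrences of the contracted formula are principal. Since each interaction rule has exactly one $\Box$-formula and one $\diam$-formula as its distinguished principal formulas (plus the generic context), the situation where both contracted copies are principal can only arise through the context $\G$, i.e. when $A$ is a modal formula appearing among the $\G$-formulas that overlap with the distinguished $\Box A'$ or $\diam B'$. In each such case I would rewrite the derivation so that the single surviving copy of $A$ plays the role previously played by two copies: because the premisses of the interaction rules (for instance $A',B'\seq$ for \ginttre, or $\seq A'$ and $B'\seq$ for \gintunoa) do not themselves mention the context $\G$, the premisses are literally unchanged, and one merely reapplies the same interaction rule to a context with one fewer occurrence of $A$. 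Thus no appeal to the induction hypothesis on the premisses is even needed in the doubly-principal interaction cases; the contraction is absorbed directly into the context of the rule.

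Finally I would note that the argument is uniform across all admissible combinations of rules: since the restrictions guarantee that a calculus contains either both \grebox{} and \grediam{} or both \grmbox{} and \grmdiam, together with a matched pair of interaction rules (or \ginttre), every case is an instance of one of the schemas above, and the induction goes through for each permitted calculus simultaneously. The expected difficulty is purely bookkeeping: keeping track of which occurrences of a contracted modal formula are principal versus contextual in the interaction rules, and checking that in every configuration the shared context $\G$ allows the contracted conclusion to be rederived at the same height.
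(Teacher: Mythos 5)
Your proposal is correct and takes essentially the same approach as the paper, whose proof consists precisely of the observation that one extends Proposition \ref{admissibility wk ctr monomodal} by examining the interaction rules, and that their form (premisses mentioning neither the context $\G$ nor the succedent) makes it immediate that whenever the premiss of \wk{} or \ctr{} is derived by an interaction rule, the conclusion is derivable by the same rule with the same premisses, hence at the same height. The one slip is your claim that the interaction rules never conclude a sequent with empty succedent---the paper itself applies them that way (e.g.\ \gintunoa{} deriving $\Box\top,\diam\bot\seq$ in the equivalence proofs)---but this is harmless, since by your own observation the succedent of these rules is arbitrary, so the same rule instance directly yields $\G\seq A$ in the \rwk{} case.
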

\begin{proof}
By extending the proof of Proposition \ref{admissibility wk ctr monomodal}
with the examination of the interaction rules in Figure \ref{basic interaction rules}.
Due to their form, however, it is immediate to verify that if the premiss of \wk{} or \ctr{} is derivable by any interaction rule,
then the conclusion is derivable by the same rule.
\end{proof}

We can now examine the admissibility of \cut.
As it is stated by the following theorem, following our methodology we obtain 12 sequent calculi for 
intuitionistic non-normal bimodal logics.

\begin{theorem}\label{cut elimination without C}
We let the sequent calculi be defined by the set of modal rules 
which are added to \gtrei. The \cut{} rule is admissible in the following calculi:

\vspace{0.2cm}
\begin{tabular}{l l l}
\gunoE{} & := & \grebox{} + \grediam{} + \gintunoa{} + \gintunob \\

\gdueE{} & := & \grebox{} + \grediam{} + \gintduea{} + \gintdueb \\

\gtreE{} & := & \grebox{} + \grediam{} + \ginttre \\

\gunoM{} & := & \grmbox{} + \grmdiam{} + \ginttre \\
\end{tabular}

\vspace{0.2cm}
\noindent
Moreover, letting \g{} be any of the previous calculi, \cut{} is admissible in 

\vspace{0.2cm}
\begin{tabular}{l l l}
\g\genzndiam{} & := & \g{} + \grulendiam \\

\g\genznbox{} & := & \g{} + \grulendiam{} + \grulenbox \\
\end{tabular}
\end{theorem}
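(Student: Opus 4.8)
The plan is to extend the double induction of Theorem~\ref{cut elim monomodal}, with primary induction on the weight $\w$ of the cut formula and secondary induction on the cut height. The reductions for the rules of \gtrei{} and for the configurations involving only $\Box$-rules or only $\diam$-rules (\grebox, \grmbox, \grediam, \grmdiam, \grulenbox, \grulendiam) are the same as those already verified for the monomodal calculi, so I would only have to treat the new configurations in which one premiss of \cut{} is the conclusion of an interaction rule from Figure~\ref{basic interaction rules}.

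First I would clear away every case in which the cut formula is not principal in both premisses. The decisive observation is that each interaction rule carries a context $\G$ and a succedent $C$ that are both arbitrary and absent from its premisses. Consequently, if the left premiss of \cut{} ends in an interaction rule the cut formula is exactly its arbitrary succedent, and if the cut formula is a side formula of an interaction rule closing the right premiss it lies in the arbitrary context; in either case one re-applies the same interaction rule to its unchanged premisses, adjusting the succedent or context, and the cut disappears. All other non-principal cases permute the cut upwards as usual and lower the cut height.

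The real content is in the principal cases, where the cut formula is a modal formula $\Box D$ (respectively $\diam D$) introduced on the right by \grebox, \grmbox{} or \grulenbox{} (respectively \grediam{} or \grmdiam) in the left premiss and consumed on the left by an interaction rule in the right premiss. For \ginttre{} and for the weak-duality rules \gintunoa, \gintunob{} the pattern mimics the monomodal one: one cuts on the immediate subformula $D$, of strictly smaller weight, between the matching premisses supplied by the two rules and then re-applies the interaction rule. For instance, in the case (\grebox; \ginttre) on $\Box D$ one cuts $A \seq D$ against $D, B \seq$ to get $A, B \seq$ and closes by \ginttre. The monotonic calculus \gunoM{} uses the single-premiss \grmbox, \grmdiam, but since these are paired only with \ginttre{} --- which needs just one of the two directions --- this is harmless.

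The hard part will be the rules \gintduea{} and \gintdueb, whose premisses mention $\neg A$ and $\neg B$. In, say, (\grebox; \gintduea) on $\Box D$ one must manufacture the premiss $\neg E \seq B$ of a fresh \gintduea-application out of the available $\neg D \seq B$ and the second \grebox-premiss $D \seq E$. The key move is to first derive the contraposition $\neg E \seq \neg D$ from $D \seq E$ (without cut, using \rimp, \limp{} and admissible weakening) and then cut on $\neg D$. This auxiliary cut is licensed by the primary induction precisely because of the weight convention of Definition~\ref{weight of formulas}: $\w(\neg D) = \w(D)+1 < \w(D)+2 = \w(\Box D) = \w(\diam D)$, which is exactly the inequality flagged there. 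Since \gintduea{} and \gintdueb{} appear only alongside the two-premiss \grebox, \grediam, the direction $D \seq E$ needed for the contraposition is always on hand. Finally, for the extensions I would check the configuration (\grulenbox; interaction) on $\Box D$: after cutting $D$ away one is left with a premiss of the form $B \seq$, from which the conclusion follows by \grulendiam{} --- available because \grulenbox{} is only ever added together with \grulendiam{} --- or, in the degenerate case where the cut yields the empty sequent, by weakening.
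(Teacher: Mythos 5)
Your proposal is correct and follows essentially the same route as the paper's proof: the same double induction, the same principal-case reductions in which the interaction rule is re-applied after cuts on immediate subformulas, the same contraposition trick for \gintduea/\gintdueb{} justified by $\w(\neg D)<\w(\Box D)$ from Definition~\ref{weight of formulas}, and the same appeal to \grulendiam{} (always present alongside \grulenbox) in the (\grulenbox; interaction) cases. Your explicit remarks on the non-principal cases and on why \ginttre{} is the only interaction compatible with the single-premiss monotonic rules are accurate and consistent with the paper's Example~\ref{counterexample to cut elimination}.
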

\begin{proof}
The structure of the proof is the same as the proof of Theorem \ref{cut elim monomodal}. 
Again, we consider only the cases where the cut formula is principal in the last rule applied
in the derivation of both premisses, 
with the further restriction that the last rules are modal ones.

The combinations between $\Box$-rules, or between $\diam$-rules,
have been shown in the proof of Theorem \ref{cut elim monomodal}. 
Here we consider the possible combinations of $\Box$- or $\diam$-rules with rules for interaction.

$\bullet$ (\grebox{}; \gintunoa). \ We have

\begin{center}
\ax{$A \seq B$}\ax{$B \seq A$}
\llab{\grebox}
\binf{$\G, \Box A, \diam C \seq \Box B$}
\ax{$\seq B$}\ax{$C \seq$}
\rlab{\gintunoa}
\binf{$\G, \Box A, \Box B, \diam C \seq D$}
\rlab{\cut}
\binf{$\G, \Box A, \diam C \seq D$}
\disp
\end{center}

which become

\begin{center}
\ax{$\seq B$}
\ax{$B \seq A$}
\llab{\cut}
\binf{$\seq A$}
\ax{$C \seq$}
\rlab{\gintunoa}
\binf{$\G, \Box A, \diam C \seq D$}
\disp
\end{center}

$\bullet$ (\grediam{}; \gintunoa). \ We have

\begin{center}
\ax{$A \seq B$}\ax{$B \seq A$}
\llab{\grediam}
\binf{$\G, \diam A, \Box C \seq \diam B$}
\ax{$\seq C$}\ax{$B \seq$}
\rlab{\gintunoa}
\binf{$\G, \diam A, \Box C, \diam B \seq D$}
\rlab{\cut}
\binf{$\G, \diam A, \Box C \seq D$}
\disp
\end{center}

which become

\begin{center}
\ax{$\seq C$}
\ax{$A \seq B$}
\ax{$B \seq$}
\rlab{\cut}
\binf{$A \seq$}
\rlab{\gintunoa}
\binf{$\G, \diam A, \Box C \seq D$}
\disp
\end{center}

$\bullet$ (\grebox{}; \gintunob). \ We have

\begin{center}
\ax{$A \seq B$}\ax{$B \seq A$}
\llab{\grebox}
\binf{$\G, \Box A, \diam C \seq \Box B$}
\ax{$B \seq$}\ax{$\seq C$}
\rlab{\gintunob}
\binf{$\G, \Box A, \Box B, \diam C \seq D$}
\rlab{\cut}
\binf{$\G, \Box A, \diam C \seq D$}
\disp
\end{center}

which become

\begin{center}
\ax{$A \seq B$}
\ax{$B \seq$}
\llab{\cut}
\binf{$A \seq$}
\ax{$\seq C$}
\rlab{\gintunob}
\binf{$\G, \Box A, \diam C \seq D$}
\disp
\end{center}

$\bullet$ (\grediam{}; \gintunob). \ We have

\begin{center}
\ax{$A \seq B$}\ax{$B \seq A$}
\llab{\grediam}
\binf{$\G, \diam A, \Box C \seq \diam B$}
\ax{$C \seq$}\ax{$\seq B$}
\rlab{\gintunob}
\binf{$\G, \diam A, \Box C, \diam B \seq D$}
\rlab{\cut}
\binf{$\G, \diam A, \Box C \seq D$}
\disp
\end{center}

which become

\begin{center}
\ax{$\seq C$}
\ax{$\seq B$}
\ax{$B \seq A$}
\rlab{\cut}
\binf{$\seq A$}
\rlab{\gintunob}
\binf{$\G, \diam A, \Box C \seq D$}
\disp
\end{center}

$\bullet$ (\grebox; \gintduea). \ We have:

\begin{center}
\ax{$A\seq B$}
\ax{$B\seq A$}
\llab{\grebox}
\binf{$\G, \Box A, \diam C \seq \Box B$}
\ax{$B, C \seq$}
\ax{$\neg B \seq C$}
\rlab{\gintduea}
\binf{$\G, \Box A, \Box B, \diam C \seq D$}
\rlab{\cut}
\binf{$\G, \Box A, \diam C \seq D$}
\disp
\end{center}

which is converted into the following derivation:

\begin{center}
\ax{$A\seq B$}
\llab{\wk}
\uinf{$A, C \seq B$}
\ax{$B, C\seq A$}
\rlab{\wk}
\uinf{$A, B, C \seq$}
\llab{\cut}
\binf{$A, C \seq$}
\ax{$B\seq A$}
\uinf{$\neg A \seq \neg B$}
\ax{$\neg B \seq C$}
\rlab{\wk}
\uinf{$\neg A, \neg B \seq C$}
\rlab{\cut}
\binf{$\neg A \seq C$}
\rlab{\gintduea}
\binf{$\G, \Box A, \diam C \seq D$}
\disp
\end{center}

Observe that the former derivation has two application of \cut,
both of them with a cut formula of smaller weight as, in particular,
$\w(\neg B) < \w(\Box B)$ (cf. Definition \ref{weight of formulas}).

$\bullet$ (\grebox; \ginttre) is analogous to the next case (\grmbox; \ginttre).

$\bullet$ (\grmbox; \ginttre). \ We have:

\begin{center}
\ax{$A\seq B$}
\llab{\grmbox}
\uinf{$\G, \Box A, \diam C \seq \Box B$}
\ax{$B, C \seq$}
\rlab{\ginttre}
\uinf{$\G, \Box A, \Box B, \diam C \seq D$}
\rlab{\cut}
\binf{$\G, \Box A, \diam C \seq D$}
\disp
\end{center}

which is converted into the following derivation:

\begin{center}
\ax{$A\seq B$}
\llab{\wk}
\uinf{$A, C\seq B$}
\ax{$B, C \seq$}
\rlab{\wk}
\uinf{$A, B, C\seq$}
\rlab{\cut}
\binf{$A, C\seq$}
\rlab{\inttre}
\uinf{$\G, \Box A, \diam C \seq D$}
\disp
\end{center}

$\bullet$ (\grulenbox{}; \ginttre). \ We have

\begin{center}
\ax{$\seq A$}
\llab{\grulenbox}
\uinf{$\G, \diam B \seq \Box A$}
\ax{$A, B \seq$}
\rlab{\ginttre}
\uinf{$\G, \Box A, \diam B \seq C$}
\rlab{\cut}
\binf{$\G, \diam B \seq C$}
\disp
\end{center}

which become

\begin{center}
\ax{$\seq A$}
\llab{\wk}
\uinf{$B \seq A$}
\ax{$A, B \seq$}
\rlab{\cut}
\binf{$B \seq$}
\rlab{\grulendiam}
\uinf{$\G, \diam B \seq C$}
\disp
\end{center}
\end{proof}

It is worth noticing that all cut-free calculi containing rule \grulenbox{} also contain rule \grulendiam.
In fact, combinations of rules containing \grulenbox{} and not \grulendiam{}
would give calculi where the \cut{} rule is not admissible.
This is due to the form of the interaction rules, 
that for instance
allow us to derive the sequent $\diam\bot\seq$ using \cut{} and \grulenbox.
A possible derivation is the following:
\begin{center}
\ax{$\seq \top$}
\llab{\grulenbox}
\uinf{$\diam\bot\seq \Box\top$}
\ax{$\seq \top$}
\ax{$\bot\seq$}
\rlab{\gintunoa}
\binf{$\Box\top, \diam\bot \seq$}
\rlab{\cut}
\binf{$\diam\bot \seq$}
\disp
\end{center}
Instead, sequent $\diam\bot\seq$ doesn't have any cut-free derivation
where  \grulendiam{} is not applied,
as no rule different from \grulendiam{} has $\diam\bot\seq$ in the conclusion.
We will consider in Section \ref{section CK and W} a calculus containing \grulenbox{} and not \grulendiam.
As we shall see, that calculus has interaction rules of a different form.

An additional remark concerns the possible choices of interaction rules in presence of \grmbox{} and \grmdiam.
In particular, we notice that whenever we take \grmbox{} and \grmdiam, 
rule \ginttre{} is the only interaction that gives cut-free calculi.
It can be interesting to consider a case of failure of cut elimination
when other interaction rules are considered.

\begin{example}\label{counterexample to cut elimination}
Sequent $\Box\neg p, \diam(p\land q) \seq$ is derivable from
\grmbox{} + \gintduea{} + \gintdueb{} + \cut{},
but it is not derivable from 
\grmbox{} + \gintduea{} + \gintdueb{} without \cut.
A possible derivation is as follows:

\vspace{0.3cm}
\noindent
\ax{$\neg p, p\land q \seq$}
\llab{\rneg}
\uinf{$\neg p \seq \neg(p\land q)$}
\llab{\grmbox}
\uinf{$\Box\neg p \seq \Box\neg(p\land q)$}
\llab{\wk}
\uinf{$\Box\neg p, \diam(p\land q) \seq \Box\neg(p\land q)$}
\ax{$\neg(p\land q), p\land q \seq$}
\ax{$\neg(p\land q) \seq \neg(p\land q)$}
\rlab{\gintdueb}
\binf{$\Box\neg(p\land q), \diam(p\land q) \seq$}
\rlab{\wk}
\uinf{$\Box\neg(p\land q), \Box\neg p, \diam(p\land q) \seq$}
\rlab{\cut}
\binf{$\Box\neg p, \diam(p\land q) \seq$}
\disp

\vspace{0.3cm}
Let us now try to derive bottom-up the sequent without using \cut{}.
As last rule we can only apply  \gintduea{} or \gintdueb, 
as they are the only rules with a conclusion of the right form.
In the first case the premisses would be $\neg p, p\land q\seq$, and $\neg\neg p \seq p\land q$;
while in the second case the premisses would be $\neg p, p\land q\seq$, and $\neg p \seq \neg(p\land q)$.
It is clear, however, that in both cases the second premiss is not derivable.
\end{example}

We now consider sequent calculi for logics containing axiom \axCbox.
As it happens in the case of calculi for monomodal logics, 
\axCbox{} is not captured by adding a specific rule, but we need instead to modify most of the modal rules already given.
In addition to the previous modifications of \grebox{} and \grmbox{},
we now need to modify also the interaction rules. 
In particular, we must take their generalisations 
that allow to introduce $n$ boxed formulas by a single application
(rules in Figure \ref{C interaction sequent rules}).
Rule \gintunoa{} is an exception
as the boxed formula which is principal in the rule application occurs as unboxed in the right-hand side of the premiss,
and therefore doesn't need to be changed.

As before, it can be easily shown that weakening and contraction are height-preserving admissible.

\begin{proposition}
Weakening and contraction are height-preserving admissible
in any sequent calculus defined by a combination of modal rules
in Figures \ref{basic modal sequent rules} and \ref{C interaction sequent rules}
 that satisfies our restrictions.
\end{proposition}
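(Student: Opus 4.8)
The plan is to prove height-preserving admissibility of weakening and contraction by the same induction on derivation height already used in Proposition~\ref{admissibility wk ctr monomodal}, simply extending the case analysis on the last rule applied so that it covers the new $\axC$-generalised interaction rules of Figure~\ref{C interaction sequent rules} alongside the rules of Figure~\ref{basic modal sequent rules}. Since the rules of \gtrei{} and the basic modal rules have already been treated in Proposition~\ref{admissibility wk ctr monomodal}, and the un-generalised interaction rules have been handled in the proposition following Theorem~\ref{cut elimination without C}, the only genuinely new work concerns the $\axC$-generalised interaction rules, which introduce an arbitrary number $n$ of boxed formulas with a single application.

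For weakening (both \lwk{} and \rwk), the argument is essentially trivial and requires no induction beyond bookkeeping: each interaction rule carries an arbitrary context $\G$ on the left of its conclusion, so if the premiss of a weakening step is derived by an interaction rule, I can fire the very same instance of that rule with the weakening formula absorbed into $\G$, yielding a derivation of the weakened conclusion of the same height. The case for \rwk{} is analogous, the weakened succedent formula being arbitrary in the conclusions of the relevant rules. This is exactly the observation already made in the earlier proposition, and the generalisation to $n$ principal boxed formulas does not change it.

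The only case with real content is contraction when both occurrences of the contracted formula are principal in a $\axC$-generalised interaction rule. Here the principal formula is some $\Box A_i$ appearing twice among the boxed formulas in the conclusion, and the rule's single premiss (for \ginttrec, \gintduebc, \gintduebc, etc.) lists the corresponding unboxed formula $A_i$ twice on the left. First I would apply the induction hypothesis to contract the two copies of $A_i$ in that premiss, exactly as in the \grmboxc{} case of Proposition~\ref{admissibility wk ctr monomodal}, and then re-apply the same interaction rule with one fewer principal formula to obtain the contracted conclusion at the same height. The \gintunoa{}-style rule where the principal boxed formula occurs \emph{unboxed in the succedent} of a premiss needs a slightly different but equally routine treatment, since contraction there acts on a right-hand formula rather than on a repeated antecedent occurrence.

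I expect the main (though still modest) obstacle to be purely organisational: verifying that for \emph{every} admissible combination of rules meeting our restrictions, the interaction between a $\axC$-generalised rule and the height-preserving induction hypothesis goes through uniformly, and in particular that contraction of two principal boxed formulas always reduces cleanly to contraction on the shared unboxed premiss occurrences. Because the interaction rules have a fixed, simple shape and carry the whole context $\G$ freely, as the authors note, no case presents any essential difficulty, and the proof is a direct extension of the two earlier admissibility proofs.
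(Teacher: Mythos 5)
Your proposal is correct and follows essentially the paper's own route: the paper states this proposition without an explicit proof, intending exactly the extension you describe of Proposition \ref{admissibility wk ctr monomodal}, whose \greboxc{}/\grmboxc{} contraction case is the template for the only substantive new case, namely contracting two principal boxed formulas in a \axC-generalised interaction rule by first contracting the two unboxed copies in the premiss (by the induction hypothesis) and then reapplying the rule with $n-1$ principal formulas. One small correction: in \gintunoa{} only a single boxed formula is principal, so when $\Box A$ occurs twice the second copy necessarily lies in the context $\G$ and is absorbed by reapplying the very same rule instance with a smaller context --- no treatment of a formula in the succedent of a premiss is needed there, making that case even more routine than you suggest.
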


Following our methodology we obtain again 12 sequent calculi, as it is stated by the following theorem:

\begin{figure}
\fbox{\begin{minipage}{35em}
\vspace{0.1cm}

\ax{$A_1,...,A_n\seq$}\ax{$\seq B$}\llab{\gintunobc}
\binf{$\G, \Box A_1, ..., \Box A_n, \diam B \seq C$}\disp   
\hfill
\ax{$A_1, ..., A_n, B\seq$}\llab{\ginttrec}
\uinf{$\G, \Box A_1, ..., \Box A_n, \diam B \seq C$}\disp

\vspace{0.5cm}

\ax{$A_1, ..., A_n, B\seq$}
\ax{$\neg B\seq A_1$ \ ... \ $\neg B\seq A_n$}
\llab{\gintdueac}
\binf{$\G, \Box A_1, ..., \Box A_n, \diam B \seq C$}\disp

\vspace{0.5cm} 

\ax{$A_1, ..., A_n, B\seq$}
\ax{$\neg A_i\seq B$ \ ... \ $\neg A_n\seq B$}\llab{\gintduebc}
\binf{$\G, \Box A_1, ..., \Box A_n, \diam B \seq C$}\disp 
\end{minipage}}

\caption{\label{C interaction sequent rules} Modified interaction rules for \axCbox. 
In any rule $n\geq 1$.} 
\end{figure}

\begin{theorem}\label{cut elimination with C}
The \cut{} rule is admissible in the following calculi:

\vspace{0.2cm}
\begin{tabular}{l l l}
\gunoEC{} & := & \greboxc{} + \grediam{} + \gintunoa{} + \gintunobc \\

\gdueEC{} & := & \greboxc{} + \grediam{} + \gintdueac{} + \gintduebc \\

\gtreEC{} & := & \greboxc{} + \grediam{} + \ginttrec \\

\gunoMC{} & := & \grmboxc{} + \grmdiam{} + \ginttrec \\
\end{tabular}

\vspace{0.2cm}
\noindent
Moreover, letting \gC{} be any of the previuos calculi, \cut{} is admissible in 

\vspace{0.2cm}
\begin{tabular}{l l l}
\gC\genzndiam{} & := & \gC{} + \grulendiam \\

\gC\genznbox{} & := & \gC{} + \grulendiam{} + \grulenbox \\
\end{tabular}
\end{theorem}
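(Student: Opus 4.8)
The plan is to run the same double induction as in Theorems \ref{cut elim monomodal} and \ref{cut elimination without C}: a primary induction on the weight of the cut formula (Definition \ref{weight of formulas}) and a secondary induction on the cut height. As in those proofs, any application of \cut{} whose cut formula is not principal in at least one premiss is permuted upward, reducing the cut height, so it suffices to treat the cases in which the cut formula is principal in the last rule of both premisses, and among these only those in which both last rules are modal (the non-modal cases and the mixing of modal with non-modal rules are handled verbatim as for \gtrei). The combinations of two $\Box$-rules, or of two $\diam$-rules, are already settled by the proof of Theorem \ref{cut elim monomodal}, which treats the generalised rules \greboxc{} and \grmboxc{} in full; and since every interaction rule concludes with a generic succedent, an interaction rule can only be the right premiss of such a cut, so no interaction/interaction principal case arises. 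Hence the genuinely new work concerns the combinations of a $\Box$- or $\diam$-rule with one of the modified interaction rules of Figure \ref{C interaction sequent rules}.

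These cases are the $n$-premiss generalisations of the corresponding cases of Theorem \ref{cut elimination without C}, and the reduction strategy is unchanged: the principal box of the interaction rule, which in the $\mathsf{C}$-setting is one among $\Box A_1,\dots,\Box A_n$, is eliminated and the boxed context created by \greboxc{} (respectively \grmboxc{}) is reassembled into a fresh application of the same interaction rule. As a representative case take (\greboxc{}; \gintduebc{}) with cut formula $\Box Q$: the left premiss introduces $\Box Q$ from $P_1,\dots,P_k\seq Q$ together with $Q\seq P_i$ for each $i$, while on the right $\Box Q$ is a principal boxed formula of a \gintduebc{} application supplying $Q,S\seq$ and the side sequent $\neg Q\seq S$ (with $\diam S$ the principal diamond). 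One then assembles the premisses of a new \gintduebc{} application on the boxes $\Box P_1,\dots,\Box P_k$ by a \cut{} on $Q$, yielding $P_1,\dots,P_k,S\seq$, and, for each $i$, by contraposing $Q\seq P_i$ into $\neg P_i\seq\neg Q$ and then cutting against $\neg Q\seq S$, yielding $\neg P_i\seq S$. The symmetric combinations on the diamond side, through \grediam{} (and through \grmdiam{} in the monotone calculus), run analogously, with the roles of the inner and negated formulas interchanged.

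The delicate point, and the reason the proof needs the weight function of Definition \ref{weight of formulas} rather than ordinary formula length, is precisely the negation interaction rules \gintdueac{} and \gintduebc{}. In the relevant subcases the auxiliary reduction introduces a \cut{} on a negation $\neg Q$ (or $\neg S$) of the modal cut formula $\Box Q$ (or $\diam S$), and for the primary induction to apply this negation must have strictly smaller weight; this is guaranteed exactly by $\w(\neg Q)=\w(Q)+1<\w(Q)+2=\w(\Box Q)$, the inequality flagged in the remark following Definition \ref{weight of formulas}. By contrast the remaining interaction cases, namely those built on \gintunobc{} and on \ginttrec{} (and so the monotone case (\grmboxc{}; \ginttrec{})), are simpler, requiring only a single \cut{} on the inner formula of smaller weight. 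Finally, the extensions with \grulendiam{} and \grulenbox{} add only the cuts of these necessitation rules against the interaction rules; these reproduce the (\grulenbox{}; \ginttre{}) reduction already displayed in Theorem \ref{cut elimination without C}, now with the $n$-premiss rule \ginttrec{}, and introduce no new difficulty. The real obstacle is therefore not conceptual but administrative: verifying that in each negation case the contraposition steps and the auxiliary negation cuts are correctly distributed over all of the principal boxed formulas produced by \greboxc{} or \grmboxc{}.
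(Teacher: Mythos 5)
Your proposal is correct and follows essentially the same argument as the paper: the same double induction on cut-formula weight and cut height, the same reductions (a single cut on the inner formula for the \gintunoa{}/\gintunobc{}/\ginttrec{} cases, and contraposition plus an auxiliary cut on the negated formula for \gintdueac{}/\gintduebc{}, justified exactly by $\w(\neg A)<\w(\Box A)$), and the same dismissal of the remaining combinations as routine. The only difference is presentational: the paper's displayed case (\greboxc{}; \gintduebc{}) carries the additional principal boxes $\Box C_1,\dots,\Box C_k$ of the interaction rule through the reduction, whereas you suppress them; their side premisses $\neg C_j \seq D$ pass through to the new \gintduebc{} application unchanged, so nothing essential is lost.
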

\begin{proof}
As before, we only show some relevant cases.

\vspace{0.2cm}
$\bullet$ (\greboxc{}; \gintunoa). \ 
Let $\G_1$ be the multiset $A_1,...,A_n$, and $\Box\G_1$ be $\Box A_1,...,\Box A_n$.  We have:

\begin{center}
\ax{$\G_1 \seq B$ \quad $B \seq A_1$ \ ... \ $B \seq A_n$}
\llab{\greboxc}
\uinf{$\G, \Box \G_1, \diam C \seq \Box B$}
\ax{$\seq B$}
\ax{$C \seq$}
\rlab{\gintunoa}
\binf{$\G, \Box\G_1, \Box B, \diam C \seq D$}
\rlab{\cut}
\binf{$\G, \Box\G_1, \diam C \seq D$}
\disp
\end{center}

which become

\begin{center}
\ax{$\seq B$}
\ax{$B \seq A_1$}
\llab{\cut}
\binf{$\seq A_1$}
\ax{$C \seq$}
\rlab{\gintunoa}
\binf{$\G, \Box A_1, \Box A_2, ...,  \Box A_n, \diam C \seq D$}
\disp
\end{center}

\vspace{0.4cm}

$\bullet$ (\greboxc{}; \gintduebc). \ Let $\G_1= A_1,...,A_n$ and $\G_2= C_1, ..., C_k$. We have:

\begin{center}
\ax{$\G_1 \seq B$ \quad $B \seq A_1$ \ ... \ $B \seq A_n$}
\llab{\greboxc}
\uinf{$\G, \Box \G_1, \Box \G_2, \diam D \seq \Box B$}
\ax{$B, \G_2, D \seq$ \quad $\neg B \seq D$ \quad $\neg C_1 \seq D$ \ ... \ $\neg C_k \seq D$}
\rlab{\gintduebc}
\uinf{$\G, \Box\G_1, \Box B, \Box\G_2, \diam D \seq E$}
\rlab{\cut}
\binf{$\G, \Box\G_1, \Box\G_2, \diam D \seq E$}
\disp
\end{center}

\noindent
which is converted as follows:
First we derive sequent $\G_1, \G_2, D \seq$ and, for all $1\leq i \leq n$, sequent $\neg A_i \seq D$ as follows:

\vspace{0.3cm}
\ax{$\G_1\seq B$}
\llab{\wk}
\uinf{$\G_1, \G_2 \seq B$}
\ax{$B, \G_2, D \seq$}
\rlab{\wk}
\uinf{$B, \G_1, \G_2, D \seq$}
\llab{\cut}
\binf{$\G_1, \G_2, D \seq$}
\disp
\qquad
\ax{$B\seq A_i$}
\uinf{$\neg A_i \seq \neg B$}
\ax{$\neg B \seq D$}
\rlab{\wk}
\uinf{$\neg A_i, \neg B \seq D$}
\rlab{\cut}
\binf{$\neg A_i \seq D$}
\disp

\vspace{0.3cm}
\noindent
Then we can apply \gintduebc:

\begin{center}
\ax{$\G_1, \G_2, D \seq$ \quad $\neg A_1 \seq D$ \ ... \ $\neg A_n \seq D$ \quad $\neg C_1 \seq D$ \ ... \ $\neg C_k \seq D$}
\rlab{\gintduebc}
\uinf{$\G, \Box\G_1, \Box\G_2, \diam D \seq E$}
\disp
\end{center}

\vspace{0.4cm}

$\bullet$ (\greboxc{}; \ginttrec). \ Let $\G_1= A_1,...,A_n$ and $\G_2= C_1, ..., C_k$. We have:

\begin{center}
\ax{$\G_1 \seq B$ \quad $B \seq A_1$ \ ... \ $B \seq A_n$}
\llab{\greboxc}
\uinf{$\G, \Box \G_1, \Box \G_2, \diam D \seq \Box B$}
\ax{$B, \G_2, D \seq$}
\rlab{\ginttrec}
\uinf{$\G, \Box\G_1, \Box B, \Box\G_2, \diam D \seq E$}
\rlab{\cut}
\binf{$\G, \Box\G_1, \Box\G_2, \diam D \seq E$}
\disp
\end{center}

which become

\begin{center}
\ax{$\G_1 \seq B$}
\llab{\wk}
\uinf{$\G_1, \G_2, D \seq B$}
\ax{$B, \G_2, D \seq$}
\rlab{\wk}
\uinf{$\G_1, B, \G_2, D \seq$}
\rlab{\cut}
\binf{$\G_1, \G_2, D \seq$}
\rlab{\ginttrec}
\uinf{$\G, \Box\G_1, \Box\G_2, \diam D \seq E$}
\disp
\end{center}

\vspace{0.4cm}
$\bullet$ (\grmboxc{}; \ginttrec{}) is similar to the previous case.
\end{proof}

Notably, the cut-free calculi in Theorem \ref{cut elimination with C}
are the \axCbox-versions of the cut-free calculi in Theorem \ref{cut elimination with C}.
This means that, once the interaction rules are opportunely modified,
the generalisation of the rules to $n$ principal formulas
doesn't give problems with respect to cut elimination.

We have also seen that rule \gintunoa{} doesn't need to be changed.
Instead, if we don't modify the other interaction rules we obtain calculi in which \cut{} is not eliminable, 
as it is shown by the following example.

\begin{example}
Sequent $\Box p, \Box\neg p, \diam \top \seq$ is derivable by \grmboxc{} + \gintunob{} + \cut{},
but is not derivable by \grmboxc{} + \gintunob{} without  \cut{}. The derivation is as follows:
\begin{center}
\ax{$p, \neg p \seq \bot$}
\llab{\grmboxc}
\uinf{$\diam\top, \Box p, \Box\neg p \seq \Box \bot$}
\ax{$\bot\seq$}
\ax{$\seq \top$}
\rlab{\gintunob}
\binf{$\Box p, \Box\neg p, \Box\bot, \diam\top \seq$}
\rlab{\cut}
\binf{$\Box p, \Box\neg p, \diam\top \seq$}
\disp
\end{center}
Without \cut{} the sequent is instead not derivable, 
as the only applicable rule would be \gintunob,
but neither $p$ nor $\neg p$ is a contradiction.
\end{example}

\subsection{Hilbert systems}
For each sequent calculus we now
define an equivalent Hilbert system.
To this purpose, in addition the formulas in Figure \ref{modal axioms},
we also consider the axioms and rules in Figure \ref{axioms interaction}.
As before, the Hilbert systems are defined by the set of modal axioms and rules that are added to \il.
The systems are axiomatised as follows.

\begin{figure}
\fbox{\begin{minipage}{34em}
\vspace{0.1cm}
\begin{tabular}{l l l l p{0.4cm} p{2.6cm}}
\vspace{0.2cm}
\intunoa & $\neg(\Box\top\land\diam\bot)$  & \intunob & $\neg(\diam\top\land\Box\bot)$ 
&  \multirow{2}{0.4cm}{\inttre}
&  \multirow{2}{4cm}{\ax{$\neg(A\land B)$}
\uinf{$\neg(\Box A\land\diam B)$}
\disp}
\\

\intduea & $\neg(\Box A\land\diam\neg A)$\quad\quad  & \intdueb & $\neg(\Box\neg A\land\diam A)$\quad\quad \\
\end{tabular}
\end{minipage}}
\caption{\label{axioms interaction} Hilbert axioms and rules for interactions between $\Box$ and $\diam$.}
\end{figure}

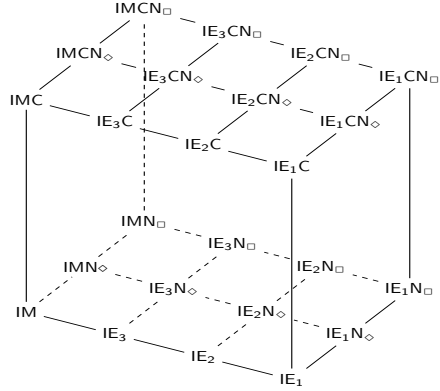
\begin{figure}
\qquad \hfill
%\resizebox{8cm}{7cm}{
\resizebox{6cm}{5.2cm}{
\begin{tikzpicture}[scale=1.1]
	\node (PW) at  (-0.5,-3)  {\unoM};
	\node (PT) at  (1,-3.5)  {\treE};
	\node (PN) at  (2.5,-4)  {\dueE};
	\node (PCL) at  (4,-4.5)  {\unoE};
	\node (PUN) at  (3.5,-3)  {\dueENdiam};
	\node (PUT) at  (2, -2.5)  {\treENdiam};
	\node (PUW) at  (0.5,-2)  {\MNdiam};
	\node (PAN) at  (4.5,-2)  {\dueENbox};
	\node (PAT) at  (3,-1.5)  {\treENbox};
	\node (PAW) at  (1.5,-1)  {\MNbox};
	\node (PU) at  (5,-3.5 ) {\unoENdiam};
	\node (PA) at  (6,-2.5 ) {\unoENbox};

	\draw (PW) -- (PT);
	\draw (PT) -- (PN);
	\draw (PN) -- (PCL);
	\draw [dashed] (PA) -- (PAN);
	\draw [dashed] (PAN) -- (PAT);
	\draw [dashed] (PAT) -- (PAW);
	\draw (PCL) -- (PU);
	\draw (PU) -- (PA);
	\draw [dashed] (PU) -- (PUN);
	\draw [dashed] (PUN) -- (PUT);
	\draw [dashed] (PUT) -- (PUW);
	\draw [dashed] (PW) -- (PUW);
	\draw [dashed] (PN) -- (PUN);
	\draw [dashed] (PUN) -- (PAN);
	\draw [dashed] (PUT) -- (PAT);
	\draw [dashed] (PT) -- (PUT);
	\draw [dashed] (PUW) -- (PAW);

	\node (V) at  (4,0.2) {\unoEC};
	\node (VA) at  (6,2.2) {\unoENboxC};
	\node (VU) at  (5,1.2) {\unoENdiamC};
	\node (VN) at  (2.5,0.7 ) {\dueEC};
	\node (VT) at  (1,1.2 ) {\treEC};	
	\node (VW) at  (-0.5,1.7 ) {\unoMC};
	\node (VUN) at  (3.5,1.7 ) {\dueENdiamC};
	\node (VUT) at  (2,2.2 ) {\treENdiamC};
	\node (VUW) at  (0.5,2.7 ) {\MNdiamC};
    \node (VAN) at  (4.5,2.7 ) {\dueENboxC};
    \node (VAT) at  (3,3.2 ) {\treENboxC};
	\node (VAW) at  (1.5,3.7 ) {\MNboxC};

	\draw (PCL) -- (V);
	\draw[dashed] (PAW) -- (VAW);
	\draw (V) -- (VU);
	\draw (PW) -- (VW);
	\draw (VU) -- (VA);
	\draw (PA) -- (VA);
	\draw (V) -- (VN);
	\draw (VN) -- (VT);
	\draw (VT) -- (VW);
	\draw (VU) -- (VUN);
	\draw (VUN) -- (VUT);
	\draw (VUT) -- (VUW);
	\draw (VN) -- (VUN);
	\draw (VT) -- (VUT);
	\draw (VW) -- (VUW);
	\draw (VA) -- (VAN);
	\draw (VAN) -- (VAT);
	\draw (VAT) -- (VAW);
	\draw (VUN) -- (VAN);
	\draw (VUT) -- (VAT);
	\draw (VUW) -- (VAW);
\end{tikzpicture}
}
\hfill \qquad
\caption{\label{int cube} The lattice of intuitionistic non-normal bimodal logics.}
\end{figure}

\vspace{0.2cm}
\begin{tabular}{l l l}
\unoE{} & := & \rebox{} + \rediam{} + \intunoa{} + \intunob \\

\dueE{} & := & \rebox{} + \rediam{} + \intduea{} + \intdueb \\

\treE{} & := & \rebox{} + \rediam{} + \inttre \\

\unoM{} & := & \rebox{} + \rediam{} + \axMbox{} + \axMdiam{} + \inttre \\
\end{tabular}

\vspace{0.2cm}
\noindent
Moreover, letting \h{} be any of the four systems listed above, we 
have the following additional systems:

\vspace{0.2cm}
\begin{tabular}{l l l}
\h\hc{} & := & \h{} + \axCbox \\

\h\hndiam{} & := & \h{} + \axNdiam \\

\h\hnbox{} & := & \h{} + \axNbox \\

\h\hc\hndiam{} & := & \h{} + \axCbox{} + \axNdiam \\

\h\hc\hnbox{} & := & \h{} + \axCbox{} + \axNbox \\
\end{tabular}

\begin{proposition}
Let \gX{} be any sequent calculus for intuitionistic non-normal bimodal logics.  
Then  \gX{} is equivalent to system \X.
\end{proposition}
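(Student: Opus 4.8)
The plan is to prove the proposition exactly as the monomodal analogue Proposition~\ref{equiv monomodal} was proved, establishing the two inclusions separately: every axiom and rule of \X{} is derivable in \gX, and every rule of \gX{} is sound under the standard translation into \X. Throughout, the pairing between a calculus and its Hilbert system is the one fixed by the tables above, and the whole argument leans on admissibility of \cut{} in \gX{} (Theorems~\ref{cut elimination without C} and~\ref{cut elimination with C}).

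\textbf{Hilbert system into sequent calculus.} I would first check that each axiom and rule of \X{} is derivable in \gX. For the axioms of \il{} together with \modusponens{}, and for the purely monomodal items (\rebox, \rediam, \axMbox, \axMdiam, \axNbox, \axNdiam, \axCbox), the derivations from the proof of Proposition~\ref{equiv monomodal} apply verbatim, using cut admissibility. The new cases are the interaction axioms and rule of Figure~\ref{axioms interaction}, and each is handled by applying the matching interaction rule to trivial premisses and then closing with \lland{} and \rneg. For example, \intunoa{} is obtained by applying \gintunoa{} to $\seq\top$ and $\bot\seq$, yielding $\Box\top,\diam\bot\seq$, and then \lland{} and \rneg{} give $\seq\neg(\Box\top\land\diam\bot)$; \intduea{} comes from \gintduea{} instantiated with the second principal formula equal to $\neg A$, whose premisses $A,\neg A\seq$ and $\neg A\seq\neg A$ are immediate; and the \emph{rule} \inttre{} is simulated by extracting $A,B\seq$ from a derivation of $\seq\neg(A\land B)$ via \cut{} and then applying \ginttre. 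The \axCbox-versions (\gintunobc, \gintdueac, \gintduebc, \ginttrec) are entirely analogous.

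\textbf{Sequent calculus into Hilbert system.} Conversely I would show that every rule of \gX{} is sound under the translation of $\G\seq C$ as $\bigwedge\G\imp C$ and of $\G\seq$ as $\neg\bigwedge\G$. The \gtrei{} rules and the monomodal rules are again as in Proposition~\ref{equiv monomodal}. For an interaction rule the strategy is uniform: use \rebox{} and \rediam{} to normalise the principal formulas to $\top$ and $\bot$, then invoke the corresponding interaction axiom and conclude by \efq. Concretely, for \gintunoa{} the premisses give $\vd_\X A$ and $\vd_\X\neg B$, hence $A\imp\coimp\top$ and $B\imp\coimp\bot$; then \rebox{} and \rediam{} yield $\Box A\imp\Box\top$ and $\diam B\imp\diam\bot$, so \intunoa{} gives $\neg(\Box A\land\diam B)$, from which any conclusion follows. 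For \gintduea{} the two premisses combine, intuitionistically, into $B\imp\coimp\neg A$, so \rediam{} gives $\diam B\imp\diam\neg A$ and \intduea{} delivers $\neg(\Box A\land\diam B)$. The \axCbox-versions additionally use iterated \axCbox{} to pass from $\Box A_1\land\dots\land\Box A_n$ to $\Box(A_1\land\dots\land A_n)$ before applying \rebox.

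\textbf{Main obstacle.} The bulk of the work is routine bookkeeping, and the delicate point is the two negation-based interactions \intduea{}/\intdueb{} with their rules \gintduea{}/\gintdueb{} and the C-generalisations. There one must verify, on the sequent side, that the second premiss (e.g.\ $\neg A\seq B$) is precisely what is needed to convert $\diam B$ into $\diam\neg A$, and on the Hilbert side that $B\imp\coimp\neg A$ really is derivable from $\neg(A\land B)$ together with $\neg A\imp B$. This is exactly the place where strengthening the interaction to full duality would collapse the correspondence, which requirement~(\reqthree) forbids; confirming that the derivations survive for every node of the lattice of Figure~\ref{int cube}, and that the C-versions preserve them, is where the care is concentrated.
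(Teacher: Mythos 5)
Your proposal is correct and follows essentially the same route as the paper's proof: both directions are handled exactly as in Proposition~\ref{equiv monomodal}, with the interaction axioms obtained by applying the interaction rules to trivial premisses (e.g.\ \gintunoa{} on $\seq\top$ and $\bot\seq$, \gintduea{} instantiated at $\diam\neg A$) followed by \lland{} and \rneg, and the interaction rules recovered in \X{} by normalising the premisses via \rebox{}/\rediam{} and invoking the matching axiom. The only cosmetic difference is that you spell out \gintduea{} where the paper illustrates \gintdueb{}, a symmetric case.
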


\begin{proof}
Any axiom and rule of \X{} is derivable in \gX.
Here we only consider the interactions between the modalities, 
as the derivations of the other axioms have been 
already given in Proposition \ref{equiv monomodal}.

\vspace{0.5cm}
\noindent
\ax{$\seq \top$}
\ax{$\bot \seq$}
\rlab{\gintunoa}
\binf{$\Box \top, \diam \bot \seq$}
\rlab{\lland}
\uinf{$\Box \top \land \diam \bot \seq$}
\rlab{\rneg}
\uinf{$\seq \neg(\Box \top \land \diam \bot)$}
\disp
%\quad
\hfill
\ax{$\seq \top$}
\ax{$\bot \seq$}
\rlab{\gintunob}
\binf{$\diam \top, \Box \bot \seq$}
\rlab{\lland}
\uinf{$\diam \top \land \Box \bot \seq$}
\rlab{\rneg}
\uinf{$\seq \neg(\diam \top \land \Box \bot)$}
\disp
%\quad
\hfill
\ax{$\seq \neg(A\land B)$}
\llab{\wk}
\uinf{$A, B \seq \neg(A\land B)$}
\ax{$A, B, \neg(A\land B) \seq$}
\llab{\cut}
\binf{$A, B \seq$}
\rlab{\ginttre}
\uinf{$\Box A, \diam B \seq$}
\rlab{\lland}
\uinf{$\Box A \land \diam B \seq$}
\rlab{\rneg}
\uinf{$\seq \neg(\Box A \land \diam B)$}
\disp

\vspace{0.5cm}
\noindent
\ax{$A, \neg A \seq$}
\ax{$\neg A \seq \neg A$}
\rlab{\gintduea}
\binf{$\Box A, \diam \neg A \seq$}
\rlab{\lland}
\uinf{$\Box A \land \diam \neg A \seq$}
\rlab{\rneg}
\uinf{$\seq \neg(\Box A \land \diam \neg A)$}
\disp
%\quad\quad
\hfill
\ax{$A, \neg A \seq$}
\ax{$\neg A \seq \neg A$}
\rlab{\gintdueb}
\binf{$\Box \neg A, \diam A \seq$}
\rlab{\lland}
\uinf{$\Box \neg A \land \diam A \seq$}
\rlab{\rneg}
\uinf{$\seq \neg(\Box \neg A \land \diam A)$}
\disp

\vspace{0.5cm}
\noindent
Moreover, any rule of \gX{} is derivable in \X.
As before we only need to consider the interaction rules.
The derivations are immediate, we show as example the following.

$\bullet$ \ If \X{} contains axiom \intunoa, then rule \gintunoa{} is derivable.
Assume $\vd_\X A$ and $\vd_\X B \imp \bot$.
Then $\vd_\X \top\imp A$, and,
since $\vd_\X A\imp\top$, by \rebox{} we have $\vd_\X \Box A\imp \Box\top$.  
Moreover, since $\vd_\X \bot\imp B$, by \rediam{} we have $\vd_\X \diam B \imp \diam \bot$,
hence $\vd_\X \neg\diam \bot \imp \neg \diam B$.
By \intunoa{} we also have $\vd_\X \Box\top\imp\neg\diam\bot$.
Thus $\vd_\X \Box A\imp\neg\diam B$,
which gives $\vd_\X \neg(\Box A\land\diam B)$.

$\bullet$ \ If \X{} contains axiom \intdueb, then rule \gintdueb{} is derivable.
Assume $\vd_\X A \imp \neg B$ and $\vd_\X \neg B \imp A$.
Then by \rebox, $\vd_\X \Box A \imp \Box \neg B$.
By \intdueb{} we have $\vd_\X \Box\neg B \imp \neg\diam B$.
Thus $\vd_\X \Box A \imp \neg\diam B$,
which gives $\vd_\X \neg(\Box A \land\diam B)$.
\end{proof}

\section{Decidability and other consequences of cut elimination}\label{cons cut elimination}

Analytic cut-free sequent calculi are a very powerful tool for proof analysis.
In this section we take advantage of the fact that 
\cut{} is admissible in all sequent calculi defined in Sections \ref{section monomodal} and \ref{section bimodal}
in order to prove some additional properties of the corresponding logics.
By looking at the form of the rules,
we first observe that all calculi satisfy the requirements on \intlogic s that we have initially made,
\emph{i.e.} that 
they are conservative over \il{} (\reqone);
that they satisfy the disjunction property (\reqtwo);
and that the duality principles \dualbox{} and \dualdiam{} are not derivable (\reqthree).  
In a similar way we show that all calculi are pairwise non-equivalent,
hence the lattices of \intlogic s contain, respectively, 
8 distinct monomodal $\Box$-logics, 4 distinct monomodal $\diam$-logics,
and 24 distinct bimodal logics.

Some form of subformula property often follows from cut elimination.
For calculi containing rules \gintduea{} and \gintdueb{} we must consider
a property that is slightly different to the usual one,
as $\neg A$ can appear in a premiss of a rule where $\Box A$ or $\diam A$ appears in the conclusion.
As we shall see, the considered property is strong enough to provide,
together with the admissibility of contraction, 
a standard proof of decidability for \gtre{} calculi.

We conclude the section with some further remarks about the logics that we have defined,
that in particular concern the relations between intuitionistic and classical modal logics.

\begin{fact}
Any intuitionistic non-normal modal logic defined in Section \ref{section monomodal}
and Section \ref{section bimodal} satisfies requirements \reqone, \reqtwo{} and \reqthree{}
(the latter one being relevant only for bimodal logics).
\end{fact}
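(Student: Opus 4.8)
The plan is to establish the three requirements by exploiting the cut-free sequent calculi, which give us direct control over the shape of derivations. Each requirement reduces to a structural observation about cut-free proofs, so I would organize the argument around the subformula-like property afforded by the rules and the admissibility of \cut{} (Theorems \ref{cut elim monomodal}, \ref{cut elimination without C} and \ref{cut elimination with C}) together with height-preserving admissibility of weakening and contraction. The key advantage is that in a cut-free derivation, every sequent occurring in the proof is built from components of the end-sequent, so by inspecting which rules can conclude sequents of particular forms we obtain strong constraints.

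\textbf{Conservativity (\reqone).} First I would prove that the non-modal fragment coincides with \il{}. The easy direction is immediate since each calculus extends \gtrei{}. For the converse, suppose a purely propositional sequent $\G \seq B$ (with $\G, B$ containing no modalities) is derivable in some \gX{}. By cut admissibility there is a cut-free derivation. The crucial observation is that every modal rule (Figures \ref{basic modal sequent rules}, \ref{basic interaction rules}, \ref{C interaction sequent rules}) introduces a principal formula of the form $\Box A$ or $\diam A$ into its conclusion, so such a rule can only be applied if a modal formula already appears in the end-sequent or is introduced lower in the derivation. A bottom-up analysis shows that if the end-sequent is modality-free, no modal rule can ever be the last rule applied, and inductively the entire derivation uses only \gtrei{} rules. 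Hence $\G \seq B$ is derivable in \gtrei{}, giving $\vd_{\il} B$ from $\G$.

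\textbf{Disjunction property (\reqtwo).} Here I would use the single-succedent format of the calculi. Assume $\seq A \lor B$ is cut-free derivable. I examine the last rule: it cannot be a left rule (the antecedent is empty), nor \limp{}, \lland{}, \llor{}, \lbot{}, nor any modal left or interaction rule, since all of these require a nonempty or specifically-shaped antecedent. The initial sequent \inseq{} is excluded because it has an atom on the left. The remaining possibilities that conclude an empty-antecedent sequent with a disjunction on the right are \rlor{} and \grulenbox{}; but \grulenbox{} produces $\Box A$, not a disjunction. Therefore the last rule is \rlor{}, whose premiss is $\seq A$ or $\seq B$, yielding derivability of one disjunct. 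The empty-context case is what makes this clean.

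\textbf{Non-interdefinability (\reqthree).} This is the step I expect to be the main obstacle, since it is a non-derivability (and hence consistency-style) claim rather than a structural inspection of a given derivation. The goal is to show \dualbox{} and \dualdiam{} are not derivable in the bimodal systems. The natural strategy is to attempt a cut-free bottom-up proof search for, say, $\seq \Box A \imp \neg \diam \neg A$ (one half of \dualdiam) and show it gets stuck: after applying \rimp{} and \rneg{} one reaches $\Box A, \diam \neg A \seq$, and the only applicable rules are the interaction rules, each of which demands a derivable premiss such as $A, \neg A \seq$ together with a side condition ($\seq \neg A$, or $\neg A \seq \neg A$ combined with $A \seq$, etc.) that cannot be met for a fresh atom $A = p$; in particular $\seq \neg p$ and $p \seq$ are underivable. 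The difficulty is that proof search must be shown to terminate and exhaust all branches, so I would invoke the (slightly nonstandard) subformula property announced just before this Fact, guaranteeing that all sequents in any candidate cut-free derivation are bounded by subformulas of the end-sequent (together with their negations, to accommodate \gintduea{}/\gintdueb{}). This bound makes the branching finite and the failure of proof search conclusive, establishing underivability and hence (\reqthree).
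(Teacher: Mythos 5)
Your treatments of \reqone{} and \reqtwo{} are correct and are essentially the paper's own argument: the paper, too, inspects which rules can end a (cut-free) derivation of a modality-free sequent, respectively of $\seq A \lor B$, and concludes that only \gtrei-rules, respectively \rlor, can occur there.

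The genuine gap is in \reqthree: you are refuting the wrong half of the duality. You propose to show that $\Box A \imp \neg\diam\neg A$ is underivable by arguing that bottom-up search from $\Box A, \diam\neg A \seq \bot$ gets stuck at the interaction rules. But that sequent \emph{is} derivable in most of the bimodal systems: $\neg(\Box A \land \diam\neg A)$ is literally axiom \intduea, one of the defining axioms of \dueE{} and all its extensions, and it is also derivable in \treE{} and \unoM{} (and their extensions) by rule \inttre{} from the \il-theorem $\neg(A\land\neg A)$. Concretely, instantiating $B := \neg A$ in \gintduea{} gives the premisses $A, \neg A \seq$ and $\neg A \seq \neg A$, both derivable even for a fresh atom $A = p$; likewise \ginttre{} applies with the single derivable premiss $A, \neg A \seq$. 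Indeed the paper's own equivalence proposition for bimodal systems exhibits exactly this derivation of $\Box A, \diam\neg A \seq$. So your proof search succeeds rather than fails in 18 of the 24 bimodal logics, and your argument only goes through for the six \unoE-based systems. This is not a repairable bookkeeping slip: the left-to-right halves of \dualbox{} and \dualdiam{} are precisely the ``weak duality'' interaction principles these logics are built around, so they can never witness non-interdefinability.

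What \reqthree{} requires, and what the paper does, is to refute the \emph{right-to-left} implications $\neg\Box\neg A \imp \diam A$ and $\neg\diam\neg A \imp \Box A$, which fail in \emph{all} systems; since the duality axioms are biconditionals, underivability of one direction suffices. There the search really is stuck: from $\neg\Box\neg A \seq \diam A$ no interaction rule applies (the antecedent contains no top-level modal formula), and the only applicable rule is \limp, whose left premiss loops on $\neg\Box\neg A \seq \Box\neg A$; in systems with \grulenbox{} the extra branch with premiss $\seq \neg A$ also fails for a fresh atom $p$. Your idea of making the failure of proof search conclusive via the negated subformula property (plus termination) is a legitimate device --- the paper instead just observes the loop --- but it must be aimed at these implications, not at the ones you chose.
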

\begin{proof}
(\reqone) Any logic is conservative over \il{}.
In fact, the non-modal rules of each sequent calculus are exactly the rules of \gtrei.

(\reqtwo)
Any logic satisfies the disjunction property. In fact, 
given a derivable sequent of the form $\seq A\lor B$,
since no modal rule has such a conclusion, 
the last rule applied in its derivation is necessarily R$\lor$. 
This has premiss $\seq A$ or $\seq B$, which in turn is derivable.

(\reqthree)
Axioms \dualbox{} and \dualdiam{} are not derivable in \X{} for any system \X.
In particular, neither $\neg\Box\neg A \imp \diam A$,
nor $\neg\diam\neg A \imp \Box A$
(\emph{i.e.} the right-to-left implications of \dualbox{} and \dualdiam{}) is derivable.
For instance, if we try to derive bottom-up the sequent $\neg\Box\neg A \seq \diam A$ in \gX,
the only applicaple rule would be \limp.
This has premiss $\neg\Box\neg A \seq \Box\neg A$.
Again, \limp{} is the only applicable rule, with the same sequent as premiss.
Since this is not an initial sequent, we have that $\neg\Box\neg A \seq \diam A$ is not derivable.
The situation is analogous for $\neg\diam\neg A \seq \Box A$.
\end{proof}

\begin{theorem}
The lattice of intuitionistic non-normal bimodal logics contains 24 distinct systems.
\end{theorem}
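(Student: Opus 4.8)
The plan is to exploit the cut-free calculi guaranteed by Theorems \ref{cut elimination without C} and \ref{cut elimination with C}: two systems are separated as soon as one derives a sequent for which a bottom-up proof search in the cut-free calculus of the other visibly fails. I would first record that the $24$ systems are indexed by three independent coordinates. The \emph{interaction/monotonicity level} ranges over the chain \unoE, \dueE, \treE, \unoM{}, ordered by inclusion (the strong interaction \ginttre{} subsumes the neg-interactions \gintduea, \gintdueb, which subsume the weak ones \gintunoa, \gintunob; and \unoM{} further adds \axMbox, \axMdiam{} on top of \treE). The second coordinate is the presence or absence of \axCbox. The third is the \textsf{N}-level, ranging over the chain $\emptyset$, \axNdiam, \axNbox{} -- this is a genuine chain because over any of our bases the interaction axiom \intunoa{} together with \axNbox{} yields \axNdiam, which is precisely the reason a cut-free calculus containing \grulenbox{} must also contain \grulendiam. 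Since $4\times 2\times 3 = 24$, it suffices to prove that this parametrisation is injective.

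To this end I would attach to each threshold of each coordinate a single \emph{marker} sequent that is derivable in a system precisely when that coordinate reaches the threshold, irrespective of the other two coordinates. For the interaction axis, $\Box p, \diam\neg p \seq$ separates \dueE{} from \unoE{} (derivable by \gintduea{} from premisses $p,\neg p\seq$ and $\neg p\seq\neg p$, whereas in \unoE{} the only candidate rules \gintunoa, \gintunob{} demand the underivable $\seq p$ or $p\seq$); and $\Box p, \diam(\neg p\land q)\seq$, with $q$ fresh, separates \treE{} from \dueE{} (derivable by \ginttre{} from $p,\neg p\land q\seq$, while \gintduea{} and \gintdueb{} now require the underivable side premisses $\neg p\seq \neg p\land q$ resp.\ $\neg(\neg p\land q)\seq p$). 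For monotonicity, $\Box(p\land q)\seq\Box p$ separates \unoM{} from \treE{} (\grmbox{} needs only $p\land q\seq p$, while \grebox{} and \greboxc{} additionally require the underivable $p\seq p\land q$). For the \textsf{C}-axis the marker is $\Box p, \Box q\seq\Box(p\land q)$ (the generalised rules \greboxc, \grmboxc{} introduce both boxes at once, whereas the single-formula rules \grebox, \grmbox{} cannot combine them). For the \textsf{N}-axis, $\diam\bot\seq$ marks \axNdiam{} (only \grulendiam{} has a conclusion of this shape) and $\seq\Box\top$ marks \axNbox{} (only \grulenbox{} produces a boxed formula on the right from an empty antecedent).

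The core of the argument is the robustness of each marker: one must check that its non-derivability in the weaker system persists when the other two coordinates are switched on. This reduces, by inspection of the rule schemes, to the observation that the box-introduction and interaction rules never place a $\Box$ or $\diam$ on the right except in the forms $\Gamma,\Box A\seq\Box B$ and $\Gamma,\diam A\seq\diam B$, and that \grulendiam{} fires only on a premiss $C\seq$ with $C$ the formula under the diamond; for each marker the required premiss is independently underivable (e.g.\ $\neg p\land q\seq$ for the \dueE/\treE{} marker, $p\seq p\land q$ for monotonicity, $\bot\land\cdots$-free antecedents for the \textsf{C} and \textsf{N} markers). Granting robustness, any two distinct coordinate triples differ at some coordinate, and the corresponding threshold marker is derivable in exactly one of the two systems; hence the $24$ systems are pairwise distinct.

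I expect the main obstacle to be the \dueE{}-versus-\treE{} separation. Unlike the other markers it is not an axiom of any system and genuinely needs a fresh variable $q$, so that \emph{neither} side condition of \gintduea{} \emph{nor} of \gintdueb{} can be met; one must argue carefully that in cut-free \gdueE{} (and in its \axCbox- and \textsf{N}-variants, where the generalised rules \gintdueac, \gintduebc{} and \grulendiam{} also become available) every rule whose conclusion matches $\Box p,\diam(\neg p\land q)\seq$ has an underivable premiss, using that $\neg p\land q$ is intuitionistically satisfiable. The remaining verifications are routine bottom-up proof-search arguments of the kind already used in the preceding Fact.
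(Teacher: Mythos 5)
Your proposal is correct and follows essentially the same route as the paper: the paper's proof also separates the systems pairwise by exhibiting characteristic axioms or rules derivable in one system but not the other, with non-derivability established by bottom-up proof search in the cut-free calculi of Theorems \ref{cut elimination without C} and \ref{cut elimination with C}, and the one case it spells out (Example \ref{counterexample to cut elimination}) is exactly your \dueE-versus-\treE{} marker with $p$ and $\neg p$ interchanged. Your $4\times 2\times 3$ coordinatisation with threshold markers and the robustness check is simply a clean, explicit organisation of the pairwise comparisons that the paper leaves to the reader.
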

\begin{proof}
We leave to the reader to check that taken two logics \logicone{} and \logictwo{} of the lattice,
we can always find some formulas (or rules) that are derivable in \logicone{} and not in \logictwo, or {\it vice versa}.
This can be easily done by considering the corresponding calculi \glogic$_1$ and \glogic$_2$.
In particular, if \logicone{} is stronger than \logictwo{}, then the characteristic axiom of \logicone{} is not derivable in \logictwo{}.
If instead \logicone{} and \logictwo{} are incomparable, then they both have some characteristic axioms (or rules)
that are not derivable in the other.
For rule \inttre{} we can consider the counterexample to cut elimination in Example \ref{counterexample to cut elimination}
\end{proof}

\begin{definition}[Strict subformula and negated subformula]
For any formulas $A$ and $B$,
we say that  $A$ is a \emph{strict subformula} of $B$ if 
$A$ is a subformula of $B$ and $A\not\equiv B$.
Moreover, 
we say that  $A$ is a \emph{negated subformula} of $B$ if 
there is a formula $C$ such that $C$ is a strict subformula of $B$ and $A\equiv \neg C$.
\end{definition}

\begin{definition}[Subformula property and negated subformula property]
We say that a sequent calculus \gX{} enjoys the \emph{subformula property} 
if all formulas in any derivation are subformulas of the endsequent.
We say that \gX{} enjoys the \emph{negated subformula property} 
if all formulas in any derivation are either subformulas or negative subformulas of the endsequent.
\end{definition}

As an immediate consequence of cut elimination we have the following result.

\begin{theorem}
Any sequent calculus different from \calcolidue{} enjoys the subformula property.
Moreover, calculi \calcolidue{} enjoy the negated subformula property.
\end{theorem}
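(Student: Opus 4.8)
The plan is to obtain both properties as a direct corollary of cut elimination, by a local inspection of the rules. By Theorems~\ref{cut elim monomodal}, \ref{cut elimination without C} and~\ref{cut elimination with C}, every derivable sequent admits a cut-free derivation, so it suffices to reason about cut-free derivations. I would then prove, by induction on the height of such a derivation, that every formula occurring in it lies in the relevant closure of the endsequent --- the subformula closure in general, and the subformula-plus-negated-subformula closure in the case of \calcolidue{}. The point of removing \cut{} is exactly that \cut{} is the only rule whose premisses may carry a formula unrelated to its conclusion (the cut formula being arbitrary); once it is gone, the inductive step reduces to checking, schema by schema, that the premisses of each rule contain only formulas of the required kind relative to the conclusion.

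First I would dispatch the rules that preserve the plain subformula property. A glance at Figure~\ref{g3i} shows that in every rule of \gtrei{} each formula of a premiss is a subformula of a formula of the conclusion. The same holds, by direct reading, for all the modal rules of Figure~\ref{basic modal sequent rules} --- \grebox, \grediam, \grmbox, \grmdiam, \grulenbox, \grulendiam, \greboxc{} and \grmboxc{} --- and for the subformula-preserving interaction rules \gintunoa, \gintunob, \ginttre, \gintunobc{} and \ginttrec{} of Figures~\ref{basic interaction rules} and~\ref{C interaction sequent rules}. Read bottom-up, none of these schemata can introduce anything other than a subformula of the conclusion. Since every calculus outside \calcolidue{} is built solely from rules of this list, the induction immediately yields the subformula property for each of them, including all the monomodal calculi.

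It then remains to examine the rules that characterise \calcolidue{}, namely \gintduea{} and \gintdueb{} (and their generalisations \gintdueac{} and \gintduebc{} used when \axCbox{} is present). Take \gintduea: from the premisses $A, B \seq$ and $\neg A \seq B$ it derives $\G, \Box A, \diam B \seq C$. Besides the subformulas $A$, $B$, $C$, the rule introduces the formula $\neg A$; since $A$ is a strict subformula of $\Box A$, which occurs in the conclusion, $\neg A$ is by definition a negated subformula of the conclusion. The mirror reading applies to \gintdueb, which introduces $\neg B$, and to \gintdueac{} and \gintduebc, which introduce respectively $\neg B$ and the negated formulas $\neg A_i$. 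Hence each of these rules preserves the subformula-plus-negated-subformula closure, and, combined with the inspection of the previous paragraph, the same height induction delivers the negated subformula property for every calculus in \calcolidue{}.

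The one point I expect to require care is the \emph{invariance} of the enlarged closure: I must check that a negated subformula $\neg A$ introduced by a \gintduea-type rule stays within the closure when it is itself analysed higher up in a cut-free derivation. But a formula $\neg A = A \imp \bot$ sitting on the left can only be principal in \limp, whose premisses expose its antecedent $A$ --- already a strict subformula of the endsequent, hence a subformula of it --- and the constant $\bot$, which is harmless. No formula outside the subformula-plus-negated-subformula closure is ever created, so the closure is genuinely stable under all the rules and the induction closes. Everything else being a mechanical, rule-by-rule verification made possible by cut admissibility, this stability check is the only genuinely delicate step.
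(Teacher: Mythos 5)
Your proposal is correct and matches the paper's intended argument exactly: the paper gives no explicit proof, stating the theorem as ``an immediate consequence of cut elimination'' after remarking that \gintduea{} and \gintdueb{} force the negated variant, and your height induction over cut-free derivations with a rule-by-rule inspection is precisely that implicit argument spelled out. Your closing stability check (that $\neg A$, once introduced, is only decomposed by \limp{} into $A$ and the harmless $\bot$) is a genuine refinement the paper glosses over, and it correctly identifies the only point where the closure needs to be verified stable (with $\bot$ tacitly admitted into the closure).
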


Having that the calculi enjoy 
the above subformula properties we can easily adapt the proof of decidability for
\gtrei{} in Troelstra and Schwichtenberg \cite{Troelstra} 
and obtain thereby a proof of decidability for our calculi.

\begin{theorem}[Decidability]
For any intuitionistic non-normal modal logic defined in Section \ref{section monomodal}
and Section \ref{section bimodal}
it is decidable whether a given formula is derivable.
\end{theorem}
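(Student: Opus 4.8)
The plan is to convert the admissibility of \cut{} into a terminating root-first proof-search procedure, adapting the decidability argument for \gtrei{} in Troelstra and Schwichtenberg \cite{Troelstra}. Fix one of our logics \X{} and a formula $A$. By the equivalence between each calculus and its Hilbert system (Proposition \ref{equiv monomodal} and its bimodal counterpart), $A$ is a theorem of \X{} iff the sequent $\seq A$ is derivable in \gX. Since \cut{} is admissible (Theorems \ref{cut elim monomodal}, \ref{cut elimination without C} and \ref{cut elimination with C}), it is in turn enough to decide whether $\seq A$ has a \emph{cut-free} derivation, so I would work entirely within cut-free proof search.

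First I would bound the formulas that may occur in such a search. For every calculus except \calcolidue{}, the subformula property guarantees that all formulas appearing in a cut-free derivation of $\seq A$ lie in the finite set $\mathrm{Sub}(A)$ of subformulas of $A$. For \calcolidue, which enjoys only the negated subformula property because of rules \gintduea{} and \gintdueb, the relevant set is $\mathrm{Sub}(A)$ together with the negations $\neg C$ of the strict subformulas $C$ of $A$; this set is still finite. Call it $S$ in either case.

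Next I would bound the sequents. Since contraction is height-preserving admissible (Proposition \ref{admissibility wk ctr monomodal} and its bimodal analogues), proof search never needs more than one copy of a formula in an antecedent, so antecedents may be treated as \emph{subsets} of $S$ and succedents as single elements of $S$ (or empty); there are only finitely many such sequents. The only source of non-termination under backward rule application is that some rules copy their principal formula into a premiss — notably \limp, where $A\imp B$ is retained. The standard remedy is to search for \emph{loop-free} derivations, in which no sequent is repeated along a branch: if a sequent recurs, the higher occurrence already witnesses its derivability and the detour can be pruned, so loop-free search remains complete. As the space of admissible sequents is finite, every loop-free branch has bounded length, the search tree is finite, and the procedure terminates with a definite answer; this yields decidability uniformly for all our calculi.

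The step that needs genuine care, and where I expect the main obstacle, is the finite branching of the non-standard modal and interaction rules, in particular \greboxc{} and \grmboxc{} together with the $\mathsf{C}$-versions of the interaction rules, which introduce an unbounded number $n\geq 1$ of principal boxed formulas and, for \greboxc, carry an unbounded number of premisses. The obstacle dissolves once one reads such a rule backwards from a conclusion whose antecedent is a fixed finite subset of $S$: the principal formulas $\Box A_1,\dots,\Box A_n$ must be chosen among the finitely many boxed formulas actually present in that antecedent, so there are only finitely many applicable instances, each with finitely many premisses, all of which are sequents built from $S$. Hence backward application stays finitely branching and inside the finite sequent space, and the termination argument above goes through unchanged.
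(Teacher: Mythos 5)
Your proposal is correct and takes essentially the same route as the paper, whose proof is only a brief sketch: reduce to cut-free proof search in the equivalent sequent calculi, then combine the subformula property (the negated subformula property for the calculi \calcolidue) with height-preserving admissibility of contraction to adapt the decidability argument for \gtrei{} from Troelstra and Schwichtenberg, exactly as you do with loop-free search over a finite sequent space. The one point the paper leaves implicit and you rightly make explicit is the finite backward branching of the $n$-ary rules such as \greboxc{} and the modified interaction rules, whose principal formulas must come from the (finitely many) boxed formulas already in the antecedent.
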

%\begin{proof}
%We can show that 
%the number of possible derivations of a given endsequent is finite, and each derivation is finite as well.
%For the intuitionistic fragment \gtrei{} the argument is the same as in Troelstra and Schwichtenberg \cite{Troelstra}.
%Concerning the modal rules, observe that any application bottom-up of a modal rule
%strictly reduces the complexity of the sequent to which it applies.
%\end{proof}

We conclude this section with some remarks about the logics we that have defined.
Firstly, we notice that there are three different systems
(that is \unoE{}, \dueE{}, \treE) that we can make correspond to the same classical logic 
(that is logic \E),
and the same holds for some of their extensions.
This is essentially due to the lost of duality between $\Box$ and $\diam$,
that permits us to consider interactions of different strengths
that are  equally derivable in classical logic but are not intuitionistically equivalent.
We see therefore that the picture of systems that emerge from
a certain set of logic principles
is much richer in the intuitionistic case than in the classical one.

Furthermore, logic \treE{}  (as well as its non-monotonic extensions)
leads us to the following  consideration.
It is normally expected that an intuitionistic modal logic is strictly weaker 
than the corresponding classical modal logic,
mainly because \il{} is weaker than \cl.
However, if we make correspond \treE{} to classical \E, this is not the case anymore.
In fact, rule \inttre{} is classically equivalent to \rmbox,
and hence not derivable in \E.
At the same time, however, it would be odd to consider \treE{} 
as  corresponding to classical \EM,
as neither \axMbox{} nor \axMdiam{} is derivable.

As a consequence,
this particular case suggests that assuming an intuitionistic base
not only allows us to make subtle distinctions
between principles that are not distinguishable in classical logic,
but also gives us the possibility to investigate systems that
in a sense lie between two different classical logics,
and don't correspond essentially to any of the two.

\section{Semantics}\label{section semantics}

In this section we present a semantics for all systems defined in Sections \ref{section monomodal} and \ref{section bimodal}.
As we shall see, the present semantics represents a general framework for intuitionistic modal logics,
that is able to capture modularly further intuitionistic non-normal modal logics as \CK{} and \HW. 
The models are obtained by combining intuitionistic Kripke models and neighbourhood models
(Definition \ref{classical neighbourhood models}) in the following way:

\begin{definition}\label{\intmodel} 
A \emph{\cupledintmodel} (\intmodel) is a tuple 
$\M=\langle\W,\less,\nbox,\ndiam,\V\rangle$,
where  $\W$ is a non-empty set,
$\less$ is a preorder over $\W$,
$\V$ is a hereditary valuation function $\W\longto\atm$
({\it i.e.} $w\less v$ \ implies \ $\V(w)\subseteq\V(v)$),
and $\nbox$, $\ndiam$ are two neighbourhood functions $\W\longto\pow(\pow(\W))$ 
such that:
\begin{center}
$w\less v$ \ implies \ $\nbox(w)\subseteq\nbox(v)$ and $\ndiam(w)\supseteq\ndiam(v)$ \quad ($hp$).
\end{center}
Functions $\nbox$ and $\ndiam$ can be \emph{supplemented}, \emph{closed under intersection}, or \emph{contain the unit}
(cf. properties in Definition \ref{classical neighbourhood models}).
Moreover, letting $\setneg\alpha$ denote the set 
$\{w\in\W \mid \textup{for all } v\more w, v\notin\alpha\}$,
$\nbox$ and $\ndiam$ can be related in the following ways:

\vspace{0.2cm}
\noindent
\begin{tabular}{l l l}
For all $w\in\W$, \ $\nbox(w)\subseteq\ndiam(w)$ & \Connunobis{} (\Connuno); & \\

If $\alpha\in\nbox(w)$, then $\W\setminus\setneg\alpha\in\ndiam(w)$ & \Connduebisaaux{} (\Connduea); \\

If $\setneg\alpha\in\nbox(w)$, then $\W\setminus\alpha\in\ndiam(w)$ & \Connduebisbaux{} (\Conndueb); \\

If $\alpha\in\nbox(w)$ and $\alpha\subseteq\beta$, then $\beta\in\ndiam(w)$ & \Conntrebis{} (\Conntre). \\
\end{tabular}

\vspace{0.2cm}
The forcing relation $w\Vd A$ 
associated to \intmodel s is defined inductively as follows:

\vspace{0.2cm}
\begin{tabular}{l l l}
$w\Vd p$ & iff & $p\in\V(w)$; \\
$w\not\Vd \bot$; \\
$w\Vd B\land C$ & iff & $w\Vd A$ and $w\Vd B$; \\
$w\Vd B\lor C$ & iff & $w\Vd A$ or $w\Vd B$; \\
$w\Vd B \imp C$ & iff & for all $v\more w$, $v \Vd B$ implies $v \Vd C$; \\
$w\Vd\Box B$ & iff & $[B]\in\nbox(w)$; \\
$w\Vd\diam B$ & iff & $\W\setminus[B]\notin\ndiam(w)$. \\
\end{tabular}

\vspace{0.2cm}
\noindent
\intmodel s for monomodal logics 
\monomodalbox{} and \monomodaldiam{}
are defined by removing, respectively, $\ndiam$ or $\nbox$ from the above definition
(as well as the forcing condition for the lacking modality),
and are called \boxmodel s and \diammodel s.

As usual, given a class $\mc C$ of \intmodel s, we say that
a formula $A$ is {\it satisfiable} in $\mc C$ if there are $\M\in\mc C$ and $w\in \M$ such that $w\Vd A$,
and that $A$ is {\it valid} in $\mc C$ if for all $\M\in\mc C$ and $w\in \M$,  $w\Vd A$.
\end{definition}

Observe that we are taking for $\imp$ the satisfaction clause of intuitionistic Kripke models, 
while for $\Box$ and $\diam$ we are taking  the satisfaction clauses of classical neighbourhood models.
Differently from classical neighbourhood models, however,
we have here two neighbourhood functions  $\nbox$ and $\ndiam$ 
(instead of one).
This allows us to consider different relations between the two functions
(\emph{i.e.} the interaction conditions in Definition \ref{\intmodel})
that we make correspond
to interaction axioms (and rules) with different strength.

The way functions $\nbox$ and $\ndiam$ are related to the order $\less$
by condition ($hp$) guarantees that \intmodel s 
preserve the hereditary property of intuitionistic Kripke models:

\begin{proposition}
\intmodel s satisfy the {\it hereditary} property:
for all $A\in\lan$, if $w\Vd A$ and $w\less v$, then $v\Vd A$.
\end{proposition}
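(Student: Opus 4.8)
The plan is to prove the hereditary property by structural induction on the formula $A$, fixing worlds $w\less v$ throughout and showing in each case that $w\Vd A$ implies $v\Vd A$. The whole argument rests on the observation that the model conditions are tailored exactly to the forcing clauses: heredity of $\V$ handles atoms, transitivity of $\less$ handles implication, and condition $(hp)$ handles the two modalities. For the base cases, if $A\equiv p$ then $w\Vd p$ means $p\in\V(w)$, and since $w\less v$ implies $\V(w)\ssq\V(v)$ by heredity of the valuation, we get $p\in\V(v)$, hence $v\Vd p$; the case $A\equiv\bot$ is vacuous since no world forces $\bot$.

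For the propositional connectives $\land$ and $\lor$ I would apply the induction hypothesis directly to the immediate subformulas, which is routine. The case $A\equiv B\imp C$ is the one place where the preorder structure of $\less$ is used: assuming $w\Vd B\imp C$, to establish $v\Vd B\imp C$ I would take an arbitrary $u\more v$ and note that transitivity gives $w\less v\less u$, so $u\more w$; the forcing clause for $\imp$ at $w$ then applies to this very $u$, yielding that $u\Vd B$ implies $u\Vd C$. As in standard intuitionistic Kripke semantics, this case follows purely from transitivity of $\less$ and does not even require the induction hypothesis.

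The modal cases are the crux of the argument, though they remain short, and are where $(hp)$ does the work. For $A\equiv\Box B$, the truth set $[B]$ is a fixed subset of $\W$ independent of the world of evaluation, so $w\Vd\Box B$ means $[B]\in\nbox(w)$; the inclusion $\nbox(w)\ssq\nbox(v)$ from $(hp)$ immediately gives $[B]\in\nbox(v)$, that is, $v\Vd\Box B$. For $A\equiv\diam B$, the clause $w\Vd\diam B$ reads $\W\setminus[B]\notin\ndiam(w)$; since $(hp)$ here provides the \emph{reverse} inclusion $\ndiam(v)\ssq\ndiam(w)$, non-membership in the larger set $\ndiam(w)$ propagates to non-membership in the smaller set $\ndiam(v)$, so $\W\setminus[B]\notin\ndiam(v)$, that is, $v\Vd\diam B$.

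There is no genuine obstacle in this proof; the design of the models makes every case go through mechanically. The only point worth flagging is the reason the two inclusions in $(hp)$ point in opposite directions: $\Box$-forcing is a \emph{membership} condition on $\nbox$, which is monotone and therefore needs $\nbox$ to grow along $\less$, whereas $\diam$-forcing is a \emph{non-membership} condition on $\ndiam$, which is antitone and therefore needs $\ndiam$ to shrink along $\less$. This asymmetry, built into the definition of \intmodel s, is precisely what makes heredity hold for both operators simultaneously.
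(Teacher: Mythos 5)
Your proof is correct and follows essentially the same route as the paper's: induction on $A$, with the non-modal cases standard and the modal cases immediate from condition $(hp)$. Your fuller write-up, including the remark on why the two inclusions in $(hp)$ point in opposite directions, is simply a careful elaboration of the paper's one-line argument.
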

\begin{proof}
By induction on $A$.
For the non-modal cases the proof is standard.
For $A\equiv \Box B, \diam B$ it is immediate by ($hp$).
\end{proof}

Depending on its axioms, to each system 
are associated models with specific properties, 
as summarised in the following table:
\begin{center}
\begin{tabular}{| l | l | l | l | l |}
\cline{1-2}\cline{4-5}
\axMbox & $\nbox$ is supplemented &\qquad\quad& \intunoa{} + \intunob{} & \Connuno \\
\cline{1-2}\cline{4-5}
\axNbox & $\nbox$ contains the unit && \intduea{} + \intdueb{} & \Conndue \\   
\cline{1-2}\cline{4-5}
\axCbox & $\nbox$ is closed under $\cap$ & & \inttre &  \Conntre{} \\
\cline{1-2}\cline{4-5}
\axMdiam & $\ndiam$ is supplemented \\
\cline{1-2}
\axNdiam & $\ndiam$ contains the unit \\
\cline{1-2}
\end{tabular}
\end{center}

\noindent
Conditions \Connduea{} and \Conndueb{} are always considered together and summarised as \Conndue.
In case of supplemented models (\emph{i.e.} when both $\nbox$ and $\ndiam$ are supplemented) %, however,  
it suffices to consider \Connuno{} as the semantic condition corresponding to any interaction axiom (or rule).
In fact, it is immediate to verify that whenever a model $\M$ is \connuno{}, and $\nbox$ or $\ndiam$ is supplemented,
then $\M$ also satisfies \Conndue{} and \Conntre.

Given the semantic properties in the above table, 
we have that \boxmodel s
coincide essentially with the neighbourhood spaces by Goldblatt \cite{Goldblatt}
(although there the property of containing the unit is not considered).
The only difference is that in Goldblatt's spaces 
the neighbourhoods are assumed to be closed with respect to the order, that is:
\begin{center}
If $\alpha\in\nbox(w)$, $v\in\alpha$ and $v\less u$, then $u\in\alpha$ \quad (\lessclosure).
\end{center}
As already observed by Goldblatt, however, this property is irrelevant from the point of view of the validity of formulas,
as a formula $A$ is valid in \boxmodel s (that are supplemented, closed under intersection, contain the unit)
if and only of it is valid in the corresponding \boxmodel s that satisfy also the \lessclosure.
It is easy to verify that the same equivalence holds if we consider \intmodel s for bimodal logics,
provided that the \lessclosure{} is demanded only for the neighbourhoods in $\nbox$,
and not for those in $\ndiam$.

It is immediate to prove soundness of  \intlogic s with respect to the corresponding \intmodel s.

\begin{theorem}[Soundness]
Any \intlogic{} is sound with respect to the corresponding \intmodel s.
\end{theorem}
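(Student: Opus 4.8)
The plan is to argue by induction on the length of a Hilbert derivation in \X, reducing soundness to two modular tasks: showing that every modal axiom of \X{} is valid in the corresponding class of \intmodel s, and that every rule of \X{} preserves validity over that class. Since each Hilbert system is built from \il{} together with a selection of the axioms and rules of Figures \ref{modal axioms} and \ref{axioms interaction}, and each such principle is matched by the table above to a specific closure or interaction condition on $\nbox$ and $\ndiam$, the verification can be carried out one principle at a time. Two reformulations will streamline the arguments: using reflexivity of $\less$ together with the heredity property just proved, an implication $C\imp D$ is valid exactly when in every model every world forcing $C$ also forces $D$; and dually $\neg C$ is valid exactly when no world of any model forces $C$.

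First I would dispatch the propositional and purely monomodal cases, which are routine. The axioms of \il{} and \modusponens{} are sound by the standard soundness argument for intuitionistic Kripke semantics, which applies verbatim because the clauses for the boolean connectives read only $\less$ and $\V$, and heredity has already been established. For the congruence rules \rebox{} and \rediam, if $A\imp B$ and $B\imp A$ are both valid then $[A]=[B]$, whence $w\Vd\Box A$ iff $w\Vd\Box B$ and $w\Vd\diam A$ iff $w\Vd\diam B$ at every world. Axioms \axMbox{} and \axMdiam{} and the rules \rmbox{}, \rmdiam{} then fall out of supplementation (using $[A\land B]=[A]\cap[B]\subseteq[A]$ and $[A]\subseteq[A\lor B]$), axiom \axCbox{} from closure under intersection, and \axNbox, \axNdiam{} and \rulenbox{} from containing the unit, in each case by directly unfolding the forcing clauses for $\Box$ and $\diam$.

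The substantive part is the interaction principles, and the hard part will be handling the operator $\setneg$ correctly. The key observation is the identity $[\neg A]=\setneg[A]$, which follows at once by unfolding the Kripke clause for $\neg A$ --- namely $w\Vd\neg A$ iff no $v\more w$ lies in $[A]$ --- and comparing it with the definition of $\setneg$. Granting this, each interaction reduces to its matching condition. For \intduea{} under \Connduea, if some $v$ forced $\Box A\land\diam\neg A$ then $[A]\in\nbox(v)$ and $\W\setminus[\neg A]\notin\ndiam(v)$, while \Connduea{} applied to $\alpha=[A]$ yields $\W\setminus\setneg[A]\in\ndiam(v)$; since $\setneg[A]=[\neg A]$ this is a contradiction, so $\neg(\Box A\land\diam\neg A)$ holds everywhere. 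For \inttre{} under \Conntre, validity of the premiss $\neg(A\land B)$ forces $[A]\cap[B]=\emptyset$, i.e.\ $[A]\subseteq\W\setminus[B]$; were some $v$ to force $\Box A\land\diam B$ we would have $[A]\in\nbox(v)$ and $\W\setminus[B]\notin\ndiam(v)$, directly contradicting \Conntre. The cases \intunoa, \intunob{} under \Connuno{} and \intdueb{} under \Conndueb{} are entirely analogous unfoldings.

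The only genuinely delicate point is keeping $\setneg$ aligned with syntactic negation as one passes through the interaction conditions: $\setneg$ quantifies over all $\more$-successors, whereas the neighbourhood clauses for $\Box$ and $\diam$ are evaluated at the single world $w$. I expect this bookkeeping --- verifying $[\neg A]=\setneg[A]$ and tracking it faithfully inside $\W\setminus\setneg\alpha$ and $\W\setminus\alpha$ --- rather than any isolated difficult step, to be where care is needed; everything else is a mechanical match of a forcing clause against its semantic condition, carried out uniformly across all systems, which is exactly what gives the soundness result its modular character.
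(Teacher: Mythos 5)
Your proposal is correct and follows essentially the same route as the paper: soundness is reduced to checking each modal axiom and rule against its matching neighbourhood condition from the table, and the one non-trivial ingredient you isolate --- the identity $\setneg[A]=[\neg A]$ used for \intduea{} and \intdueb{} --- is precisely the single remark the paper's own (much terser) proof makes. Your write-up simply spells out the case analysis that the paper declares ``immediate''.
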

\begin{proof}
It is immediate to prove that a given axiom is valid whenever the corresponding property is satisfied.
For \intduea{} and \intdueb{} notice that $\setneg[A]=[\neg A]$.
\end{proof}

We now prove completeness by the canonical model construction.
In the following, let \logic{} be any \intlogic{} and $\lan$ be the corresponding language.
We call \logic-{\it prime} any set $X$ of formulas of $\lan$ which is
consistent ($X\not\vdash_\logic\bot$),
closed under derivation ($X\vdash_\logic A$ implies $A\in X$)
and such that if $(A\lor B)\in X$, then $A\in X$ or $B\in X$.
For all $A\in \lan$, we denote with $\up A$ the class of prime sets $X$ such that $A\in X$.
The standard properties of prime sets hold, in particular:

\begin{lemma}\label{prime sets}
%$(a)$ Any consistent set is included in a prime set.
$(a)$ If $X\not\vd_\logic A\imp B$, then there is a $\logic$-prime set $Y$ such that $X\cup\{A\}\subseteq Y$ and $B\notin Y$.
$(b)$ For any $A,B\in\lan$, $\up A\subseteq\up B$ implies $\vdash_\logic A\imp B$. %\nb{Riformulare (a)?}
\end{lemma}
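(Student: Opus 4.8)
The plan is to prove $(a)$ by a Lindenbaum-style prime extension argument and to read off $(b)$ as an immediate consequence. For $(b)$ I would argue by contraposition: if $\not\vd_\logic A\imp B$, then applying $(a)$ with $X=\emptyset$ yields a $\logic$-prime set $Y$ with $A\in Y$ and $B\notin Y$, so $Y\in\up A$ while $Y\notin\up B$, contradicting $\up A\subseteq\up B$. Hence $\up A\subseteq\up B$ forces $\vd_\logic A\imp B$, and the whole statement reduces to establishing $(a)$.

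For $(a)$, first I would use the deduction theorem for $\vd_\logic$ --- available because $\imp$-$1$, $\imp$-$2$ and \modusponens{} are present and the modal rules act only as rules of proof on theorems, so they never discharge hypotheses --- to restate the hypothesis $X\not\vd_\logic A\imp B$ as $X\cup\{A\}\not\vd_\logic B$. Let $\mathcal F$ be the family of all sets $Z$ of formulas of $\lan$ with $X\cup\{A\}\subseteq Z$ and $Z\not\vd_\logic B$. It is non-empty, and since every derivation uses only finitely many premisses, the union of any $\subseteq$-chain in $\mathcal F$ again lies in $\mathcal F$; so by Zorn's lemma (or, since $\lan$ is countable, by an explicit enumeration of formulas, adding each to the current set whenever doing so keeps $B$ non-derivable) $\mathcal F$ has a maximal element $Y$. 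By construction $X\cup\{A\}\subseteq Y$ and $Y\not\vd_\logic B$, whence $B\notin Y$.

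It remains to check that $Y$ is $\logic$-prime. Closure under derivation follows from maximality: if $Y\vd_\logic C$ then $Y\cup\{C\}$ derives nothing beyond $Y$, so it is still in $\mathcal F$ and maximality gives $C\in Y$. Consistency is immediate, for $Y\vd_\logic\bot$ would give $Y\vd_\logic B$ by \efq{}. The crucial point is the disjunction property: assuming $(C\lor D)\in Y$ but $C,D\notin Y$, maximality yields $Y\cup\{C\}\vd_\logic B$ and $Y\cup\{D\}\vd_\logic B$, hence $Y\vd_\logic C\imp B$ and $Y\vd_\logic D\imp B$ by the deduction theorem; axiom $\lor$-$3$ together with two applications of \modusponens{} then gives $Y\vd_\logic (C\lor D)\imp B$, and since $(C\lor D)\in Y$ a final \modusponens{} yields $Y\vd_\logic B$, contradicting $Y\not\vd_\logic B$. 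I expect the main obstacle to be exactly this disjunction-property step, which is where choosing a maximal $B$-avoiding extension (rather than an arbitrary consistent one) is essential; the only other delicate point is the justification of the deduction theorem in the modal setting, which is precisely why it matters that the modal rules are rules of proof.
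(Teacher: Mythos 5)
Your proof is correct and is precisely the standard Lindenbaum-style saturation argument that the paper itself leaves implicit (the lemma is stated there without proof, as a ``standard property'' of prime sets). All the delicate points --- reading the modal rules as rules of proof so that the deduction theorem holds, taking a maximal $B$-avoiding extension rather than merely a consistent one, and using axiom $\lor$-$3$ to obtain the disjunction property --- are handled correctly, and part $(b)$ follows from part $(a)$ with $X=\emptyset$ exactly as you say.
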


\begin{lemma}\label{canonical model}
Let \logic{} be any logic non containing axioms \axMbox{} and \axMdiam.
The {\it canonical model} $\Mc$ for \logic{}
is defined as
the tuple $\langle \Wc, \lessc, \nboxc, \ndiamc, \Vc \rangle$, where:

\vspace{0.1cm}
$\bullet$
$\Wc$ is the class of \logic-prime sets;

$\bullet$
for all $X,Y\in\Wc$, $X\lessc Y$ if and only if $X\subseteq Y$;

$\bullet$
$\nboxc(X) =\{\up A \mid \Box A\in X\}$;

$\bullet$
$\ndiamc(X) =\pow(\Wc) \setminus \{\Wc\setminus \up A \mid \diam A\in X\}$;

$\bullet$
$\Vc(X)=\{p\in\lan \mid p\in X\}$.

\vspace{0.1cm}
\noindent
Then
for all $X\in \Wc$ and all $A\in\lan$ we have
\begin{center}$X\Vdash A \textup{ \quad iff \quad } A\in X$.\end{center} 

\noindent
Moreover:
(i) If \logic{} contains \axNbox, then $\nboxc$ contains the unit;

(ii) If \logic{} contains \axCbox, then $\nboxc$ is closed under intersection;

(iii) If \logic{} contains \axNdiam, then $\ndiamc$ contains the unit;

(iv) If \logic{} contains \intunoa{} and \intunob, then $\Mc$ is \connuno;

(v) If \logic{} contains \intduea, then $\Mc$ is \connduea;

(vi) If \logic{} contains \intdueb, then $\Mc$ is \conndueb;

(vii) If \logic{} contains \inttre, then $\Mc$ is \conntre.
\end{lemma}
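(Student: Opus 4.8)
The plan is to check first that $\Mc$ is a genuine \intmodel, then to prove the biconditional $X\Vd A \iff A\in X$ by induction on $A$, after which the closure clauses (i)--(vii) follow by short arguments. For the frame conditions: $\Wc\neq\emptyset$ since $\logic$ is consistent, so Lemma~\ref{prime sets}(a) applied to $\not\vd_\logic \top\imp\bot$ produces a $\logic$-prime set; the relation $\lessc\,=\,\subseteq$ is a preorder; $\Vc$ is hereditary because $X\subseteq Y$ gives $\Vc(X)\subseteq\Vc(Y)$; and $(hp)$ holds because $\nboxc$ is visibly monotone in $X$ (if $\Box A\in X\subseteq Y$ then $\Box A\in Y$) while $\ndiamc$ is antitone, since enlarging $X$ enlarges the removed family $\{\Wc\setminus\up A \mid \diam A\in X\}$.

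For the truth lemma I would induct on $A$. Atoms and $\bot$ are immediate, and $\land,\lor$ use closure under derivation together with primeness. The implication case is the only boolean one needing care: if $B\imp C\notin X$ then $X\not\vd_\logic B\imp C$, and Lemma~\ref{prime sets}(a) yields a prime $Y\supseteq X$ with $B\in Y$, $C\notin Y$, which by the induction hypothesis refutes the forcing clause; the converse is modus ponens. For the modal cases the induction hypothesis gives $[B]=\up B$, so $X\Vd\Box B$ unfolds to $\up B\in\nboxc(X)$ and $X\Vd\diam B$ to $\Wc\setminus\up B\notin\ndiamc(X)$. Each reduces to the existence of a formula $A$ with $\up A=\up B$ and $\Box A\in X$ (respectively $\diam A\in X$); by Lemma~\ref{prime sets}(b) in both directions this gives $\vd_\logic A\imp B$ and $\vd_\logic B\imp A$, and then the rule \rebox{} (respectively \rediam) plus closure under derivation turns membership of $\Box A$ (respectively $\diam A$) into membership of $\Box B$ (respectively $\diam B$). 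This is exactly where I use that $\logic$ lacks \axMbox, \axMdiam, hence contains \rebox{} and \rediam.

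The closure clauses are then short. For (i), $\vd_\logic\Box\top$ gives $\up\top=\Wc\in\nboxc(X)$; for (ii), $\up A\cap\up B=\up(A\land B)$ together with \axCbox{} gives $\Box(A\land B)\in X$; for (iii), if $\Wc\notin\ndiamc(X)$ then $\up C=\emptyset$ for some $\diam C\in X$, so $\vd_\logic\neg C$ (by Lemma~\ref{prime sets}(a) contrapositively), whence $\diam\bot\in X$ via \rediam, contradicting \axNdiam. For (v)--(vii) the key computation is $\setneg\up A=\up{\neg A}$, the canonical instance of the equality $\setneg[A]=[\neg A]$ already noted for soundness. Then (v) and (vi) reduce, through Lemma~\ref{prime sets}(b) and \rediam/\rebox, to the joint inconsistencies asserted by \intduea{} and \intdueb; and (vii) reduces, using that $\up A\subseteq\Wc\setminus\up C$ is equivalent to $\vd_\logic\neg(A\land C)$, to a single application of the rule \inttre.

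The main obstacle is (iv). To prove $\nboxc(X)\subseteq\ndiamc(X)$ I must exclude $\up A=\Wc\setminus\up C$ with $\Box A\in X$ and $\diam C\in X$, and this equality is equivalent to $\vd_\logic A\lor C$ together with $\vd_\logic\neg(A\land C)$. The subtlety is that \intunoa{} and \intunob{} only handle the situations where one of $A,C$ is a theorem and the other refutable, so they look too weak for an arbitrary such pair. The crucial and least obvious ingredient is the \emph{disjunction property} (requirement \reqtwo, established for all these logics): from $\vd_\logic A\lor C$ it gives $\vd_\logic A$ or $\vd_\logic C$. In the first case $\vd_\logic\neg(A\land C)$ forces $\vd_\logic\neg C$, and the rule \gintunoa{} (derivable from \intunoa) yields $\vd_\logic\neg(\Box A\land\diam C)$; in the second case \gintunob{} applies symmetrically. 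Either way $\Box A,\diam C\in X$ makes $X$ inconsistent, which is the required contradiction. I expect this invocation of the disjunction property to be the step most likely to be overlooked.
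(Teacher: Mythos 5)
Your proposal is correct and follows essentially the same route as the paper's proof: the same inductive truth lemma with the modal cases settled via Lemma~\ref{prime sets} and \rebox{}/\rediam{} plus closure under derivation, the same short arguments for (i)--(iii) and (v)--(vii) resting on $\up \neg A=\setneg\up A$, and in particular the same appeal to the disjunction property in case (iv) to split $\vd_\logic A\lor C$ into $\vd_\logic A$ or $\vd_\logic C$ before invoking \intunoa{} and \intunob{}. The only cosmetic differences are that you verify the frame conditions of $\Mc$ more explicitly than the paper (which only notes the condition $(hp)$) and that you phrase (iv) through the derivable rules rather than deriving $\Box\top,\diam\bot\in X$ and contradicting the axiom instances directly.
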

\begin{proof}
By induction on $A$ we prove that $X\Vdash A$ if and only if $A\in X$.
If $A\equiv p,\bot,B\land C,B\lor C,B\imp C$ the proof is immediate. 
If $A\equiv\Box B$:
From right to left, assume $\Box B\in X$. 
Then by definition $\up B\in \nboxc(X)$, and
by inductive hypothesis,  $\up B = [B]_{\Mc}$,
therefore $X\Vd \Box B$.
From left to right, assume $X\Vd \Box B$.
Then we have $[B]_{\Mc}\in\nboxc(X)$, and, by inductive hypothesis, $[B]_{\Mc} = \up B$.
By definition, this means that there is $C\in \lan$ such that $\Box C\in X$ and $\up C=\up B$.
Then by Lemma \ref{prime sets}, $\vd_\logic C\imp B$ and $\vd_\logic B\imp C$.
Thus by \rebox, $\vd_\logic \Box C \imp \Box B$,
and, by closure under derivation, $\Box B\in X$.
If $A\equiv\diam B$:
From right to left, assume $\diam B\in X$. 
Then by definition $\Wc\setminus\up B\notin \ndiamc(X)$, and
by inductive hypothesis, $\up B = [B]_{\Mc}$,
therefore $X\Vd \diam B$.
From left to right, assume $X\Vd \diam B$.
Then we have $\Wc\setminus[B]_{\Mc}\notin\ndiamc(X)$, and, by inductive hypothesis, $\Wc\setminus\up B\notin\ndiamc(X)$.
This means that there is $C\in \lan$ such that $\diam C\in X$ and $\up C=\up B$.
Thus, $\vd_\logic C\imp B$ and $\vd_\logic B\imp C$,
therefore by \rediam, $\vd_\logic \diam C \imp \diam B$.
By closure under derivation we then have $\diam B\in X$.

Notice also that $\Mc$ is well defined:
It follows immediately by the definition that $X\lessc Y$ implies both $\nboxc(X)\subseteq\nboxc(Y)$ and $\ndiamc(X)\supseteq\ndiamc(Y)$.

Points (i)--(vi) are proved as follows:
(i)
$\Box\top\in X$ for all  $X\in\Wc$. Then by definition $\Wc=\up\top\in\nboxc(X)$.
(ii)
Assume $\alpha,\beta\in \Nc(X)$. Then there are $A,B\in\lan$ such that
$\Box A, \Box B \in X$, $\alpha=\up A$ and $\beta=\up B$. 
By closure under derivation
we have $\Box(A\land B) \in X$, and, by definition,
$\up(A\land B)\in \nboxc(X)$, where $\up(A\land B)= \up A\cap \up B = \alpha\cap\beta$.
(iii)
$\neg\diam\bot\in X$ for all $X\in\Wc$, thus by consistency, $\diam\bot\notin X$. 
If $\Wc\setminus\up\bot\notin\nboxc(X)$, 
then there is $A\in\lan$ such that $\up A=\up\bot$ and $\diam A\in X$, that implies $\diam\bot\in X$.
Therefore $\Wc=\Wc\setminus\up\bot\in\nboxc(X)$.

(iv)
Assume by contradiction that $\alpha\in\nboxc(X)$ and $\alpha\notin\ndiamc(X)$.
Then there are $A,B\in\lan$ such that $\alpha=\up A$, $\alpha=\Wc\setminus\up B$, and $\Box A,\diam B\in X$,
therefore $\up A=\Wc\setminus\up B$.
By the properties of prime sets, this implies $\vd_\logic \neg(A\land B)$ and $\vd_\logic A\lor B$,
and by the disjunction property, $\vd_\logic A$ or $\vd_\logic B$.
If we assume $\vd_\logic A$, then $\vd_\logic A\imp\coimp\top$ and $\vd_\logic B\imp\coimp\bot$.
Therefore by \rebox{} and \rediam, $\vd_\logic \Box A\imp\Box\top$ and $\vd_\logic \diam B\imp\diam\bot$,
thus by closure under derivation, $\Box\top, \diam\bot \in X$.
But $\neg(\Box\top\land\diam\bot)\in X$,
against the consistency of prime sets.
If we now assume $\vd_\logic B$, then $\vd_\logic B\imp\coimp\top$ and $\vd_\logic A\imp\coimp\bot$.
We obtain an analogous contradiction by 
$\neg(\diam\top\land\Box\bot)$.

(v) 
Assume $\alpha\in\nboxc(X)$.
Then there is $A\in\lan$ such that $\alpha=\up A$ and $\Box A\in X$.
Thus, by \intduea{} and consistency of $X$, $\diam\neg A\notin X$.
Therefore $\Wc\setminus\up\neg A \in\ndiamc(X)$
(otherwise there would be $B\in\lan$ such that $\up B=\up\neg A$ and $\diam B\in X$,
which implies $\diam\neg A\in X$).
Since $\up\neg A=\setneg\up A$
($\up\neg A=[\neg A]_{\Mc}=\setneg [A]_{\Mc}=\setneg\up A$)
and $\setneg\up A=\setneg\alpha$, we have the claim.

(vi)
By contraposition, assume $\Wc\setminus\alpha\notin\ndiamc(X)$.
Then there is $A\in\lan$ such that $\Wc\setminus\alpha=\Wc\setminus\up A$ and $\diam A\in X$.
Thus $\alpha=\up A$, and by \intdueb, $\Box\neg A\notin X$.
Therefore $\up\neg A\notin\nboxc(X)$
(otherwise there would be $\Box B\in X$ such that $\up\neg A=\up B$, 
which implies $\Box\neg A\in X$).
Since $\up\neg A=\setneg\up A=\setneg\alpha$, we have the claim.

(vii)
Assume by contradiction that $\alpha\in\nboxc(X)$, $\alpha\subseteq\beta$, and $\beta\notin\ndiamc(X)$.
Then there are $A,B\in\lan$ such that $\alpha=\up A$, $\beta=\Wc\setminus\up B$ and $\Box A,\diam B \in X$.
Moreover, $\up A\subseteq\Wc\setminus\up B$,
which implies $\up A\cap\up B=\emptyset$.
Thus $\vd_\logic\neg(A\land B)$; 
and by \inttre{} we have $\vd_\logic\neg(\Box A\land\diam B)$, against the consistency of $X$.
\end{proof}

For logics containing \axMbox{} or \axMdiam{} we slightly change the definition of canonical model.
We shorten the proof by considering, instead of \axMbox{} and \axMdiam, the syntactically equivalent rules 
\rmbox{} and \rmdiam.

\begin{lemma}\label{supplemented canonical model}
Let \logic{} be any logic containing axioms \axMbox{} and \axMdiam.
The {\it canonical model} $\Mcplus$ for $\logic$ is the tuple $\langle\Wc,\lessc,\nbox^+,\ndiam^+,\Vc\rangle$,
where $\Wc, \lessc, \Vc$ are defined as in Lemma \ref{canonical model}, and: 

\vspace{0.1cm}
$\nboxplus(X) =\{\alpha\subseteq \Wc \mid \textup{there is } A\in \lan \textup{ s.t. }  \Box A\in X \textup{ and} \up A\subseteq \alpha\}$;

$\ndiamplus(X) =\pow(\Wc) \setminus \{\alpha\subseteq \Wc  \mid 
\textup{there is } A\in \lan \textup{ s.t. }  \diam A\in X  \textup{ and } \alpha\subseteq\Wc\setminus \up A\}$.

\vspace{0.1cm}
\noindent
Then we have that 
$X\Vdash A$ if and only if $A\in X$. 
Moreover,
points (i)--(iii) of Lemma \ref{canonical model} still hold.
Finally, (iv) if \logic{} contains \inttre, then $\Mcplus$ is \connuno.
\end{lemma}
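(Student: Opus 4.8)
The plan is to follow the proof of Lemma~\ref{canonical model} almost line by line, the only change being that $\nboxplus$ and $\ndiamplus$ are now \emph{supplemented by construction}: enlarging a set $\alpha$ preserves the defining inclusion $\up A\subseteq\alpha$ for $\nboxplus$, and the complementary inclusion $\alpha\subseteq\Wc\setminus\up A$ used to exclude sets from $\ndiamplus$ only gets harder to satisfy as $\alpha$ grows, so $\ndiamplus$ is supplemented too. First I would record that $\Mcplus$ is well defined, i.e.\ that $(hp)$ holds. If $X\lessc Y$, that is $X\subseteq Y$, then any witness $\Box A\in X$ showing $\alpha\in\nboxplus(X)$ also lies in $Y$, so $\nboxplus(X)\subseteq\nboxplus(Y)$; dually, any witness $\diam A\in X$ showing $\alpha\notin\ndiamplus(X)$ lies in $Y$, so $\ndiamplus(X)\supseteq\ndiamplus(Y)$.

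The core is again the truth lemma $X\Vd A$ iff $A\in X$, by induction on $A$; the propositional cases are verbatim those of Lemma~\ref{canonical model}, so I would treat only the modal ones. For $A\equiv\Box B$ the induction hypothesis gives $[B]_{\Mcplus}=\up B$, so $X\Vd\Box B$ iff $\up B\in\nboxplus(X)$, i.e.\ iff there is $C$ with $\Box C\in X$ and $\up C\subseteq\up B$. Right-to-left is immediate by taking $C\equiv B$. For left-to-right I would use Lemma~\ref{prime sets}$(b)$: $\up C\subseteq\up B$ gives $\vd_\logic C\imp B$, whence by \rmbox{} (the rule form of \axMbox) $\vd_\logic\Box C\imp\Box B$, and closure under derivation yields $\Box B\in X$. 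The case $A\equiv\diam B$ is dual: unwinding the complement in the definition of $\ndiamplus$, one finds $X\Vd\diam B$ iff $\Wc\setminus\up B\notin\ndiamplus(X)$ iff there is $C$ with $\diam C\in X$ and $\up C\subseteq\up B$, and one concludes exactly as before with \rmdiam{} replacing \rmbox.

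The closure properties go through as in Lemma~\ref{canonical model}. For (i), $\Box\top\in X$ and $\up\top=\Wc$ give $\Wc\in\nboxplus(X)$. For (ii), witnesses $\Box A,\Box B\in X$ with $\up A\subseteq\alpha$, $\up B\subseteq\beta$ yield $\Box(A\land B)\in X$ and $\up(A\land B)=\up A\cap\up B\subseteq\alpha\cap\beta$, so $\alpha\cap\beta\in\nboxplus(X)$. For (iii), $\Wc\in\ndiamplus(X)$ fails only if some $\diam A\in X$ has $\Wc\subseteq\Wc\setminus\up A$, i.e.\ $\up A=\emptyset=\up\bot$; then $\vd_\logic A\imp\bot$, so by \rmdiam{} $\diam\bot\in X$, contradicting \axNdiam. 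Finally for (iv), take $\alpha\in\nboxplus(X)$ with witness $\Box A\in X$, $\up A\subseteq\alpha$, and suppose towards a contradiction that $\alpha\notin\ndiamplus(X)$, so some $\diam B\in X$ satisfies $\alpha\subseteq\Wc\setminus\up B$. Then $\up A\cap\up B=\emptyset$, hence $\vd_\logic\neg(A\land B)$, and \inttre{} gives $\vd_\logic\neg(\Box A\land\diam B)$; since $\Box A,\diam B\in X$ this contradicts consistency, so $\nboxplus(X)\subseteq\ndiamplus(X)$.

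The only genuinely new step relative to Lemma~\ref{canonical model}, and the one I would treat as the crux, is the left-to-right direction of the two modal cases. There the supplemented neighbourhood supplies merely a witness $C$ with $\up C\subseteq\up B$ rather than the equality $\up C=\up B$ available in the non-supplemented construction, so monotonicity is indispensable to pass from $C\imp B$ to $\Box C\imp\Box B$ (resp.\ $\diam C\imp\diam B$). This is precisely why $\Mcplus$ is used only for logics containing \axMbox{} and \axMdiam, and why the argument invokes \rmbox{} and \rmdiam{} rather than \rebox{} and \rediam.
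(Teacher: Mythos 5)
Your proof is correct and takes essentially the same route as the paper's: the truth lemma by induction, with the supplemented neighbourhoods yielding a witness $C$ with $\up C\subseteq\up B$ and the monotonicity rules \rmbox{}/\rmdiam{} closing the left-to-right modal cases, the closure properties adapted from Lemma \ref{canonical model}, and point (iv) by contradiction via \inttre. Indeed, your invocation of \rmdiam{} in the $\diam$ case is the correct rule choice --- the paper's text cites \rediam{} at that step, which is evidently a slip, since only the single implication $\vd_\logic C\imp B$ is available there.
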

\begin{proof}
It is immediate to verify that both $\nboxplus$ and $\ndiamplus$ are supplemented.
As before, the proof is by induction on $A$. We only show the modal cases.
If $A\equiv\Box B$:
From right to left, assume $\Box B\in X$. 
Then by definition $\up B\in \nboxplus(X)$, and
by inductive hypothesis, $\up B = [B]_{\Mcplus}$,
therefore $X\Vd \Box B$.
From left to right, assume $X\Vd \Box B$.
Then we have $[B]_{\Mcplus}\in\nboxplus(X)$, and, by inductive hypothesis, $[B]_{\Mcplus} = \up B$.
By definition, this means that there is $C\in \lan$ such that $\Box C\in X$ and $\up C\subseteq\up B$, 
which implies $\vd_\logic C\imp B$.
Thus by \rmbox, $\vd_\logic \Box C \imp \Box B$,
and, by closure under derivation, $\Box B\in X$.
If $A\equiv\diam B$:
From right to left, assume $\diam B\in X$. 
Then by definition $\Wc\setminus\up B\notin \ndiamplus(X)$, and
by inductive hypothesis, $\up B = [B]_{\Mcplus}$,
therefore $X\Vd \diam B$.
From left to right, assume $X\Vd \diam B$.
Then we have $\Wc\setminus[B]_{\Mc}\notin\ndiamplus(X)$, and, by inductive hypothesis, $\Wc\setminus\up B\notin\ndiamplus(X)$.
This means that there is $C\in \lan$ such that $\diam C\in X$ and 
$\Wc\setminus B \subseteq \Wc\setminus C$,
that is $\up C\subseteq \up B$.
Thus, $\vd_\logic C\imp B$,
therefore by \rediam, $\vd_\logic \diam C \imp \diam B$.
By closure under derivation we then have $\diam B\in X$.

Points (i)--(iii) are very similar to points (i)--(iii) in Lemma \ref{canonical model}.
(iv) By contradiction, assume $\alpha\in\nboxplus(X)$ and $\alpha\notin\ndiamplus(X)$.
Then there are $A,B\in\lan$ such that $\up A\subseteq\alpha$, $\alpha\subseteq\Wc\setminus\up B$, and $\Box A, \diam B\in X$.
Therefore $\up A\subseteq\Wc\setminus\up B$, which implies $\vd_{\textup{\logic}} \neg(A\land B)$.
By \inttre{} we then have $\neg(\Box A \land \diam B)\in X$, against the consistency of $X$.
\end{proof}

\begin{theorem}[Completeness]\label{theorem completeness}
Any intuitionistic non-normal bimodal logic is complete with respect to the corresponding \intmodel s.
\end{theorem}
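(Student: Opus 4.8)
The plan is to prove completeness by contraposition via the canonical-model constructions of Lemmas \ref{canonical model} and \ref{supplemented canonical model}. Fix an intuitionistic non-normal bimodal logic \logic{} and a formula $A$ with $\not\vd_\logic A$; the goal is to produce a \intmodel{} in the class corresponding to \logic{} together with a world that does not force $A$. First I would split into two cases following the methodology of Section \ref{section bimodal}: either \logic{} contains neither \axMbox{} nor \axMdiam{} (the non-monotonic systems erected on \unoE, \dueE, \treE), or it contains both (the monotonic systems erected on \unoM). Because the calculi are required to contain either both E-rules or both M-rules, no bimodal logic of the family has exactly one monotonicity axiom, so these two cases are exhaustive and match precisely the hypotheses of the two lemmas.

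In the non-monotonic case I would take the canonical model $\Mc$ of Lemma \ref{canonical model}, and in the monotonic case the supplemented canonical model $\Mcplus$ of Lemma \ref{supplemented canonical model}. In either case the underlying structure is a genuine \intmodel: $\lessc$ is a preorder, $\Vc$ is hereditary, and condition ($hp$) holds since $X\lessc Y$ gives $\nboxc(X)\subseteq\nboxc(Y)$ and $\ndiamc(X)\supseteq\ndiamc(Y)$ (likewise for $\nboxplus,\ndiamplus$), as noted in the cited lemmas. Moreover the same lemmas certify that the canonical model lies in the semantic class associated with \logic: points (i)--(iii) pair \axNbox, \axCbox{} and \axNdiam{} with the unit and intersection-closure conditions, while points (iv)--(vii) (resp.\ point (iv) of Lemma \ref{supplemented canonical model}) pair the interaction axioms with the conditions \Connuno, \Conndue{} and \Conntre.

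To locate the falsifying world I would invoke Lemma \ref{prime sets}$(a)$. Since $\vd_\logic\top$ while $\not\vd_\logic A$, we get $\not\vd_\logic\top\imp A$, so there is a $\logic$-prime set $Y$ with $\top\in Y$ and $A\notin Y$; hence $Y\in\Wc$. The truth lemma proved in each canonical-model lemma (the equivalence $X\Vd B$ iff $B\in X$) then yields $Y\not\Vd A$. Thus $A$ fails at $Y$ in a model of the class corresponding to \logic, so $A$ is not valid, and completeness follows by contraposition.

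The one genuinely delicate point is the bookkeeping in the monotonic case. There Lemma \ref{supplemented canonical model} establishes only the weakest interaction condition \connuno, whereas soundness pairs the rule \inttre{} (the sole interaction present in every monotonic system) with the stronger \Conntre. The gap is closed by the remark following the correspondence table: once both $\nbox$ and $\ndiam$ are supplemented, \connuno{} already entails \Conndue{} and \Conntre, so $\Mcplus$ does satisfy the interaction condition demanded by \logic. I would spell out this implication explicitly; the rest of the argument is a direct assembly of the preceding lemmas.
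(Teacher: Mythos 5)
Your proposal is correct and follows essentially the same route as the paper: contraposition, Lemma \ref{prime sets}$(a)$ applied to $\top\imp A$ to obtain a prime set omitting $A$, the truth lemma of Lemma \ref{canonical model} (resp.\ Lemma \ref{supplemented canonical model} in the monotonic case), and the lemmas' clauses (i)--(vii) to place the canonical model in the right class. In fact you are more careful than the paper's two-line proof, which hides behind ``by the properties of $\Mc_{(+)}$'' exactly the point you make explicit, namely that in the monotonic case $\Mcplus$ is only shown to be \connuno{} and one must use supplementation of $\nboxplus$ and $\ndiamplus$ to infer \Conntre.
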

\begin{proof}
Assume $\not\vd_\logic A$. Then $\not\vd_\logic\top\imp A$, 
thus, by Lemma \ref{prime sets}, there is a $\logic$-prime set $\Pi$ such that $A\notin\Pi$.
By definition, $\Pi\in\Mc_{(+)}$, and by Lemma \ref{canonical model}, $\Mc_{(+)},\Pi\not\Vd A$.
By the properties of $\Mc_{(+)}$ we obtain completeness with respect to the corresponding models.
\end{proof}

It is immediate to verify that
by removing $\ndiamc$ ($\ndiamplus$) or $\nboxc$ ($\nboxplus$) from the definition of $\Mc$ ($\Mcplus$),
we obtain analogous results for monomodal logics.

\begin{theorem}
Any intuitionistic non-normal monomodal logic is complete with respect to the corresponding \intmodel s.
\end{theorem}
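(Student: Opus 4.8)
The plan is to obtain completeness for the monomodal logics as a direct specialisation of the bimodal completeness proof (Theorem \ref{theorem completeness}), exploiting the fact that the monomodal language contains only one of the two modalities and that monomodal logics carry no interaction axioms. First I would fix a monomodal logic \logic{} --- say a $\Box$-logic over $\lbox$, the $\diam$-case being symmetric --- and assume $\not\vd_\logic A$. Exactly as in the bimodal case, I would invoke Lemma \ref{prime sets}$(a)$ to produce a \logic-prime set $\Pi$ with $A\notin\Pi$.

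Next I would build the canonical model by reusing the construction of Lemma \ref{canonical model} (or the $\nbox$-part of Lemma \ref{supplemented canonical model} when \logic{} contains \axMbox), simply deleting the neighbourhood function $\ndiamc$ (resp.\ $\ndiamplus$) together with the forcing clause for $\diam$, so as to obtain a \boxmodel. The truth lemma $X\Vd B$ iff $B\in X$ is then proved by induction on $B$ verbatim as before: the boolean and implicational cases are unchanged, and the $\Box$-case is exactly the one already treated, relying only on \rebox{} (or \rmbox{} in the supplemented case) together with closure under derivation. Since $\diam$ does not occur in $\lbox$, no clause for it is needed and the induction is self-contained.

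The remaining point is to check that the surviving model satisfies the closure conditions matching the axioms of \logic. Here I would observe that parts (i)--(iii) of Lemma \ref{canonical model} (and of Lemma \ref{supplemented canonical model}) already yield that $\nboxc$ contains the unit when \axNbox{} is present, is closed under intersection when \axCbox{} is present, and is supplemented in the $\Mcplus$ construction; these are precisely the conditions required for \boxmodel s. Crucially, the interaction clauses (iv)--(vii), and with them the conditions \Connuno, \Conndue{} and \Conntre, are altogether absent from the monomodal setting, because monomodal logics contain no interaction axiom and \boxmodel s have no second neighbourhood function to relate. Concluding as in Theorem \ref{theorem completeness}, we get $\Mc_{(+)},\Pi\not\Vd A$, whence completeness. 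The $\diam$-family is handled by the mirror-image construction, deleting $\nboxc$ (resp.\ $\nboxplus$) and using \rediam{} (or \rmdiam{} when \axMdiam{} is present, which supplements $\ndiamplus$).

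There is no genuine obstacle here: the entire content is that the bimodal argument degrades to a single modality. The only thing requiring care is the bookkeeping --- confirming that every appeal to an interaction axiom or interaction condition in the bimodal proof lives in a branch (parts (iv)--(vii)) that is vacuous for monomodal logics, so that what remains is a complete argument already carried out in the single-modality language.
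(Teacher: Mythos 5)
Your proposal is correct and follows exactly the paper's own route: the paper proves this theorem by remarking that removing $\ndiamc$ ($\ndiamplus$) or $\nboxc$ ($\nboxplus$) from the canonical model of Lemma \ref{canonical model} (resp.\ Lemma \ref{supplemented canonical model}) yields the analogous truth lemma and closure properties, and then concluding as in Theorem \ref{theorem completeness}. Your additional bookkeeping --- checking that the interaction clauses (iv)--(vii) are vacuous in the monomodal setting --- is precisely the verification the paper leaves implicit.
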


\subsection{Finite model property and decidability}\label{subsection filtrations}
We have seen that all \intlogic s defined in Section \ref{section monomodal} and \ref{section bimodal}
are sound and complete with respect to  a certain class of \intmodel s.
By applying the technique of filtrations to this kind of models,
we now show that most of them have also the finite model property,
thus providing an alternative proof of decidability.
The proofs are given explicitly for bimodal logics,
while the simpler proofs for monomodal logics can be easily extracted.

Given a \intmodel{} $\M$  and a set $\Phi$ of formulas of $\lan$ 
that is closed under subformulas,
we define the equivalence relation $\sim$ on $\W$ as follows:
\begin{center}
$w\sim v$ \quad iff \quad for all $A\in\Phi$, \ $w\Vd A$ iff $v\Vd A$.
\end{center}
For any $w\in\W$ and $\alpha\subseteq\W$, 
we denote with $\wclass$ the equivalence class containing $w$,
and with $\alphaclass$ the set $\{\wclass \mid w\in\alpha\}$
(thus in particular $\Aclass_\M$ is the set $\{\wclass \mid w\in[A]_\M\}$).

\begin{definition}\label{filtration}
Let $\M=\langle\W,\less,\nbox,\ndiam,\V\rangle$ be any \intmodel{}
and $\Phi$ be a set of formulas of $\lan$ closed under subformulas.
A {\it filtration} of $\M$ through $\Phi$ (or $\Phi$-filtration)
is any model $\Mf=\langle\Wf,\lessf,\nboxf,\ndiamf,\Vf\rangle$ such that:

\begin{itemize}
\item $\Wf = \{\wclass \mid w\in\W\}$;

\item $\wclass \lessf \vclass$ \ iff \ for all $A\in\Phi$, $\M,w\Vd A$ implies $\M,v\Vd A$;

\item for all $\Box A \in\Phi$, \quad $\Aclass_\M\in\nboxf(\wclass)$ \ iff \ $[A]_\M\in\nbox(w)$;

\item for all $\diam A \in\Phi$, \quad $\Wf\setminus\Aclass_\M\in\ndiamf(\wclass)$ \ iff \  $\W\setminus[A]_\M\in\ndiam(w)$;

\item for all $p\in\Phi$, \quad $p\in\Vf(\wclass)$ \ iff \ $p\in\V(w)$.
\end{itemize}
\end{definition}

Observe that model $\Mf$ is well-defined, as
for all $\Box A, \diam B, p\in\Phi$ we have that $w\sim v$ implies
(i) $\Aclass_\M\in\nboxf(\wclass)$ iff $\Aclass_\M\in\nboxf(\vclass)$;
(ii) $\Wf\setminus\Bclass_\M\in\ndiamf(\wclass)$ iff $\Wf\setminus\Bclass_\M\in\ndiamf(\vclass)$; and
(iii) $p\in\Vf(\wclass)$ iff $p\in\Vf(\vclass)$.
Moreover, it is immediate to verify that
(iv) $\lessf$ is a preorder;
(v) $\Vf$ is hereditary;
(vi) if $\wclass\lessf\vclass$ and $\Box A\in\Phi$, then $\Aclass_\M\in\nboxf(\wclass)$ implies $\Aclass_\M\in\nboxf(\vclass)$;
(vii) if $\wclass\lessf\vclass$ and $\diam B\in\Phi$, then $\Wf\setminus\Bclass_\M\in\ndiamf(\vclass)$ implies $\Wf\setminus\Bclass_\M\in\ndiamf(\wclass)$;
and
(viii) for all $\alpha\subseteq\Wf$, $\alpha\in\nboxf(\wclass)$ implies $\alpha\in\ndiamf(\wclass)$.
Thus $\Mf$ is a \intmodel.

\begin{lemma}[Filtrations lemma]\label{filtration lemma}
For any formula $A\in\Phi$, 
\begin{center}
$\Mf,\wclass \Vd A \textup{ \ iff \ } \M,w\Vd A$.
\end{center}
\end{lemma}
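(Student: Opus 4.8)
The plan is to prove the Filtrations Lemma by induction on the structure of the formula $A\in\Phi$, exploiting the fact that $\Phi$ is closed under subformulas so that every subformula of $A$ used in the induction also belongs to $\Phi$. The base cases ($A\equiv p$ and $A\equiv\bot$) and the boolean cases ($A\equiv B\land C$, $B\lor C$) follow immediately from the definition of $\Vf$ on atoms and the inductive hypothesis, since the forcing clauses for these connectives are local at a single world. The modal cases will be essentially a direct reading of the filtration conditions: for $A\equiv\Box B$, note that $B\in\Phi$ by subformula closure, so by the inductive hypothesis $[B]_\M=\{w\mid\M,w\Vd B\}$ and $\Bclass_\M=\{\wclass\mid\Mf,\wclass\Vd B\}=[B]_{\Mf}$ agree class-wise; then $\Mf,\wclass\Vd\Box B$ iff $[B]_{\Mf}\in\nboxf(\wclass)$ iff (by the filtration condition, since $\Box B\in\Phi$) $[B]_\M\in\nbox(w)$ iff $\M,w\Vd\Box B$. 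The case $A\equiv\diam B$ is symmetric, using the $\ndiam$-clause of the filtration definition and the fact that $\Wf\setminus\Bclass_\M=[\,\neg\text{-complement}\,]$ corresponds to $\W\setminus[B]_\M$ under the quotient.

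The genuinely delicate case, and the one I expect to be the main obstacle, is the implication $A\equiv B\imp C$. Here the forcing clause is \emph{not} local: $\Mf,\wclass\Vd B\imp C$ means that for all $\vclass\morestar\wclass$, $\Mf,\vclass\Vd B$ implies $\Mf,\vclass\Vd C$, which quantifies over all $\lessf$-successors of $\wclass$, and these successor classes need not be images of $\less$-successors of $w$ in $\M$. The subtlety is that the preorder $\lessf$ on classes is defined by the $\Phi$-preservation condition ($\wclass\lessf\vclass$ iff every $A\in\Phi$ forced at $w$ is forced at $v$), not merely as the quotient of $\less$. First I would establish the key monotonicity link: if $w\less v$ then $\wclass\lessf\vclass$, which holds because $\M$ satisfies the hereditary property (proved earlier in the excerpt) so any $\Phi$-formula forced at $w$ is forced at $v$.

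For the forward direction of the $\imp$ case I would argue: assume $\M,w\Vd B\imp C$ and let $\vclass\morestar\wclass$ with $\Mf,\vclass\Vd B$; since $B,C\in\Phi$ by subformula closure, the inductive hypothesis gives $\M,v\Vd B$, and because $\wclass\lessf\vclass$ precisely records that $w$ and $v$ agree upward on $\Phi$-formulas, I need to deduce $\M,v\Vd B\imp C$ from $\M,w\Vd B\imp C$ — this is exactly where I must use that $B\imp C\in\Phi$, so that the $\lessf$-relation transports forcing of $B\imp C$ itself from $w$ to $v$; then $\M,v\Vd C$ follows by modus ponens at $v$, whence $\Mf,\vclass\Vd C$ by induction. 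For the backward direction, assuming $\Mf,\wclass\Vd B\imp C$, I take any $v\more w$ in $\M$ with $\M,v\Vd B$; by the monotonicity link $\wclass\lessf\vclass$ and by induction $\Mf,\vclass\Vd B$, so $\Mf,\vclass\Vd C$, and induction returns $\M,v\Vd C$. The crux throughout is therefore the careful use of subformula closure of $\Phi$ to guarantee that $B$, $C$, and critically $B\imp C$ all lie in $\Phi$, so that the class-level preorder $\lessf$ faithfully reflects the forcing of the implication, and the proof closes without needing $\lessf$ to be the literal quotient of $\less$.
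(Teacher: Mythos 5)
Your proposal is correct and follows essentially the same route as the paper's own proof: induction on $A$, with the modal cases read off directly from the filtration conditions, and the implication case handled by combining the hereditary property of $\M$ (giving $w\less v$ implies $\wclass\lessf\vclass$) with the fact that $B\imp C$ itself lies in $\Phi$, so that $\lessf$ transports its forcing. The only difference is that you argue the two directions of the $\imp$ case directly while the paper states them contrapositively, which is immaterial.
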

\begin{proof}
Notice that this is equivalent to prove that $[A]_{\Mf}=\Aclass_\M$.
The proof is by induction on $A$.
For $A \equiv p, \bot, B\land C, B\lor C$ the proof is immediate.

$A \equiv B\imp C$. 
Assume $\M,w\not\Vd B\imp C$. Then there is $v\more w$ such that $\M,v\Vd B$ and $\M,v\not\Vd C$.
By inductive hypothesis $\Mf,\vclass\Vd B$ and $\Mf,\vclass\not\Vd C$.
Moreover, by definition of $\lessf$ (and monotonicity of $\M$), $\wclass\lessf\vclass$.
Therefore $\Mf,\wclass\not\Vd B \imp C$.
Now assume $\Mf,\wclass\not\Vd B \imp C$.
Then there is $\vclass\in\Wf$ such that $\wclass\lessf\vclass$, $\Mf,\vclass\Vd B$ and $\Mf,\vclass\not\Vd C$.
By inductive hypothesis $\M,v\Vd B$ and $\M,v\not\Vd C$, thus $\M,v\not\Vd B\imp C$.
By definition of $\lessf$ we then have $\M,w\not\Vd B\imp C$.

$A\equiv \Box B$.
$\Mf,\wclass\Vd\Box B$ iff
$[B]_{\Mf}\in\nboxf(\wclass)$ iff (i.h.)
$\Bclass_\M\in\nboxf(\wclass)$ iff
$[B]_{\M}\in\nbox(w)$ iff 
$\M,w\Vd\Box B$.

$A\equiv \diam B$.
$\Mf,\wclass\Vd\diam B$ iff
$\Wf\setminus[B]_{\Mf}\notin\ndiamf(\wclass)$ iff (i.h.)
$\Wf\setminus\Bclass_\M\notin\ndiamf(\wclass)$ iff
$\W\setminus[B]_{\M}\notin\ndiam(w)$ iff 
$\M,w\Vd\diam B$.
\end{proof}

\begin{lemma}
Let $\Mf$ be a $\Phi$-filtration of $\M$.
$(i)$ If $\nbox(w)$ contains the unit and $\Box\top\in\Phi$, then $\nboxf(\wclass)$ contains the unit.
$(ii)$ If $\ndiam(w)$ contains the unit and $\diam\bot\in\Phi$, then $\ndiamf(\wclass)$ contains the unit.
%\nb{Messo volutamente $\diam\bot$ e non $\neg\diam\bot$: qui non mi interessa l'assioma ma che ci sia la formula $\diam\bot$.}
\end{lemma}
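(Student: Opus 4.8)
The plan is to instantiate the two neighbourhood clauses of the filtration (Definition~\ref{filtration}) at the single formulas $\Box\top$ and $\diam\bot$; their membership in $\Phi$ is exactly what the two hypotheses supply. Everything then reduces to computing the truth sets of $\top$ and $\bot$ in $\M$ and transporting them across the quotient map $w\mapsto\wclass$.

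For $(i)$, I would first observe that $[\top]_\M = \W$, since every world forces $\top$, and hence $\topclass_\M = \{\wclass \mid w\in\W\} = \Wf$. As $\nbox(w)$ contains the unit, $[\top]_\M = \W \in \nbox(w)$. Since $\Box\top\in\Phi$, the $\Box$-clause of the filtration applies with $A\equiv\top$ and yields $\topclass_\M\in\nboxf(\wclass)$ iff $[\top]_\M\in\nbox(w)$; the right-hand side holds, so $\Wf = \topclass_\M\in\nboxf(\wclass)$, which is precisely the statement that $\nboxf(\wclass)$ contains the unit.

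For $(ii)$, I would argue dually with $\bot$. Here $[\bot]_\M = \emptyset$, so $\botclass_\M = \emptyset$, giving $\Wf\setminus\botclass_\M = \Wf$, while $\W\setminus[\bot]_\M = \W$. Because $\ndiam(w)$ contains the unit, $\W\setminus[\bot]_\M = \W\in\ndiam(w)$. Since $\diam\bot\in\Phi$, the $\diam$-clause of the filtration applies with $A\equiv\bot$ and gives $\Wf\setminus\botclass_\M\in\ndiamf(\wclass)$ iff $\W\setminus[\bot]_\M\in\ndiam(w)$; the right-hand side holds, whence $\Wf = \Wf\setminus\botclass_\M\in\ndiamf(\wclass)$, i.e. $\ndiamf(\wclass)$ contains the unit.

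There is no genuinely hard step here. The only point worth stressing is that the filtration conditions constrain $\nboxf$ and $\ndiamf$ \emph{only} through the modal formulas actually occurring in $\Phi$, so to recover the unit $\Wf$ on the filtered side one needs a witnessing formula in $\Phi$ whose truth set is $\W$ (for $\nboxf$) or whose complement is $\W$ (for $\ndiamf$); the hypotheses $\Box\top\in\Phi$ and $\diam\bot\in\Phi$ furnish exactly these witnesses, which is why they cannot be dropped. The remaining computations $[\top]_\M=\W$ and $[\bot]_\M=\emptyset$ together with their quotient images are routine.
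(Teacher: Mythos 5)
Your proof is correct and matches the paper's own argument, which simply declares the lemma ``immediate by Definition \ref{filtration} and Lemma \ref{filtration lemma}'': you instantiate the two filtration clauses at $\Box\top$ and $\diam\bot$ and compute $[\top]_\M=\W$ and $[\bot]_\M=\emptyset$, exactly the intended reasoning. If anything, your version is slightly leaner, since by evaluating the truth sets of $\top$ and $\bot$ directly in $\M$ you never actually need the filtration lemma that the paper cites.
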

\begin{proof}
Immediate by Definition \ref{filtration} and Lemma \ref{filtration lemma}.
\end{proof}

\begin{definition}
We call {\it finest} $\Phi$-filtration (cf. Chellas \cite{Chellas}) any $\Phi$-filtration $\Mf$ of $\M$ such that:

\vspace{0.2cm}
\begin{tabular}{l}
$\nboxf(\wclass)=\{\Aclass_\M \mid \Box A\in\Phi \textup{ and } [A]_\M\in\nbox(w)\}$; and \\

$\ndiamf(\wclass)=\mc P(\Wf) \setminus \{\Wf\setminus\Aclass_\M \mid 
\diam A\in\Phi  \textup{ and } \W\setminus[A]_\M\notin\ndiam(w) \}$. \\
\end{tabular}

\vspace{0.2cm}
\noindent
Moreover, let $\M^\circ=\langle \Wf,\lessf,\nbox^\circ,\ndiam^\circ,\Vf\rangle$ be a \intmodel{}
where $\Wf$, $\lessf$ and $\Vf$ are as in $\Mf$.
We say that:

\vspace{0.1cm}
$\bullet$ \  $\M^\circ$ is the \emph{supplementation} of $\Mf$ if: 

\vspace{0.1cm}
\begin{tabular}{ll}
& $\alpha\in\nbox^\circ(\wclass)$ \ iff \  there is $\beta\in\nboxf(\wclass)$ s.t. $\beta\subseteq\alpha$; \\
& $\alpha\notin\ndiam^\circ(\wclass)$ \ iff \ there is $\beta\notin\ndiamf(\wclass)$ s.t. $\alpha\subseteq\beta$. \\
\end{tabular}

\vspace{0.1cm}
$\bullet$ \ $\M^\circ$ is the \emph{intersection closure} of $\Mf$ if 
$\ndiam^\circ(\wclass) = \ndiamf(\wclass)$, and

\vspace{0.1cm}
\begin{tabular}{ll}
& $\alpha\in\nbox^\circ(\wclass)$ \ iff \  there are $\alpha_1,...,\alpha_n\in\nboxf(\wclass)$ s.t. $\alpha_1 \cap ... \cap \alpha_n=\alpha$.
\end{tabular}

\vspace{0.1cm}
$\bullet$ \  $\M^\circ$ is the \emph{quasi-filtering} of $\Mf$ if: 

\vspace{0.1cm}
\begin{tabular}{ll}
& $\alpha\in\nbox^\circ(\wclass)$ \ iff \  there are $\alpha_1,...,\alpha_n\in\nboxf(\wclass)$ s.t. $\alpha_1 \cap ... \cap \alpha_n\subseteq\alpha$; \\
& $\alpha\notin\ndiam^\circ(\wclass)$ \ iff \ there is $\beta\notin\ndiamf(\wclass)$ s.t. $\alpha\subseteq\beta$. \\
\end{tabular}
\end{definition}

It is immediate to verify that
the supplementation of a model $\M$ is supplemented,
its intersection closure is closed under intersection,
and its quasi-filtering is both supplemented and closed under intersection.

\begin{lemma}
Let $\Mf$ be a finest $\Phi$-filtration of $\M$.
$(i)$ If $\M$ is \connuno, then $\Mf$ is \connuno.
$(ii)$ If $\M$ is \conntre, then $\Mf$ is \conntre.
\end{lemma}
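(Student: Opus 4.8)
The plan is to argue both parts directly from the explicit shapes that $\nboxf(\wclass)$ and $\ndiamf(\wclass)$ take in a finest filtration, deriving a contradiction in each case. The common ingredient is a bridging observation that I would isolate first: since $\Phi$ is closed under subformulas, whenever $\Box A\in\Phi$ or $\diam B\in\Phi$ we also have $A,B\in\Phi$, and then $\vclass\in\Aclass_\M$ iff $v\Vd A$ (and likewise for $B$). Indeed $\vclass\in\Aclass_\M$ iff $v\sim u$ for some $u$ with $u\Vd A$, and since $A\in\Phi$ the clause defining $\sim$ transfers $u\Vd A$ to $v\Vd A$; the converse is immediate. Consequently any inclusion or equality between the filtered sets $\Aclass_\M$ and $\Wf\setminus\Bclass_\M$ faithfully mirrors the corresponding relation between $[A]_\M$ and $\W\setminus[B]_\M$ in $\M$. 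This is exactly the bookkeeping already carried out in the proof of the Filtrations lemma, so I expect it to be routine.

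For part $(i)$ I would take $\alpha\in\nboxf(\wclass)$ and show $\alpha\in\ndiamf(\wclass)$. By the definition of the finest filtration, $\alpha=\Aclass_\M$ for some $\Box A\in\Phi$ with $[A]_\M\in\nbox(w)$. Assume towards a contradiction that $\alpha\notin\ndiamf(\wclass)$; then $\alpha=\Wf\setminus\Bclass_\M$ for some $\diam B\in\Phi$ with $\W\setminus[B]_\M\notin\ndiam(w)$. The bridging observation turns the equality $\Aclass_\M=\Wf\setminus\Bclass_\M$ into $[A]_\M=\W\setminus[B]_\M$. Since $\M$ is \connuno, from $[A]_\M\in\nbox(w)$ we obtain $[A]_\M\in\ndiam(w)$, i.e. $\W\setminus[B]_\M\in\ndiam(w)$, contradicting the choice of $B$. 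Hence $\alpha\in\ndiamf(\wclass)$, so $\Mf$ is \connuno.

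Part $(ii)$ follows the same pattern, with inclusion replacing equality and \conntre{} replacing \connuno. Given $\alpha\in\nboxf(\wclass)$ and $\alpha\subseteq\beta\subseteq\Wf$, I would write $\alpha=\Aclass_\M$ with $[A]_\M\in\nbox(w)$, and assume for contradiction that $\beta\notin\ndiamf(\wclass)$, so $\beta=\Wf\setminus\Bclass_\M$ for some $\diam B\in\Phi$ with $\W\setminus[B]_\M\notin\ndiam(w)$. From $\Aclass_\M\subseteq\Wf\setminus\Bclass_\M$ the bridging observation yields $[A]_\M\subseteq\W\setminus[B]_\M$. Applying \conntre{} in $\M$ to $[A]_\M\in\nbox(w)$ together with $[A]_\M\subseteq\W\setminus[B]_\M$ gives $\W\setminus[B]_\M\in\ndiam(w)$, again a contradiction. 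Thus $\beta\in\ndiamf(\wclass)$ and $\Mf$ is \conntre.

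The only genuinely delicate point is the bridging observation, as it is what licenses passing between statements about the filtered truth sets and statements about the original ones; once it is in place, both parts are immediate contradictions. I would make sure to record explicitly that it needs $A$ and $B$ to lie in $\Phi$ (guaranteed here because $\Box A,\diam B\in\Phi$ and $\Phi$ is subformula-closed), since this is precisely where the hypothesis on $\Phi$ enters.
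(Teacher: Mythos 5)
Your proof is correct and follows essentially the same route as the paper's: a proof by contradiction that unpacks the explicit definition of the finest filtration, transfers the equality (resp.\ inclusion) between $\Aclass_\M$ and $\Wf\setminus\Bclass_\M$ down to $[A]_\M$ and $\W\setminus[B]_\M$, and then applies \Connuno{} (resp.\ \Conntre) in $\M$. The only difference is presentational: you isolate and justify the ``bridging observation'' ($\vclass\in\Aclass_\M$ iff $v\Vd A$, needing $A\in\Phi$ via subformula closure) as an explicit preliminary step, whereas the paper buries exactly this point in a one-line parenthetical remark.
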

\begin{proof}
$(i)$ Assume by contradiction that $\alpha\in\nboxf(\wclass)$ and $\alpha\notin\ndiamf(\wclass)$.
Then $\alpha=\Aclass_\M$ for a $A\in\lan$ such that $\Box A\in\Phi$ and $[A]_\M\in\nbox(w)$.
Moreover $\alpha=\Wf\setminus\Bclass_\M$ for a $B\in\lan$ such that $\diam B\in\Phi$ and $\W\setminus[B]_\M\notin\ndiam(w)$.
Thus $\Aclass_\M = \Wf\setminus\Bclass_\M$, which implies $[A]_\M=\W\setminus[B]_\M$
($w\in[A]_\M$ iff $\wclass\in\Aclass_\M$ iff $\wclass\in\Wf\setminus\Bclass_\M$ iff $w\in\W\setminus[B]_\M$).
Then, since $\M$ is \connuno, $\W\setminus[B]_\M\in\ndiam(w)$,
which gives a contradiction.

$(ii)$ 
Assume by contradiction that $\alpha\in\nboxf(\wclass)$, $\alpha\subseteq\beta$ and $\beta\notin\ndiamf(\wclass)$.
Then $\alpha=\Aclass_\M$ for a $A\in\lan$ such that $\Box A\in\Phi$ and $[A]_\M\in\nbox(w)$.
Moreover $\beta=\Wf\setminus\Bclass_\M$ for a $B\in\lan$ such that $\diam B\in\Phi$ and $\W\setminus[B]_\M\notin\ndiam(w)$.
Thus $\Aclass_\M \subseteq \Wf\setminus\Bclass_\M$, which implies $[A]_\M \subseteq \W\setminus[B]_\M$.
Then, since $\M$ is \conntre, $\W\setminus[B]_\M\in\ndiam(w)$, 
which gives a contradiction.
\end{proof}

\begin{lemma}
Let $\M$, $\Mf$ and $\Mcirc$ be \intmodel s,
where $\Mf$ is a finest $\Phi$-filtration of $\M$ for a set $\Phi$ of formulas
that is closed under subformulas. 
We have:

\begin{itemize}
\item[(i)] If $\M$ is supplemented and \connuno,
and $\Mcirc$ is the supplementation of $\Mf$, 
then $\Mcirc$ is \connuno{} and is a $\Phi$-filtration of $\M$.

\item[(ii)] If $\M$ is closed under intersection and \connuno,
and $\Mcirc$ is the closure under intersection of $\Mf$, 
then $\Mcirc$ is \connuno{} and is a $\Phi$-filtration of $\M$.

\item[(iii)] If $\M$ is supplemented, closed under intersection, and \connuno,
and $\Mcirc$ is the quasi-filtering of $\Mf$, 
then $\Mcirc$ is \connuno{} and is a $\Phi$-filtration of $\M$.

\item[(iv)] If $\M$ is closed under intersection and \conntre,
and $\Mcirc$ is the closure under intersection of $\Mf$, 
then $\Mcirc$ is \conntre{} and is a $\Phi$-filtration of $\M$.
\end{itemize}
\end{lemma}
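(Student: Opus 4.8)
The plan is to handle the four items by a single uniform method, exploiting that each $\Mcirc$ shares its underlying frame $\langle\Wf,\lessf,\Vf\rangle$ with the finest filtration $\Mf$. Since $\Mf$ is already a $\Phi$-filtration, the clauses of Definition \ref{filtration} that mention only $\Wf$, $\lessf$ and $\Vf$ hold for $\Mcirc$ for free; so to prove that $\Mcirc$ is a $\Phi$-filtration I only have to check the two neighbourhood clauses, namely $\Aclass_\M\in\nboxcirc(\wclass)$ iff $[A]_\M\in\nbox(w)$ for $\Box A\in\Phi$ and $\Wf\setminus\Aclass_\M\in\ndiamcirc(\wclass)$ iff $\W\setminus[A]_\M\in\ndiam(w)$ for $\diam A\in\Phi$, and then verify the required interaction condition separately. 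The workhorse throughout is a \emph{saturation principle}: because $\Phi$ is subformula-closed and $\sim$ is defined through $\Phi$, every truth set $[A]_\M$ with $A\in\Phi$ is a union of $\sim$-classes. Consequently $w\in[A]_\M$ iff $\wclass\in\Aclass_\M$, and for $\sim$-saturated sets $X,Y$ one has $X^\sim\subseteq Y^\sim$ iff $X\subseteq Y$, $X^\sim=Y^\sim$ iff $X=Y$, $(X\cap Y)^\sim=X^\sim\cap Y^\sim$ and $(\W\setminus X)^\sim=\Wf\setminus X^\sim$. These identities let me transfer inclusions, equalities, intersections and complements between subsets of $\W$ and of $\Wf$ without loss.

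For the box clause the nontrivial direction is $\Aclass_\M\in\nboxcirc(\wclass)\Rightarrow[A]_\M\in\nbox(w)$. In case (i) I extract a single witness $\beta=\Bclass_\M\in\nboxf(\wclass)$ with $\beta\subseteq\Aclass_\M$; the saturation principle lifts this to $[B]_\M\subseteq[A]_\M$ with $[B]_\M\in\nbox(w)$, and supplementation of $\M$ yields $[A]_\M\in\nbox(w)$. In cases (ii) and (iv) I take witnesses $\alpha_i=[C_i]^\sim_\M\in\nboxf(\wclass)$ with $\bigcap_i\alpha_i=\Aclass_\M$; using $(\bigcap_i[C_i]_\M)^\sim=\bigcap_i\alpha_i$ and closure under intersection of $\M$ I get $\bigcap_i[C_i]_\M\in\nbox(w)$, and the saturation principle converts the equality of classes into $\bigcap_i[C_i]_\M=[A]_\M$, whence $[A]_\M\in\nbox(w)$ directly. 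Case (iii) combines the two: from $\bigcap_i\alpha_i\subseteq\Aclass_\M$ I obtain $\bigcap_i[C_i]_\M\in\nbox(w)$ and $\bigcap_i[C_i]_\M\subseteq[A]_\M$, then apply supplementation. The diamond clause is nontrivial only in (i), (iii), where $\ndiamcirc$ is the dual supplementation of $\ndiamf$; there a symmetric argument, now using supplementation of $\ndiam$ in $\M$, settles the equivalence, while in (ii), (iv) it is immediate since $\ndiamcirc=\ndiamf$.

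Finally I verify the interaction condition by one contradiction schema. Given $\alpha\in\nboxcirc(\wclass)$ I distil a base set $[D]_\M\in\nbox(w)$ with $[D]^\sim_\M\subseteq\alpha$ (a single witness in (i), a finite intersection via closure under intersection of $\M$ in (ii)--(iv)). The condition to establish is $\gamma\in\ndiamcirc(\wclass)$ for the relevant superset $\gamma\supseteq\alpha$ (for \conntre in (iv) this $\gamma$ is given; for \connuno in (i)--(iii) one takes $\gamma=\alpha$, and in (i), (iii) the dual supplementation means checking every $\gamma\supseteq\alpha$). Suppose $\gamma$ is excluded, i.e. $\gamma=\Wf\setminus\Bclass_\M$ with $\diam B\in\Phi$ and $\W\setminus[B]_\M\notin\ndiam(w)$; lifting $[D]^\sim_\M\subseteq\gamma$ gives $[D]_\M\subseteq\W\setminus[B]_\M$, and then either $\W\setminus[B]_\M\in\nbox(w)$ by supplementation followed by $\in\ndiam(w)$ via \connuno (cases (i), (iii)), or $\W\setminus[B]_\M\in\ndiam(w)$ directly by \conntre (case (iv)), or $[D]_\M=\W\setminus[B]_\M\in\nbox(w)\subseteq\ndiam(w)$ from the exact equality and \connuno (case (ii)) --- each contradicting $\W\setminus[B]_\M\notin\ndiam(w)$.

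I expect the main obstacle to be purely the bookkeeping: keeping the translations between $\W$ and $\Wf$ straight through the saturation identities, and matching, in each of the four combinations, exactly which property of $\M$ is available to close the argument. The delicate point is noticing that the \emph{exact} intersection in case (ii) produces an equality $[D]_\M=\W\setminus[B]_\M$, so that \connuno alone suffices without any supplementation, whereas the mere inclusions in cases (i) and (iii) genuinely force an appeal to supplementation of $\M$, and case (iv) is the only one that consumes \conntre rather than \connuno.
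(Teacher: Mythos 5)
Your proof is correct and follows essentially the same route as the paper's: unfold the closure operation and the finest filtration, transfer inclusions and equalities between $\W$ and $\Wf$ via the filtration lemma (your ``saturation principle''), then close with the relevant property of $\M$ --- your case (iii) is precisely the argument the paper gives. The paper proves only (iii) explicitly and declares the other points ``similar''; your observation that case (ii) must exploit the \emph{exact} intersection equality (since supplementation of $\M$ is unavailable there, so one gets $\bigcap_i[C_i]_\M=\W\setminus[B]_\M$ rather than a mere inclusion) correctly supplies the one adaptation that the paper's ``similarly'' glosses over.
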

\begin{proof}
Points (i)--(iv) are proved similarly. 
We show as example the proof of point (iii).
Firstly we prove by contradiction that $\Mcirc$ is \connuno.
Assume $\alpha\in\nboxcirc(\wclass)$ and $\alpha\notin\ndiamcirc(\wclass)$.
Then there are $\alpha_1, ..., \alpha_n \in\nboxf(\wclass)$ s.t. $\alpha_1\cap ...\cap \alpha_n\subseteq\alpha$;
and there is $\beta\notin\ndiamf(\wclass)$ s.t. $\alpha\subseteq\beta$.
By definition, this means that there are $\Box A_1, ..., \Box A_n\in\Phi$ s.t.
$\alpha_1=\Aunoclass_{\M}$, ..., $\alpha_n=\Anclass_{\M}$, and $[A_1]_{\M}, ..., [A_n]_{\M}\in\nbox(w)$.
Moreover, there is $\diam B\in\Phi$ s.t. $\beta=\Wf\setminus\Bclass_{\M}$ and $\W\setminus[B]_{\M} \notin\ndiam(w)$.
As a consequence, we also have $\Aunoclass_{\M}\cap ... \cap \Anclass_{\M}\subseteq \Wf\setminus\Bclass_{\M}$.
Since $\Mf$ is a $\Phi$-filtration of $\M$, by the filtration lemma this implies 
$[A_1]_{\M}\cap ...\cap [A_n]_{\M}\subseteq \W\setminus[B]_{\M}$.
Then by intersection closure of $\nbox$, $[A_1]_{\M}\cap ...\cap [A_n]_{\M}\in\nbox(w)$,
and by its supplementation, $\W\setminus[B]_{\M}\in\nbox(w)$.
Finally, since $\M$ is \connuno, $\W\setminus[B]_{\M}\in\ndiam(w)$,
which gives a contradiction.

We now prove that $\Mcirc$ is a $\Phi$-filtration of $\M$.
Let $\Box A\in \Phi$. 
If $[A]_{\M}\in\nbox(w)$, then $\Aclass_{\M}\in\nboxf(\wclass)$, and also $\Aclass_{\M}\in\nboxcirc(\wclass)$.
Now assume $\Aclass_{\M}\in\nboxcirc(\wclass)$.
Then there are $\alpha_1, ..., \alpha_n \in\nboxf(\wclass)$ s.t. $\alpha_1\cap ...\cap \alpha_n\subseteq \Aclass_{\M}$.
By definition, this means that there are $\Box A_1, ..., \Box A_n\in\Phi$ s.t.
$\alpha_1=\Aunoclass_{\M}$, ..., $\alpha_n=\Anclass_{\M}$, and $[A_1]_{\M}, ..., [A_n]_{\M}\in\nbox(w)$.
Then, since $\Mf$ is a $\Phi$-filtration of $\M$, $[A_1]_{\M}\cap ...\cap [A_n]_{\M}\subseteq [A]_{\M}$.
By intersection closure of $\nbox$, $[A_1]_{\M}\cap ...\cap [A_n]_{\M}\in\nbox(w)$,
then by supplementation, $[A]_{\M}\in\nbox(w)$.

Now let $\diam A\in\Phi$.
If $\W\setminus[A]_{\M}\notin\ndiam(w)$, then $\Wf\setminus\Aclass_{\M}\notin\ndiamf(\wclass)$, 
and also $\Wf\setminus\Aclass_{\M}\notin\ndiamcirc(\wclass)$.
Now assume $\Wf\setminus\Aclass_{\M}\notin\ndiamcirc(\wclass)$.
Then there is $\beta\notin\ndiamf(\wclass)$ s.t. $\Wf\setminus\Aclass_{\M}\subseteq\beta$.
By definition, $\beta=\Wf\setminus\Bclass_{\M}$ for a $\diam B\in\Phi$ s.t. $\W\setminus[B]_{\M}\notin\ndiam(w)$.
Since $\Mf$ is a $\Phi$-filtration of $\M$, we have  $\W\setminus[A]_{\M}\subseteq\W\setminus[B]_{\M}$.
Then by supplementation, $\W\setminus[A]_{\M}\notin\ndiam(w)$.
\end{proof}

\begin{theorem}
If a formula $A$ is satisfiable in a \intmodel{} $\M$ that is \connuno{} or \conntre{},
then $A$ is satisfiable in a \intmodel{} $\M'$ 
with the same properties of $\M$
and in addition is finite.
\end{theorem}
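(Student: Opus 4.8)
The plan is to realise $\M'$ as a suitably closed \emph{finite} $\Phi$-filtration of $\M$, so that the satisfiability of $A$ transfers by the Filtrations lemma (Lemma \ref{filtration lemma}) while the closure properties and the interaction condition of $\M$ are retained. Concretely, I would fix a world $w$ with $\M,w\Vd A$ and let $\Phi$ be the set of subformulas of $A$, enlarged with $\Box\top$ and $\diam\bot$ (together with their subformulas), so that $\Phi$ is finite and closed under subformulas; this enlargement is harmless, since $A$ still lies in $\Phi$, and its only purpose is to make the lemma on the preservation of units applicable. As each $\sim$-class is determined by the subset of $\Phi$ forced at its worlds, $\Wf$ has at most $2^{|\Phi|}$ elements, so every $\Phi$-filtration of $\M$ is finite. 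I would take as starting point the \emph{finest} $\Phi$-filtration $\Mf$; by the Filtrations lemma, $\Mf,\wclass\Vd A$.

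Next I would restore in the filtration the closure properties that $\M$ enjoys, distinguishing cases according to whether $\nbox$ and $\ndiam$ are supplemented (axioms \axMbox, \axMdiam) and whether $\nbox$ is closed under intersection (axiom \axCbox). If $\M$ is neither supplemented nor closed under intersection, I would simply put $\M'=\Mf$: the lemma stating that a finest filtration preserves \connuno{} and \conntre{} supplies the interaction condition, and the lemma on units supplies the units (this is where $\Box\top,\diam\bot\in\Phi$ is used). Otherwise I would pass from $\Mf$ to its supplementation, to its intersection closure, or to its quasi-filtering, according to whether $\M$ is supplemented only, closed under intersection only, or both. For these three constructions the corresponding lemma guarantees that the result $\Mcirc$ is again a $\Phi$-filtration of $\M$, carries the appropriate closure property, and still satisfies the interaction condition; setting $\M'=\Mcirc$ and reapplying the Filtrations lemma yields $\M',\wclass\Vd A$.

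The point demanding the most care, and where I expect the real difficulty to lie, is the bookkeeping of the interaction condition in the supplemented cases, since the construction lemmas are phrased for \connuno{} only. Here I would use that \conntre{} always entails \connuno{} (instantiate its hypothesis with $\beta=\alpha$), so that those lemmas apply to \conntre-models as well, and conversely that \connuno{} together with supplementation entails \conntre{} (as already observed for supplemented models); hence $\M'$ inherits \conntre{} exactly when $\M$ has it. The genuinely substantive content — that enlarging $\nbox$ upward while correspondingly shrinking $\ndiam$ does not sever the link between the two neighbourhood functions — is precisely what the supplementation, intersection-closure and quasi-filtering operations were designed to secure, and is already discharged by that lemma. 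The proof of the theorem itself is then only the selection of the right construction for each combination of axioms, followed by a last appeal to the Filtrations lemma.
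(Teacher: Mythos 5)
Your proposal is correct and follows essentially the same route as the paper, whose (very terse) proof likewise takes $\Phi = \sbf(A)\cup\{\Box\top,\diam\bot,\top,\bot\}$ and then applies ``the right transformation'' (finest filtration, supplementation, intersection closure, or quasi-filtering) according to the closure properties of $\M$. Your explicit bookkeeping — that \conntre{} always implies \connuno{} and that \connuno{} plus supplementation recovers \conntre{} — is precisely the detail the paper leaves implicit in choosing that transformation, and you handle it correctly.
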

\begin{proof}
Standard, by taking $\Phi = \sbf(A) \cup \{\Box\top, \diam\bot, \top, \bot\}$ 
and, depending on the properties of $\M$, the right transformation $\M'$ of $\M$.
Observe that whenever $\Phi$ is finite,
any $\Phi$-filtration $\M'$ of $\M$ is finite as well.
\end{proof}

\begin{corollary}
Any \intbilogic{} different from \logichedue{} enjoys the finite model property.
Moreover, any \intmonologic{} enjoys the finite model property.
\end{corollary}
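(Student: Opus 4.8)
The plan is to derive the finite model property for each admissible logic \logic{} by combining completeness (Theorem \ref{theorem completeness}) with the filtration constructions developed above. Suppose $\not\vd_\logic A$. By completeness there are a \intmodel{} $\M$ in the class characterising \logic{} and a point $w$ with $\M,w\not\Vd A$. Set $\Phi=\sbf(A)\cup\{\Box\top,\diam\bot,\top,\bot\}$, which is finite and closed under subformulas. By the Filtrations Lemma (Lemma \ref{filtration lemma}) the finest $\Phi$-filtration $\Mf$ satisfies $\Mf,\wclass\not\Vd A$, and $\Mf$ is finite since $\Phi$ is. Because that lemma preserves the truth value of \emph{every} formula of $\Phi$, it remains only to replace $\Mf$ by a finite $\Phi$-filtration of $\M$ that lies in the \emph{same} class as $\M$; such a model still refutes $A$, and since $\not\vd_\logic A$ was arbitrary this is exactly the finite model property.

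For this step I would read off the table relating axioms to model conditions and, for each \logic, choose the transformation of $\Mf$ that realises all of its closure conditions at once. The monomodal closure conditions are treated as follows: supplementation (for \axMbox{} / \axMdiam) by the supplementation of $\Mf$, closure under intersection (for \axCbox) by the intersection closure, and the simultaneous presence of \axMbox{} and \axCbox{} by the quasi-filtering, which is at once supplemented and closed under intersection. The unit conditions (for \axNbox{} / \axNdiam) are inherited automatically because $\Box\top,\diam\bot\in\Phi$ and each transformation only enlarges the $\nbox$-neighbourhoods. The interaction condition is the delicate point, and it is precisely here that the lemmas above are used: they show that the finest filtration, and each of these transformations, both remains a $\Phi$-filtration of $\M$ and preserves \Connuno{} and \Conntre. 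Hence for every logic whose semantic interaction condition is \Connuno{} (the \unoE-family) or \Conntre{} (the \treE- and \unoM-families) we obtain a finite model of the class refuting $A$, which gives the finite model property.

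This accounts for all bimodal logics except those of the \dueE-family, that is \logichedue, whose characteristic interaction condition is \Conndue. The preservation lemmas above are stated only for \Connuno{} and \Conntre, and there is no analogue guaranteeing that the finest filtration or its transformations preserve \Conndue; the method therefore does not reach these logics, which is exactly why they are excluded from the statement. For the monotonic \unoM-family one may instead invoke the remark after the table: there both neighbourhood functions are supplemented, so \Connuno{} already implies \Conntre, and the \Conntre{} case applies uniformly. Finally, in the monomodal case there is no interaction condition to maintain, and the same supplementation, intersection-closure and quasi-filtering constructions — applied to the single neighbourhood function — yield finite countermodels directly, giving the finite model property for every \intmonologic.

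The step I expect to be the main obstacle is the uniform check that one transformation of $\Mf$ meets all the closure demands of a given logic simultaneously while remaining a filtration and keeping the interaction condition — concretely, that the quasi-filtering copes with the joint presence of \axMbox{} and \axCbox{} (and their $\diam$-counterparts) without destroying \Connuno{} or \Conntre. Since the supporting lemmas isolate each of these preservation facts separately, the remaining work is the bookkeeping of matching each logic to its transformation and confirming that the \Conndue-family, and only that family, falls outside the reach of the available lemmas.
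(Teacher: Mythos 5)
Your proposal is correct and is essentially the paper's own argument: the paper proves the preceding theorem by exactly your recipe --- take $\Phi=\sbf(A)\cup\{\Box\top,\diam\bot,\top,\bot\}$, pass to the finest $\Phi$-filtration, and apply to it the transformation matching the closure properties of the countermodel supplied by completeness --- and the corollary then follows by soundness, with the \dueE-family excluded for precisely the reason you give (no preservation lemma for \Conndue). The one piece of bookkeeping to adjust is the \unoM-family: the preservation lemmas cover the supplementation and the quasi-filtering only under the hypothesis \Connuno{} (there is no \Conntre{} version for those two transformations), so you should invoke those \Connuno{} items directly --- which is legitimate and is what the paper intends, since the canonical model for the monotonic logics is supplemented and \connuno, and on supplemented models \Connuno{} and \Conntre{} coincide, as you yourself note.
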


\section{Constructive \textbf{\textsf K} and propositional \CCDL}\label{section CK and W}

We have seen in Section \ref{section semantics} that  
\boxmodel s coincide essentially with Goldblatt's neighbourhood spaces.
In Fairtlough and Mendler \cite{Fairtlough}, Goldblatt's spaces 
are considered in order to provide a semantics for Propositional Lax Logic (\PLL{}),
an intuitionistic monomodal logic for hardware verification 
that fails to validate the rule of necessitation.

We show in this section that the framework of \intmodel s
is also adapted to cover two additional  well-studied 
\intbilogic s, namely  \CK{} (for ``constructive \K'') by Bellin \emph{et al.}~\cite{Bellin}, and 
 \HW{}, as we call the propositional fragment of \wij's first-order logic \CCDL{} (\wij{} \cite{\wij}).
In particular, we show that the two systems 
can be included in our framework
by considering a  very simple additional property.

Different possible worlds semantics have already been given for the two logics.
In particular, logic \HW{} has both a relational semantics (\wij{} \cite{\wij}) 
and a neighbourhood semantics (Kojima \cite{Kojima}),
while a relational semantics for \CK{} 
has been given in Mendler and de Paiva \cite{Mendler1} by adding inconsistent worlds to the relational models for \HW{}.
However, if compared to the existing ones, our semantics has the advantage
of including \CK{} and \HW{} in a more general framework, 
that shows how  the two systems can be obtained as extensions of weaker logics in a modular way.
In addition, two further benefits concern specifically \CK{}.
In particular, to the best of our knowledge we are presenting the first neighbourhood semantics for this system.
Moreover, and most notably, this kind of models don't make use of inconsistent worlds.

In the following we first present logics \CK{} and \HW{}
by giving both the axiomatisations and the sequent calculi.
After that we define their \intmodel s and prove soundness and completeness.
Finally, we present their pre-existing possible worlds semantics 
and prove directly their equivalence with \intmodel s.

\subsection{Hilbert systems and sequent calculi}

Logic \CK{} (Bellin \emph{et al.}~\cite{Bellin}) is Hilbert-style defined by adding to \il{} the following axioms and rules:
\begin{center}
\axKbox{} \ $\Box(A\imp B)\imp (\Box A \imp \Box B)$,   \hfill 
\axKdiam{} \ $\Box(A\imp B)\imp (\diam A \imp \diam B)$, 
\hfill
\ax{$A$}\llab{\rulenbox}\uinf{$\Box A$}\disp.
\end{center}
Logic \HW{} is the extension of \CK{} with axiom  \axNdiam{} ($\neg\diam\bot$).%
\footnote{The axiomatisation given by \wij{} \cite{\wij} includes also $\diam(A \imp B) \imp (\Box A \imp \diam B)$; 
however this formula is derivable from the other axioms (cf.~e.g.~Simpson \cite{Simpson}, p.~48).}
It is worth noticing that,
given the syntactical equivalences that we have recalled in Section \ref{section monomodal},
an equivalent axiomatisation for \CK{} is obtained
by extending \il{} with
rules \rebox{} and \rediam, and axioms \axMbox, \axNbox, \axCbox, and \axKdiam{}
(as before, by adding also \axNdiam{} we obtain logic \HW{};
notice that axiom \axMdiam{} is derivable in both systems, 
{\it e.g.} from \rulenbox{} and \axKdiam).

Logics \CK{} and \HW{} are non-normal as they reject some form of distributivity of $\diam$ over $\lor$.
In particular, \HW{} rejects binary distributivity (\axCdiam), while \CK{} rejects both binary and nullary distributivity (\axCdiam, \axNdiam).
The modality $\Box$ is instead normal as the systems contain axiom \axKbox{} and the rule of necessitation.

Sequent calculi for \CK{} and \HW{} 
(denoted here as \GCK{} and \GW)
are defined, respectively, in Bellin \emph{et al.}~\cite{Bellin} and \wij{} \cite{\wij}.
In order to present the calculi we consider the following rule,
that we call \Wrule{} (for ``\wij''):

\begin{center}
\ax{$A_1, ...,A_n,B \seq C$}
\llab{\Wrule}
\rlab{\quad  $(n\geq 1)$} 
\uinf{$\G,\Box A_1, ...,\Box A_n,\diam B \seq \diam C$}
\disp.
\end{center}

\noindent
Both \cite{Bellin} and \cite{\wij} allow the set  $\{A_1, ..., A_n\}$ in \Wrule{} to be empty,
thus including implicitly \grmdiam. 
By uniformity with the formulation of the other rules, we require it to contain at least one formula.
Then, given the present formulation, \GCK{} and \GW{} are defined by extending \gtrei{} as follows:

\begin{center}
\begin{tabular}{l l l}
\GCK & := & \grmboxc{} + \grmdiam{} + \grulenbox{} + \Wrule \\

\GW & := & \grmboxc{} + \grmdiam{} + \grulenbox{} + \Wrule{} + \ginttrec{} + \grulendiam{}
\end{tabular}
\end{center}

Observe that \GW{} can be seen as an extension of our top calculus  \gMNboxC,
as it is \gMNboxC{} + \Wrule.
Instead, \GCK{} is not comparable with any our bimodal calculus,
as it contains rule \grulenbox{}  and doesn't contain \grulendiam,
what is never the case in the calculi of our cube.

\begin{theorem}[\cite{Bellin} for \GCK{}, \cite{\wij} for \GW]
Cut is admissible in \GCK{} and \GW.
Moreover, \GCK{} and \GW{} are equivalent with the corresponding axiomatisations.
\end{theorem}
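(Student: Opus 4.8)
The plan is to follow exactly the methodology already used for Theorems \ref{cut elim monomodal} and \ref{cut elimination with C}, so that all the genuine novelty is confined to the cases involving the new rule \Wrule. First I would record that weakening (\wk) and contraction (\ctr) are height-preserving admissible in \GCK{} and \GW, arguing as in the corresponding propositions of Sections \ref{section monomodal} and \ref{section bimodal}: like every other modal rule, \Wrule{} carries an arbitrary context $\G$ and its principal formulas are all introduced at once, so both structural rules permute upward without raising the derivation height. Then cut elimination is proved by the usual double induction, with primary induction on the weight of the cut formula (Definition \ref{weight of formulas}) and subinduction on the cut height, reducing each topmost application of \cut.

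The only cases that are not already covered are the principal/principal ones in which \Wrule{} introduces the cut formula; the combinations among the pure $\Box$-rules (\grmboxc, \grulenbox) and among the pure $\diam$-rules (\grmdiam, \grulendiam, \ginttrec) are handled verbatim as before. The representative new case is (\Wrule; \Wrule) with cut formula $\diam C$: the left premiss $\G\seq\diam C$ comes by \Wrule{} from $A_1,\dots,A_n,B\seq C$, while the right premiss $\G,\diam C\seq\diam E$ comes by \Wrule{} from $A_1,\dots,A_n,C\seq E$ with $\diam C$ principal. I would then cut on $C$, a formula of strictly smaller weight since $\w(\diam C)=\w(C)+2$, obtaining $A_1,\dots,A_n,B\seq E$ after suitable weakenings, and reapply \Wrule{} (now with $B$ as the unboxed diamond formula) to recover $\G\seq\diam E$. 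The cases (\grmboxc; \Wrule) and (\grulenbox; \Wrule), where the cut formula is a boxed formula consumed by \Wrule, reduce symmetrically to a cut on the unboxed subformula, again of smaller weight; for \GW{} the further cases (\Wrule; \grmdiam), (\Wrule; \ginttrec) and (\Wrule; \grulendiam) are analogous and terminate on \Wrule{} or \ginttrec. It is worth stressing that \GCK{} contains \grulenbox{} but not \grulendiam, which by the remark following Theorem \ref{cut elimination without C} would normally break cut elimination; here it does not, precisely because \Wrule{} is the only diamond-producing rule and always leaves a diamond in the succedent, so the offending sequent $\diam\bot\seq$ is simply not derivable.

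For the equivalence with the Hilbert systems I would argue as in Proposition \ref{equiv monomodal}, using admissibility of \cut. In one direction each axiom and rule of \CK{} (resp.\ \HW) is derived in the calculus: \axKbox{} and \rulenbox{} as in the monomodal box case, while \axKdiam{} is obtained by applying \Wrule{} with $n=1$ to the intuitionistically valid sequent $A\imp B, A\seq B$ and closing with \rimp; for \HW{} the axiom \axNdiam{} follows from \grulendiam{} applied to $\bot\seq$. In the converse direction each sequent rule is shown admissible over the axioms, the crucial one being \Wrule: from $\vd_{\CK} A_1\land\dots\land A_n\land B\imp C$, repeated necessitation together with \axKbox{} gives $\vd_{\CK}\Box A_1\land\dots\land\Box A_n\imp\Box(B\imp C)$, and \axKdiam{} yields $\vd_{\CK}\Box(B\imp C)\imp(\diam B\imp\diam C)$, whence $\vd_{\CK}\Box A_1\land\dots\land\Box A_n\land\diam B\imp\diam C$; the \HW-specific rules \ginttrec{} and \grulendiam{} are obtained the same way, additionally exploiting \axNdiam. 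I expect the main obstacle to be purely the bookkeeping in the \Wrule{} cut-cases — keeping the shared context and the arbitrary family of principal boxed formulas aligned across the weakenings — together with the care needed to confirm that \grulenbox{} in the absence of \grulendiam{} does not resurrect the counterexample of Section \ref{section bimodal}.
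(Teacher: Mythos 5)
Your reconstruction is sound, but note first that the paper itself offers no proof of this theorem: it is imported by citation from Bellin \emph{et al.}~\cite{Bellin} (for \GCK) and \wij~\cite{\wij} (for \GW), modulo the reformulation of \Wrule{} with $n\geq 1$ and \grmdiam{} kept as a separate rule. So your argument is genuinely self-contained where the paper defers to the literature, and it correctly confines the new work to the \Wrule-principal cases of the paper's double induction on cut weight and cut height; the equivalence direction via \rulenbox, \axKbox{} and \axKdiam{} (plus \axNdiam{} for \HW) is the standard argument and is right, as is your observation that \grulenbox{} without \grulendiam{} is harmless here because no rule of \GCK{} can produce $\diam\bot\seq\,$. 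Two details deserve more care than your sketch gives them. First, in the reduction for (\grulenbox; \Wrule) when the cut formula $\Box B$ is the \emph{only} boxed principal formula of the \Wrule{} application, you cannot reapply \Wrule: after cutting on $B$ no boxed formula remains, and \Wrule{} requires $n\geq 1$. The reduction must instead terminate with \grmdiam, which fortunately belongs to both \GCK{} and \GW; your list of terminating rules (``\Wrule{} or \ginttrec'') omits this case (in the original formulation of \cite{Bellin} and \cite{\wij}, where $\{A_1,\dots,A_n\}$ may be empty, the case does not arise, since \Wrule{} with $n=0$ \emph{is} \grmdiam). Second, in the principal--principal cases the boxed side formulas of the two \Wrule{}/\grmboxc{} applications need not coincide and may overlap, so the recombination step requires height-preserving contraction and not just weakening; you do establish \ctr{} first, so this is only a matter of invoking it explicitly, exactly as the paper does in its (\greboxc; \greboxc) case. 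Neither point undermines the proposal: with these repairs it is a correct, self-contained proof of a statement the paper only cites.
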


Notice that having \Wrule{} istead of our ``weak interaction'' rules,
allows us to take \grulenbox{} and not \grulendiam{} (as in \GCK),
and still obtain a cut-free calculus.
If instead we take both \Wrule{} and \grulendiam{} (an in \GW), we need to 
take also \ginttrec{} in order to have the \cut{} rule admissible, as it is shown by the following derivation:

\begin{center}
\ax{$p, \neg p \seq \bot$}
\llab{\Wrule}
\uinf{$\Box p, \diam\neg p \seq \diam\bot$}
\ax{$\bot\seq$}
\rlab{\grulendiam}
\uinf{$\Box p, \diam\neg p, \diam\bot \seq$}
\rlab{\cut}
\binf{$\Box p, \diam\neg p \seq$}
\disp
\end{center}

\noindent
It is immediate to verify that the endsequent $\Box p, \diam\neg p \seq$
is derivable in $\textup{\GW} \setminus \{\textup{\ginttrec}\}$
if and only if the \cut{} rule is applied,
but it has a cut-free derivation in \GW{}
by applying \ginttrec{} to $p, \neg p \seq$.
Notice also that adding \ginttrec{} to the calculus preserve the equivalence with the axiomatisation, as 
\inttre{} is derivable from \axKdiam{}, \rmdiam{} and \axNdiam.

\subsection{Intuitionistic neighbourhood models for \CK{} and \HW}

We now define \intmodel s for \CK{} and \HW,
 and prove soundness and completeness
of the two systems.

\begin{definition}[\Intmodel s for \CK{} and \HW]
A \intmodel{} for \CK{} (\CK-model in the following) is any \intmodel{} in which  
$\nbox$ is supplemented, closed under intersection and contains the unit; $\ndiam$ is supplemented;
and such that:
\begin{center}
If $\alpha\in\nbox(w)$ and $\beta\in\ndiam(w)$, then $\alpha\cap\beta\in\ndiam(w)$ \qquad  (\wcondition).       
\end{center}
A \intmodel{} for \HW{} (\HW-model in the following) is any \intmodel{}  for \CK{} satisfying also the 
 condition of \Connuno{} ($\nbox(w)\subseteq\ndiam(w)$).     
\end{definition}

Notice that, as a consequence, function $\ndiam$ in \HW-models contains the unit.
We now show that logics \CK{} and \HW{} are sound and complete with respect to
the corresponding models.

\begin{theorem}[Soundness]
Logics \CK{} and \HW{} are sound with respect to 
\CK- and \HW-models, respectively.
\end{theorem}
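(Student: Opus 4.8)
The plan is to prove soundness by showing that every axiom and rule of \CK{} (respectively \HW) is valid in, respectively, \CK-models (respectively \HW-models), and that validity is preserved by the inference rules. Since both systems extend \il, and since the forcing clauses for the boolean connectives and $\imp$ are exactly those of intuitionistic Kripke models, the intuitionistic axioms and \modusponens{} are handled by the standard Kripke soundness argument together with the already-proved hereditary property. So the work reduces to the modal axioms and rules: \rulenbox, \axKbox, \axKdiam{} for \CK, plus \axNdiam{} for \HW. Throughout I would freely use that $\nbox$ is supplemented, closed under intersection and contains the unit, that $\ndiam$ is supplemented, and the key interaction property \wcondition.

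First I would dispatch the easy cases. For \rulenbox: if $A$ is valid then $[A]_\M=\W$ in every model, and since $\nbox$ contains the unit, $\W\in\nbox(w)$ for all $w$, so $w\Vd\Box A$ everywhere. For \axKbox: suppose $w\Vd\Box(A\imp B)$ and $w\Vd\Box A$, i.e.\ $[A\imp B]\in\nbox(w)$ and $[A]\in\nbox(w)$. By closure under intersection, $[A\imp B]\cap[A]\in\nbox(w)$; one checks $[A\imp B]\cap[A]\subseteq[B]$ (at any $v$ in the intersection, $v\Vd A\imp B$ and $v\Vd A$ give $v\Vd B$), so by supplementation $[B]\in\nbox(w)$, hence $w\Vd\Box B$. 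Because the $\imp$-clause quantifies over all $\more$-successors, I would phrase this at an arbitrary $v\more w$ to respect the intuitionistic reading, but heredity of $\Box$-formulas makes this routine. For \axNdiam{} in the \HW-case: $w\Vd\diam\bot$ would mean $\W\setminus[\bot]=\W\setminus\emptyset=\W\notin\ndiam(w)$; but as noted after the definition, in \HW-models $\ndiam$ contains the unit (since $\nbox$ does and $\nbox(w)\subseteq\ndiam(w)$ by \Connuno), so $\W\in\ndiam(w)$, whence $w\not\Vd\diam\bot$, i.e.\ $w\Vd\neg\diam\bot$.

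The main obstacle is \axKdiam, the mixed axiom $\Box(A\imp B)\imp(\diam A\imp\diam B)$, which is precisely where \wcondition{} must be invoked. I would argue as follows: fix $w$ and assume $w\Vd\Box(A\imp B)$ and $w\Vd\diam A$, so $[A\imp B]\in\nbox(w)$ and $\W\setminus[A]\notin\ndiam(w)$; the goal is $\W\setminus[B]\notin\ndiam(w)$. The natural route is contraposition: suppose instead $\W\setminus[B]\in\ndiam(w)$. Then by \wcondition{} with $\alpha=[A\imp B]\in\nbox(w)$ and $\beta=\W\setminus[B]\in\ndiam(w)$ we get $[A\imp B]\cap(\W\setminus[B])\in\ndiam(w)$. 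The crux is the set inclusion $[A\imp B]\cap(\W\setminus[B])\subseteq\W\setminus[A]$: if $v\Vd A\imp B$ and $v\not\Vd B$, then $v\not\Vd A$ (else $v\Vd B$), so $v\in\W\setminus[A]$. Since $\ndiam$ is supplemented, this inclusion upgrades $\W\setminus[A]\in\ndiam(w)$, contradicting $w\Vd\diam A$. Hence $\W\setminus[B]\notin\ndiam(w)$ and $w\Vd\diam B$. Again I would be careful to carry the argument at all $v\more w$ to match the intuitionistic implication clause, using heredity, but the set-theoretic heart is the inclusion above combined with \wcondition{} and supplementation of $\ndiam$. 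This is the one place where the specific shape of \wcondition{} is essential, so I expect it to be the step requiring the most care.

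In all cases I would close by noting that validity of the conclusion follows for every world of every model in the relevant class, and that the preservation of validity under the rules together with the base case of intuitionistic axioms yields soundness of the whole Hilbert system. I would present \CK{} first and then remark that the \HW-case adds only \axNdiam, already handled, while all \CK-axioms remain valid since \HW-models are \CK-models.
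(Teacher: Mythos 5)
Your proof is correct and its core is essentially the paper's own argument: the paper also proves soundness by checking only the crucial axiom \axKdiam, using \wcondition{} with $\alpha=[A\imp B]$, $\beta=\W\setminus[B]$, the inclusion $[A\imp B]\cap(\W\setminus[B])\subseteq\W\setminus[A]$, and supplementation of $\ndiam$ (phrased directly as ``assume $w\Vd\Box(A\imp B)$ and $w\not\Vd\diam B$'' rather than via your contradiction, which is the same contrapositive). The remaining cases you spell out (\rulenbox, \axKbox, \axNdiam, the intuitionistic base) are exactly the ones the paper treats as routine and omits.
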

\begin{proof}
We just consider axiom \axKdiam. 
Assume $w\Vd \Box(A\imp B)$ and $w\not\Vd\diam B$.
Then $[A\imp B]\in\nbox(w)$ and $\W\setminus[B]\in\ndiam(w)$.
By \wcondition,
$[A\imp B]\cap (\W\setminus[B])\in\ndiam(w)$.
Since $[A\imp B]\cap (\W\setminus[B])\subseteq (\W\setminus[A])$,
by supplementation
$\W\setminus[A]\in\ndiam(w)$;
therefore $w\not\Vd\diam A$.
\end{proof}

Completeness is proved as before by the canonical model construction.

\begin{lemma}\label{lemma canonical model CK}
Let the canonical models $\McCK$ for \CK, and $\McHW$ for \HW, be defined as in Lemma \ref{supplemented canonical model}. 
Then $\McCK$ and $\McHW$ are, respectively, a \CK-model and a \HW-model.
\end{lemma}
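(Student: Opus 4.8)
The plan is to read off every defining clause of a \CK-model (resp.\ a \HW-model) directly from the properties already established for the supplemented canonical model of Lemma~\ref{supplemented canonical model}, so that the only genuinely new point is the interaction condition \wcondition. Both \CK{} and \HW{} contain \axMbox{} and \axMdiam{} (the latter derivable, e.g.\ from \rulenbox{} and \axKdiam), hence Lemma~\ref{supplemented canonical model} applies with $\logic=\CK$ and $\logic=\HW$, and $\McCK,\McHW$ are exactly the models $\Mcplus$ of that lemma. Thus $\nboxplus$ and $\ndiamplus$ are supplemented by construction, the truth lemma holds, and points (i)--(iii) of Lemma~\ref{canonical model} together with point (iv) of Lemma~\ref{supplemented canonical model} are available. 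As \CK{} contains \axNbox{} and \axCbox, points (i) and (ii) give that $\nboxplus$ contains the unit and is closed under intersection; with supplementation this accounts for all the \CK-clauses except \wcondition.

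The core of the proof is to identify \wcondition{} as the canonical trace of a \CK-theorem, namely
\[
\vd_{\CK}\ \Box A\land\diam C\imp\diam(A\land C).
\]
This schema I would derive as follows: axiom $\land$-$3$ gives the intuitionistic theorem $A\imp(C\imp(A\land C))$; applying monotonicity \rmbox{} (available since \CK{} proves \axMbox) yields $\Box A\imp\Box(C\imp(A\land C))$; instantiating \axKdiam{} at the formula $C\imp(A\land C)$ gives $\Box(C\imp(A\land C))\imp(\diam C\imp\diam(A\land C))$; composing the last two implications and uncurrying produces the displayed schema. The same derivation holds in \HW.

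To transfer this to the canonical model, I would argue by contradiction. Suppose $\alpha\in\nboxplus(X)$, $\beta\in\ndiamplus(X)$ but $\alpha\cap\beta\notin\ndiamplus(X)$. By definition there are $A$ with $\Box A\in X$ and $\up A\subseteq\alpha$, and $C$ with $\diam C\in X$ and $\alpha\cap\beta\subseteq\Wc\setminus\up C$. Closure of $X$ under derivation together with the displayed schema gives $\diam(A\land C)\in X$. On the other hand, from $\up A\subseteq\alpha$ and $\alpha\cap\beta\cap\up C=\emptyset$ one gets $\up A\cap\up C\cap\beta=\emptyset$, that is (using $\up(A\land C)=\up A\cap\up C$) $\beta\subseteq\Wc\setminus\up(A\land C)$; since $\diam(A\land C)\in X$, this exhibits $\beta$ inside one of the sets excluded from $\ndiamplus(X)$, so $\beta\notin\ndiamplus(X)$, contradicting the hypothesis. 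Hence \wcondition{} holds in $\McCK$, which shows $\McCK$ is a \CK-model; the identical argument (as \HW{} extends \CK) gives \wcondition{} in $\McHW$, so $\McHW$ is a \CK-model as well.

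It remains, for the \HW-case, to check the extra condition \Connuno. Since \HW{} derives \inttre{} (from \axKdiam, \rmdiam{} and \axNdiam, as already observed), point (iv) of Lemma~\ref{supplemented canonical model} gives at once that $\McHW$ is \connuno, so $\McHW$ is a \CK-model satisfying \Connuno, i.e.\ a \HW-model (with $\ndiamplus$ containing the unit as the automatic consequence noted after the definition). The step I expect to be the real obstacle is the first one: spotting the theorem $\Box A\land\diam C\imp\diam(A\land C)$ and recognising that it is precisely the syntactic counterpart of \wcondition. Once that is in hand, the set-theoretic bookkeeping with $\up$ and the supplementation built into $\ndiamplus$ is entirely routine.
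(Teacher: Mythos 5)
Your proof is correct and is essentially the paper's own argument: the paper likewise reduces everything to verifying \wcondition{} in $\Mcplus$, using exactly the derivable schema $(\Box A\land\diam B)\imp\diam(A\land B)$ (obtained from $A\imp(B\imp A\land B)$ by \rmbox{} and \axKdiam), the same set-theoretic inclusion $\beta\subseteq\Wc\setminus\up(A\land B)$ to place $\beta$ among the sets excluded from $\ndiamplus(X)$, and Lemma \ref{supplemented canonical model}(iv) together with the derivability of \inttre{} in \HW{} to obtain \Connuno. The only difference is presentational: the paper states the verification contrapositively (from $\alpha\in\nboxplus(X)$ and $\alpha\cap\beta\notin\ndiamplus(X)$ it concludes $\beta\notin\ndiamplus(X)$), whereas you frame the identical computation as a proof by contradiction.
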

\begin{proof}
We show that both $\McCK$ and $\McHW$ satisfy the condition of \wcondition:
Assume $\alpha\in\nboxplus(X)$ and $\alpha\cap\beta\not\in\ndiamplus(X)$.
Then there are $A,B\in\lan$ such that $\up A\subseteq\alpha$, $\alpha\cap\beta\subseteq\Wc\setminus\up B$ and $\Box A,\diam B\in X$.
As a consequence, 
$\up A \cap\beta \subseteq \Wc\setminus\up B$,
that by standard properties of set inclusion
 implies $\beta\subseteq (\Wc\setminus\up A) \cup (\Wc\setminus\up B)
= \Wc\setminus\up (A\land B)$.
%\nb{Because $\alpha\cap\beta\subseteq\gamma$ implies $\beta\subseteq \gamma \cup \W\setminus\alpha$.}
Moreover,
since $(\Box A \land\diam B)\imp \diam(A\land B)$ is derivable 
(from $A \imp (B \imp A\land B)$, by \rmbox{} and \axKdiam{}),
we have $\diam(A\land B)\in X$.
Thus by definition, $\beta\notin\ndiamplus(X)$.
In addition, by Lemma \ref{supplemented canonical model} (iv)
$\McHW$ is also \connuno, as \inttre{} is derivable in \HW{}.
\end{proof}

\begin{theorem}[Completeness]
Logics \CK{} and \HW{} are complete with respect to 
\CK- and \HW-models, respectively.
\end{theorem}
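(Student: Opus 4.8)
The plan is to reuse, essentially verbatim, the canonical-model argument of Theorem \ref{theorem completeness}, now specialised to the case where $\logic$ is \CK{} or \HW. Both systems contain \axMbox{} and \axMdiam, so the relevant construction is the \emph{supplemented} canonical model of Lemma \ref{supplemented canonical model}, whose instances $\McCK$ and $\McHW$ have just been certified to be genuine \CK- and \HW-models in Lemma \ref{lemma canonical model CK}. Consequently all the substantive work is already discharged, and completeness follows simply by assembling the existing lemmas.

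Concretely, I would argue contrapositively, letting $\logic$ range over $\{\CK, \HW\}$ and writing $\McCK$ or $\McHW$ accordingly. Assume $\not\vd_\logic A$. Since $\vd_\logic \top$, this gives $\not\vd_\logic \top \imp A$, so by Lemma \ref{prime sets}$(a)$ (applied with empty left-hand side) there is a $\logic$-prime set $\Pi$ with $A \notin \Pi$. By construction $\Pi$ is a world of the canonical model, and the truth lemma established in Lemma \ref{supplemented canonical model} — namely $X \Vd B$ iff $B \in X$ for every world $X$ and every $B \in \lan$ — yields $\McCK, \Pi \not\Vd A$ (respectively $\McHW, \Pi \not\Vd A$). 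Since $\McCK$ is a \CK-model (and $\McHW$ a \HW-model) by Lemma \ref{lemma canonical model CK}, the formula $A$ is falsified at a world of a model of the required class, hence is not valid on that class. This establishes completeness for both \CK{} and \HW.

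There is no genuine obstacle left to overcome: the two delicate points — the truth lemma for the supplemented canonical model, and the verification that $\McCK$ and $\McHW$ satisfy \wcondition{} (and, for \HW, the condition \Connuno) — are exactly what Lemmas \ref{supplemented canonical model} and \ref{lemma canonical model CK} provide. The only matter requiring a moment's care is purely a bookkeeping one, and I would flag it explicitly: the extra property \wcondition{} constrains only the interaction of $\nboxplus$ and $\ndiamplus$ and never enters the modal induction steps in the proof of the truth lemma, so the truth lemma of Lemma \ref{supplemented canonical model} transfers unchanged to $\McCK$ and $\McHW$, which share the same underlying neighbourhood functions and differ from the generic supplemented canonical model only by satisfying this additional closure condition.
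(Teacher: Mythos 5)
Your proposal is correct and takes essentially the same route as the paper, whose entire proof reads ``Same proof of Theorem \ref{theorem completeness}, using Lemma \ref{lemma canonical model CK}'': you simply unfold that argument, combining Lemma \ref{prime sets}$(a)$, the truth lemma of Lemma \ref{supplemented canonical model}, and Lemma \ref{lemma canonical model CK}. Your explicit remark that the additional condition \wcondition{} plays no role in the truth-lemma induction (and that \CK{} and \HW{} contain \axMbox{} and \axMdiam, so the supplemented canonical model is the right one) is precisely the bookkeeping the paper leaves implicit.
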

\begin{proof}
Same proof of Theorem \ref{theorem completeness}, using Lemma \ref{lemma canonical model CK}.
\end{proof}

\subsection{Pre-existing semantics and direct proofs of equivalence} 

\subsubsection{Semantic equivalence for \HW}

We now consider pre-existing possible worlds semantics for systems \CK{} and \HW{},
and prove directly their equivalence with \intmodel s.
We begin with system \HW,
and consider the relational models by \wij{} \cite{\wij} as well as the neighbourhood models by Kojima \cite{Kojima}.

\begin{definition}[Relational models for \HW{} (\wij{} \cite{\wij})]\label{relational model for W} 
A relational model for \HW{} is a tuple 
$\M=\langle\W,\less,\R,\V\rangle$,
where $\W$, $\less$ and $\V$ are as in Definition \ref{\intmodel},
and $\R$ is any binary relation on $\W$.
The forcing relation $w\Vdr A$ 
is defined as $w\Vd A$ (Definition \ref{\intmodel}) for $A\equiv p, B\land C, B\lor C, B\imp C$;
and in the following way for modal formulas:

\vspace{0.2cm}
\begin{tabular}{l l l}
$w\Vdr\Box B$ & iff & for all $v\more w$, for all $u\in\W$, $v\R u$ implies $u\Vdr B$; \\
$w\Vdr\diam B$ & iff & for all $v\more w$, there is $u\in\W$ s.t. $v\R u$ and $u\Vdr B$. \\
\end{tabular}
\end{definition}

\begin{definition}[Kojima's neighbourhood models for \HW{} (Kojima \cite{Kojima})]
Kojima's neighbourhood models for \HW{} are tuples $\M=\langle \W, \less, \Nk, \V \rangle$, where
$\W$, $\less$ and $\V$ are, respectively, a non-empty set, a preorder on $\W$ and a
hereditary valuation function; 
and $\Nk$ is a neighbourhood function $\W \longrightarrow \pow(\pow(\W))$ such that:

\vspace{0.2cm}
\begin{tabular}{l}
$\bullet$ \ $w\less v$ implies $\Nk(v)\subseteq \Nk(w)$; \\
$\bullet$ \ $\Nk(w)\not=\emptyset$ for all $w\in\W$.
\end{tabular}

\vspace{0.2cm}
\noindent
The forcing relation $w\Vdk A$ is defined as usual for $A\equiv p,\bot, B\land C, B\lor C, B\imp C$; 
and for modal formulas it is defined as follows:

\vspace{0.2cm}
\begin{tabular}{l l l}
$w\Vdk \Box B$ & iff & for all $\alpha\in\Nk(w)$, for all $v\in\alpha$, $v\Vdk B$; \\
$w\Vdk \diam B$ & iff & for all $\alpha\in\Nk(w)$, there is $v\in\alpha$ s.t. $v\Vdk B$. \\
\end{tabular}
\end{definition}

\begin{theorem}[\wij{} \cite{\wij}, Kojima \cite{Kojima}]
Logic \HW{} is sound and complete w.r.t. relational models for \HW{},
as well as w.r.t. Kojima's models for \HW.
\end{theorem}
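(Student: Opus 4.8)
Since the statement is already credited to \wij{} \cite{\wij} and Kojima \cite{Kojima}, the plan is to recover it inside the present framework rather than to redo their original arguments. Soundness is the routine part: I would argue by induction on the height of a derivation in \GW{} (equivalently on the length of a Hilbert proof), checking that every axiom of \HW{} is valid and every rule preserves validity in each semantics. Only the modal clauses need attention, and against the two-step relational clauses for $\Box$ and $\diam$ the verification of \axKbox, \axKdiam, \rulenbox{} and \axNdiam{} is direct; the universal quantifier over $\more$-successors in the clause for $\diam$ is exactly what lets \axCdiam{} fail. The check for Kojima's clauses is identical, with $\Nk(w)$ playing the role of the successor sets.

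For completeness the economical route is to reduce to the completeness of \HW{} with respect to \HW-models established above, by producing truth-preserving translations between \HW-models and each pre-existing class. First I would treat the easy direction: a relational model $\langle\W,\less,\R,\V\rangle$ yields an \HW-model if one sets $\nbox(w)=\{\alpha\subseteq\W \mid \R[v]\subseteq\alpha \textup{ for all } v\more w\}$ and lets $\ndiam(w)$ be the supplementation of $\{\R[v]\mid v\more w\}$. A short computation then shows that $(hp)$ holds, that $\nbox$ is a filter containing the unit, that $\ndiam$ is supplemented, that \wcondition{} and \Connuno{} hold, and that the two modal forcing clauses agree by construction, so that forcing is preserved formula by formula; this transfers soundness from \HW-models to relational models. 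Kojima's models are handled the same way, with the single function $\Nk(w)$ taking over the role of the generators $\{\R[v]\mid v\more w\}$.

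The hard part will be the converse translation, needed to transfer completeness, which must collapse the two independent functions $\nbox$ and $\ndiam$ of an \HW-model into a single accessibility relation (respectively a single Kojima function). The idea is to put $\R[w]$ equal to the core $\bigcap\nbox(w)$ of the filter $\nbox(w)$ and then to check that this same $\R$ also computes $\diam$ correctly, i.e.\ that $\ndiam$ is forced to agree with $\nbox$ in the pattern demanded by the relational clause; I expect \wcondition{} together with \Connuno{} to be exactly what makes this possible. Equivalently, one can run a direct canonical construction with worlds the \HW-prime sets, $\less$ read as $\subseteq$, and $X\R Y$ iff $\{C\mid\Box C\in X\}\subseteq Y$; here the delicate case of the truth lemma is $\diam$, where from $\diam B\in X$ one must produce, for every prime $Y\supseteq X$, a prime $Z$ containing $\{C\mid\Box C\in Y\}\cup\{B\}$. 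The consistency of this set is the crux: were it inconsistent one could derive $\diam\bot$ (using \axCbox, \rmbox{} and \axKdiam) and so contradict \axNdiam{}, which is precisely the point where \HW{} succeeds and \CK{} would not. I expect this $\diam$-existence step, together with the bookkeeping needed to match Wijesekera's universal quantifier over $\more$-successors, to be the real obstacle; everything else is standard.
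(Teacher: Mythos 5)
You should first note that the paper itself does not prove this theorem: it is quoted as background from \wij{} and Kojima, and what the paper actually proves is the package of model transformations (Lemmas \ref{lemma W Kojima to neigh}, \ref{lemma W neigh to Kojima}, \ref{lemma W rel to neigh}, \ref{lemma W neigh to rel}) which, combined with completeness w.r.t.\ \intmodel s, recovers the cited result. Your overall plan --- routine soundness plus truth-preserving translations reducing completeness to Theorem \ref{theorem completeness} --- is exactly in that spirit, and your easy direction (relational/Kojima $\to$ \intmodel) coincides with the paper's Lemma \ref{lemma W rel to neigh}. The gap is in your hard direction, and it is fatal as stated: putting $\R[w]=\bigcap\nbox(w)$ and keeping the same order does \emph{not} compute $\diam$ correctly, and \wcondition{} together with \Connuno{} does not force it to. Counterexample: $\W=\{a,b\}$ with the identity order, $\nbox(a)=\nbox(b)=\{\W\}$, $\ndiam(a)=\ndiam(b)=\{\{b\},\W\}$, $\V(a)=\{p\}$, $\V(b)=\emptyset$. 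This is a \intmodel{} for \HW{} (all closure and interaction conditions are trivial since $\nbox$ contains only the unit), and $a\not\Vd\diam p$ because $\W\setminus[p]=\{b\}\in\ndiam(a)$; yet in your induced relational model $\R[a]=\bigcap\nbox(a)=\W\ni a$ and $a\Vdr p$, so $a\Vdr\diam p$. The family $\ndiam(w)$ simply cannot be encoded by the single successor set $\bigcap\nbox(w)$; the same defect hits the Kojima half if one sets $\Nk(w)=\{\bigcap\nbox(w)\}$. The paper's constructions keep the whole family: $\Nk(w)=\{\alpha\in\ndiam(w)\mid\alpha\subseteq\bigcap\nbox(w)\}$ in Lemma \ref{lemma W neigh to Kojima}, and, for the relational case, worlds are \emph{pairs} $(w,\alpha)$ with $\alpha\in\ndiam(w)$ and $\alpha\subseteq\bigcap\nbox(w)$, with $(w,\alpha)\Rstar(v,\beta)$ iff $v\in\alpha$ and $(w,\alpha)\lessstar(v,\beta)$ iff $w\less v$ (Lemma \ref{lemma W neigh to rel}); the universal quantifier over $\lessstar$-successors in the relational $\diam$-clause then ranges over all copies $(w,\delta)$ of $w$, and this multiplication of worlds is precisely what simulates quantification over $\ndiam(w)$.

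Your fallback canonical construction also fails, and not where you located the difficulty. With prime theories as worlds and $X\R Y$ iff $\Box^{-1}X=\{C\mid\Box C\in X\}\subseteq Y$, the existence step you flag (from $\diam B\in X$, every prime $Y\supseteq X$ has an $\R$-successor containing $B$) does work, exactly by the \axNdiam-argument you sketch. What breaks is the \emph{converse} direction of the truth lemma for $\diam$. The set $\{\neg\Box\neg p,\neg\diam p\}$ is \HW-consistent (it is satisfied at $a$ in the two-world model above, and \HW{} is sound for \HW-models), so there is a prime $X$ with $\neg\Box\neg p\in X$ and $\diam p\notin X$. Now take any prime $Y\supseteq X$: if $\Box^{-1}Y\cup\{p\}$ were inconsistent, then $\vd C_1\land\dots\land C_n\imp\neg p$ for some $\Box C_1,\dots,\Box C_n\in Y$, whence $\Box\neg p\in Y$ by \rmbox{}, \axCbox{} and closure under derivation, contradicting $\neg\Box\neg p\in Y$; so $\Box^{-1}Y\cup\{p\}$ is consistent and extends to a prime $Z$ with $Y\R Z$ and $p\in Z$. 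Hence $X$ forces $\diam p$ in your canonical model although $\diam p\notin X$: with this relation $\diam$ collapses onto $\neg\Box\neg$, which is exactly the duality \HW{} refuses. A natural repair adds $\{\diam C\mid C\in Z\}\subseteq Y$ to the definition of $Y\R Z$, which fixes this direction (take $Y=X$), but then your existence step becomes the genuinely hard one: saturating $Z\supseteq\Box^{-1}Y\cup\{B\}$ to a prime set while maintaining $\{\diam C\mid C\in Z\}\subseteq Y$ requires, when splitting a disjunction, passing from $\diam(E\land(C\lor D))\in Y$ to $\diam(E\land C)\in Y$ or $\diam(E'\land D)\in Y$ --- an instance of the rejected \axCdiam. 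This is why the paper (like Kojima) reaches relational completeness indirectly, through neighbourhood completeness plus the pair construction, rather than by a direct canonical relational model; if you want a self-contained proof, that is the route to take.
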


That relational, Kojima's and \intmodel s for \HW{} are equivalent is a corollary of the respective completeness theorems.
It is instructive, however, to prove the equivalence directly. 
A proof of equivalence of Kojima's and relational models is given in Kojima \cite{Kojima}.
Here we prove directly the equivalence of Kojima's and \intmodel s for \HW{}.
By combining the two proofs we then obtain direct transformations
between relational and \intmodel s. 

The following property will be considered in the proof of some of the next lemmas:
\begin{center}
For all $\alpha\in\ndiam(w)$, there is $\beta\in\ndiam(w)$ s.t. $\beta\subseteq\alpha$ and $\beta\subseteq\bigcap\nbox(w)$ \ (\wconditionbis).
\end{center}
This property is satisfied by all \intmodel s for \HW{} and for \CK,
as it follows from the intersection closure of $\nbox$ and the \wcondition.

\begin{lemma}\label{lemma W Kojima to neigh}
Let  $\Mk=\langle \W,\less,\Nk,\V\rangle$ be a \komodel{} for \HW, 
and let $\Mn$ be the model $\langle \W,\less,\nbox, \ndiam,\V\rangle$ where
$\W$, $\less$ and $\V$ are as in $\Mk$, and:

\vspace{0.2cm}
\begin{tabular}{l}
$\nbox(w)=\{\alpha\subseteq \W \mid \bigcup\Nk(w)\subseteq\alpha\}$; \\
$\ndiam(w)=\{\alpha\subseteq \W \mid \textup{there is }\beta\in\Nk(w) \textup{ s.t. } \beta\subseteq\alpha\}$. \\
\end{tabular}

\vspace{0.2cm}
\noindent
Then $\Mn$ is a \intmodel{} for \HW{} and is pointwise equivalent to $\Mk$.
\end{lemma}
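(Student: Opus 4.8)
The plan is to prove the two assertions of the lemma separately: first that $\Mn$ is genuinely an \HW-model, and then that the two models force the same formulas at every world, which is what ``pointwise equivalent'' means. The second assertion I would establish by induction on the formula $A$, showing $\Mn, w \Vd A$ iff $\Mk, w \Vdk A$.

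For the first assertion, since $\W$, $\less$ and $\V$ are inherited from $\Mk$, only the conditions governing $\nbox$ and $\ndiam$ need checking, and each follows directly from the definitions. Supplementation of both functions is immediate: any superset of a set containing $\bigcup\Nk(w)$ still contains it, and any superset of a set having some $\beta\in\Nk(w)$ below it still has $\beta$ below it. Closure under intersection of $\nbox$ follows because $\bigcup\Nk(w)\subseteq\alpha$ and $\bigcup\Nk(w)\subseteq\beta$ give $\bigcup\Nk(w)\subseteq\alpha\cap\beta$, and $\W\in\nbox(w)$ is trivial. The heredity condition $(hp)$ is exactly where Kojima's antitonicity (that $w\less v$ implies $\Nk(v)\subseteq\Nk(w)$) is used: it yields $\bigcup\Nk(v)\subseteq\bigcup\Nk(w)$, whence $\nbox(w)\subseteq\nbox(v)$, and dually $\ndiam(v)\subseteq\ndiam(w)$. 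For \wcondition, given $\alpha\in\nbox(w)$ and $\beta\in\ndiam(w)$ I would pick a witness $\gamma\in\Nk(w)$ with $\gamma\subseteq\beta$; then $\gamma\subseteq\bigcup\Nk(w)\subseteq\alpha$, so $\gamma\subseteq\alpha\cap\beta$ witnesses $\alpha\cap\beta\in\ndiam(w)$. Finally, \Connuno{} ($\nbox(w)\subseteq\ndiam(w)$) is the one place where Kojima's nonemptiness requirement $\Nk(w)\neq\emptyset$ is essential: choosing any $\gamma\in\Nk(w)$ and noting $\gamma\subseteq\bigcup\Nk(w)\subseteq\alpha$ for $\alpha\in\nbox(w)$ shows $\alpha\in\ndiam(w)$; the same observation gives that $\ndiam$ contains the unit.

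For the second assertion, the induction on $A$ runs smoothly. The cases $A\equiv p,\bot,B\land C,B\lor C$ are immediate, and $A\equiv B\imp C$ as well, since both models use the same intuitionistic Kripke clause over the shared preorder $\less$; the induction hypothesis thus also identifies the truth sets, $[B]_{\Mn}=[B]_{\Mk}$, for proper subformulas, and I would record this identification explicitly so that the modal clauses can be compared directly. For $A\equiv\Box B$ the equivalence reduces to observing that $[B]\in\nbox(w)$, i.e.\ $\bigcup\Nk(w)\subseteq[B]$, says precisely that every point of every $\alpha\in\Nk(w)$ forces $B$, which is Kojima's $\Box$-clause. For $A\equiv\diam B$ one unwinds the complement-and-nonmembership formulation: $\W\setminus[B]\notin\ndiam(w)$ means that no $\beta\in\Nk(w)$ satisfies $\beta\subseteq\W\setminus[B]$, equivalently that every $\beta\in\Nk(w)$ contains a point of $[B]$, which is Kojima's $\diam$-clause.

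This lemma is essentially a bookkeeping verification, so there is no deep obstacle; the points that deserve care are the reliance on $\Nk(w)\neq\emptyset$ for \Connuno{} (and for the unit of $\ndiam$), and the correct dualisation in the $\diam$-case, where the negative neighbourhood condition $\W\setminus[B]\notin\ndiam(w)$ must be turned into the positive statement that each neighbourhood meets $[B]$. These are precisely the spots I would write out in full, leaving the supplementation and closure checks terse.
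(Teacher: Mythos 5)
Your proposal is correct and follows essentially the same route as the paper's proof: the structural conditions (supplementation, intersection closure, the unit, and $(hp)$) are dispatched as immediate, \wcondition{} is verified by the same witness $\gamma\in\Nk(w)$ with $\gamma\subseteq\alpha\cap\beta$, \Connuno{} by the same appeal to $\Nk(w)\neq\emptyset$, and the pointwise equivalence by the same induction whose only substantive cases are $\Box B$ and $\diam B$, unwound exactly as in the paper. No gaps.
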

\begin{proof}
It is immediate to verify that $\nbox$ and $\ndiam$ are supplemented and contain the unit; that $\nbox$ is closed under intersection;
and that $w\less v$ implies $\nbox(w)\subseteq\nbox(v)$ and $\ndiam(v)\subseteq\ndiam(w)$.
We show that $\Mn$ satisfies the other properties of \wmodel s.

(\Connuno) \ Assume $\alpha\in\nbox(w)$. 
Then $\bigcup\Nk(w)\subseteq\alpha$, and, since $\Nk(w)\not=\emptyset$, there is $\beta\in\Nk(w)$ such that $\beta\subseteq\alpha$.
Therefore $\alpha\in\ndiam(w)$.

(\wcondition) \ Assume $\alpha\in\nbox(w)$ and $\beta\in\ndiam(w)$.
Then $\bigcup\Nk(w)\subseteq\alpha$ and there is $\gamma\in\Nk(w)$ such that $\gamma\subseteq\beta$.
Thus $\gamma\subseteq\bigcup\Nk(w)$, which implies $\gamma\subseteq\alpha\cap\beta$.
Therefore $\alpha\cap\beta\in\ndiam(w)$.

By induction on $A$ we now prove that for all $A\in\lan$ and all $w\in\W$, 
\begin{center}
$\Mn, w \Vdn A$ \ iff \ $\Mk, w\Vdk A$.
\end{center}
We only consider the inductive cases $A\equiv\Box B, \diam B$.

$A\equiv\Box B$.
$\Mn, w \Vdn \Box B$ iff $[B]_{\Mn}\in\nbox(w)$ iff $\bigcup\Nk(w)\subseteq[B]_{\Mn}$ iff (i.h.) $\bigcup\Nk(w)\subseteq[B]_{\Mk}$
iff for all $\alpha\in\Nk(w)$, $\alpha\subseteq [B]_{\Mk}$ iff $\Mk, w \Vdk \Box B$.

$A\equiv\diam B$.
$\Mn, w \Vdn \diam B$ iff $\W\setminus [B]_{\Mn}\notin\ndiam(w)$ 
iff for all $\alpha\in\Nk(w)$, $\alpha\cap[B]_{\Mn}\not=\emptyset$
iff (i.h.) for all $\alpha\in\Nk(w)$, $\alpha\cap[B]_{\Mk}\not=\emptyset$ iff $\Mk, w \Vdk \diam B$.
\end{proof}

\begin{lemma}\label{lemma W neigh to Kojima}
Let  $\Mn=\langle \W,\less,\nbox, \ndiam,\V\rangle$ be a \intmodel{} for \HW,
and let  $\Mk$ be the model $\langle \W,\less,\Nk,\V\rangle$ where
$\W$, $\less$ and $\V$ are as in $\Mn$, and:

\vspace{0.2cm}
\begin{tabular}{l}
$\Nk(w)=\{\alpha\in\ndiam(w) \mid \alpha\subseteq \bigcap\nbox(w)\}$.
\end{tabular}

\vspace{0.2cm}
\noindent
Then $\Mk$ is a \komodel{} for \HW{} and is pointwise equivalent to $\Mn$.
\end{lemma}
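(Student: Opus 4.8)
The plan is to mirror the structure of Lemma~\ref{lemma W Kojima to neigh}: first check that $\Mk$ is a genuine \komodel{} for \HW, and then establish pointwise equivalence by induction on the formula. The shared components $\W$, $\less$ and $\V$ are already as required, so for the first part it suffices to verify the two defining conditions on $\Nk$. Monotonicity is immediate from ($hp$): if $w\less v$ then $\ndiam(v)\subseteq\ndiam(w)$ and $\nbox(w)\subseteq\nbox(v)$, the latter giving $\bigcap\nbox(v)\subseteq\bigcap\nbox(w)$; hence any $\alpha\in\Nk(v)$ lies in $\ndiam(w)$ and is contained in $\bigcap\nbox(w)$, so $\alpha\in\Nk(w)$. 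For non-emptiness I would use that $\nbox$ contains the unit together with \Connuno{} to get $\W\in\ndiam(w)$, and then apply \wconditionbis{} to produce some $\beta\in\ndiam(w)$ with $\beta\subseteq\bigcap\nbox(w)$; such a $\beta$ belongs to $\Nk(w)$, so $\Nk(w)\neq\emptyset$.

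For the equivalence I would prove $\Mn,w\Vdn A$ iff $\Mk,w\Vdk A$ by induction on $A$, the propositional cases being routine since $\W$, $\less$ and $\V$ coincide. The diamond case goes through cleanly: by the induction hypothesis, $\Mk,w\Vdk\diam B$ fails iff some $\alpha\in\Nk(w)$ satisfies $\alpha\subseteq\W\setminus[B]_{\Mn}$. Since $\Nk(w)\subseteq\ndiam(w)$, supplementation of $\ndiam$ turns the existence of such an $\alpha$ into $\W\setminus[B]_{\Mn}\in\ndiam(w)$, which is precisely the failure of $\Mn,w\Vdn\diam B$. The converse inclusion—producing such a witness $\alpha$ from $\W\setminus[B]_{\Mn}\in\ndiam(w)$—is exactly what \wconditionbis{} supplies, by shrinking any $\ndiam$-neighbourhood to one contained in $\bigcap\nbox(w)$.

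The delicate case, and the step I expect to be the main obstacle, is $\Box B$. Here I would first record that $\bigcup\Nk(w)=\bigcap\nbox(w)$: the inclusion $\subseteq$ holds by the very definition of $\Nk$, while $\supseteq$ follows because \wconditionbis{} and supplementation of $\ndiam$ place $\bigcap\nbox(w)$ itself into $\ndiam(w)$, hence into $\Nk(w)$. Consequently $\Mk,w\Vdk\Box B$ iff $\bigcap\nbox(w)\subseteq[B]_{\Mn}$, and the target reduces to the equivalence $[B]_{\Mn}\in\nbox(w)\ \Leftrightarrow\ \bigcap\nbox(w)\subseteq[B]_{\Mn}$. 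The left-to-right reading is trivial, since $[B]_{\Mn}$ is then one of the sets being intersected. The converse is the crux: from $\bigcap\nbox(w)\subseteq[B]_{\Mn}$ I must recover $[B]_{\Mn}\in\nbox(w)$. This rests on $\bigcap\nbox(w)$ being itself a $\nbox$-neighbourhood—secured by closure of $\nbox$ under intersection together with containment of the unit—after which supplementation of $\nbox$ promotes $[B]_{\Mn}$ into $\nbox(w)$. I would flag this as the point requiring the most care, as it is exactly where the interplay of the closure under intersection, the unit, and supplementation of $\nbox$ is indispensable; once it is in place, the induction closes and the two models are pointwise equivalent.
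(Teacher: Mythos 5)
Your proof is correct and follows essentially the same route as the paper's: you verify the two Kojima conditions on $\Nk$ and then run the induction, with the $\Box$ case hinging on $\bigcap\nbox(w)$ being itself an $\nbox$-neighbourhood (intersection closure plus supplementation of $\nbox$) and the $\diam$ case hinging on \wconditionbis{} together with supplementation of $\ndiam$. The only cosmetic differences are that you argue the $\diam$ case contrapositively and obtain $\Nk(w)\neq\emptyset$ via the unit and \wconditionbis{} rather than, as the paper does, directly from $\bigcap\nbox(w)\in\nbox(w)$ and \Connuno; these are immaterial.
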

\begin{proof}
First notice that $\Mk$ is a \komodel: 
By intersection closure, $\bigcap\nbox(w)\in\nbox(w)$, hence by \Connuno{}, $\bigcap\nbox(w)\in\ndiam(w)$.
Thus $\bigcap\nbox(w)\in\Nk(w)$, which implies $\Nk(w)\not=\emptyset$.
Moreover assume $w\less v$ and $\alpha\in\Nk(v)$. So $\alpha\in\ndiam(v)$ and $\alpha\subseteq\bigcap\nbox(v)$.
Since $\ndiam(v)\subseteq\ndiam(w)$ and $\nbox(w)\subseteq\nbox(v)$,
we have both $\alpha\in\ndiam(w)$ and $\alpha\subseteq\bigcap\nbox(w)$, therefore $\alpha\in\Nk(w)$. 

By induction on $A$ we show that for all $A\in\lan$ and all $w\in\W$, 
\begin{center}
$\Mn, w \Vdn A$ iff $\Mk, w\Vdk A$.
\end{center}
As before we only consider the inductive cases $A\equiv\Box B, \diam B$:

$A\equiv\Box B$.
$\Mk, w\Vdk\Box B$ iff for all $\alpha\in\Nk(w)$, $\alpha\subseteq[B]_{\Mk}$
iff (since $\bigcap\nbox(w)\in\Nk(w)$) $\bigcap\nbox(w)\subseteq[B]_{\Mk}$
iff (i.h.) $\bigcap\nbox(w)\subseteq[B]_{\Mn}$
iff (by properties of $\nbox(w)$) $[B]_{\Mn}\in\nbox(w)$
iff $\Mn, w\Vdn \Box B$.

$A\equiv \diam B$.
Assume $\Mk, w, \Vdk \diam B$. Then for all $\alpha\in\Nk(w)$, $\alpha\cap[B]_{\Mk}\not=\emptyset$,
and, by i.h., $\alpha\cap[B]_{\Mn}\not=\emptyset$.
Thus for all $\alpha\in\ndiam(w)$ s.t. $\alpha\subseteq\bigcap\nbox(w)$, $\alpha\cap[B]_{\Mn}\not=\emptyset$.
Let $\beta$ be any neighbourhood in $\ndiam(w)$. 
By \wconditionbis, there is $\gamma\subseteq\beta$ 
s.t. $\gamma\in\ndiam(w)$ and $\gamma\subseteq\bigcap\nbox(w)$.
Then $\gamma\cap[B]_{\Mn}\not=\emptyset$, which implies $\beta\cap[B]_{\Mn}\not=\emptyset$.
Therefore $\Mn, w \Vdn \diam B$.
Now assume $\Mn, w \Vdn \diam B$. Then for all $\alpha\in\ndiam(w)$, $\alpha\cap[B]_{\Mn}\not=\emptyset$.
Thus for all $\alpha\in\Nk(w)$, $\alpha\cap[B]_{\Mn}\not=\emptyset$, and, by i.h.,
$\alpha\cap[B]_{\Mk}\not=\emptyset$.
Therefore $\Mk, w\Vdk \diam B$.
\end{proof}

\begin{theorem}\label{W Kojima to neigh}
A formula $A$ is valid in \komodel s for \HW{} if and only if it is valid in \intmodel s for \HW{}.
\end{theorem}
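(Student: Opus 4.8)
The plan is to read this theorem as an immediate corollary of the two preceding lemmas, which supply transformations between the two kinds of models in both directions that are \emph{pointwise} equivalent, \emph{i.e.} preserve the underlying set of worlds and the forcing of every formula at every world. Since validity in a class of models amounts to forcing at every world of every model in the class, it suffices to argue by contraposition in each direction, turning a falsifying model of one kind into a falsifying model of the other kind on the same carrier.

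First I would prove the contrapositive of the left-to-right implication. Suppose $A$ is not valid in \intmodel s for \HW. Then there are a \intmodel{} $\Mn=\langle\W,\less,\nbox,\ndiam,\V\rangle$ for \HW{} and a world $w\in\W$ with $\Mn,w\not\Vdn A$. Applying Lemma \ref{lemma W neigh to Kojima} to $\Mn$ yields a \komodel{} $\Mk$ for \HW{} on the same set $\W$ that is pointwise equivalent to $\Mn$; in particular $\Mk,w\not\Vdk A$, so $A$ is not valid in \komodel s for \HW. The right-to-left implication is symmetric: if $A$ fails at some world $w$ of a \komodel{} $\Mk$ for \HW, then Lemma \ref{lemma W Kojima to neigh} produces a pointwise equivalent \intmodel{} $\Mn$ for \HW{} with $\Mn,w\not\Vdn A$, whence $A$ is not valid in \intmodel s for \HW.

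The substantive work has already been done in Lemmas \ref{lemma W Kojima to neigh} and \ref{lemma W neigh to Kojima}, so I expect no genuine obstacle here; the only point needing a little care is purely logical, namely that validity is a universal statement over both models and worlds, so that its failure is witnessed by a single world of a single model, and that \emph{pointwise equivalence} is precisely the hypothesis transporting such a witness across the transformation without moving the world. I would also remark that combining this equivalence with Kojima's already-established correspondence between \komodel s and relational models for \HW{} (recalled just before the lemmas) yields, as a further corollary, direct transformations between relational models and \intmodel s for \HW.
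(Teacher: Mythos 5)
Your proposal is correct and matches the paper's own proof, which likewise derives the theorem immediately from Lemmas \ref{lemma W Kojima to neigh} and \ref{lemma W neigh to Kojima} by transferring a falsifying world across the pointwise-equivalent model constructions in both directions. Your spelled-out contrapositive argument and the closing remark about composing with Kojima's relational correspondence simply make explicit what the paper states in compressed form.
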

\begin{proof}
By Lemmas \ref{lemma W Kojima to neigh} and \ref{lemma W neigh to Kojima}.
If a \komodel{} for \HW{} falsifies $A$, 
then there is a \intmodel{}  for \HW{} that falsifies $A$; and \emph{vice versa}
if a \intmodel{} for \HW{} falsifies $A$, 
then there is a \komodel{} for \HW{} that falsifies $A$.
\end{proof}

Given the previous lemmas and Theorems 4.3 and 4.7 in Kojima \cite{Kojima}, %(pp. 97, 98),
we can also see  how to obtain an equivalent relational model starting from a \intmodel{} for \HW, and \emph{vice versa}.

\begin{lemma}\label{lemma W rel to neigh}
Let $\Mr=\langle \W,\less,\R,\V\rangle$ be a relational model for \HW,
and let $\rel(w)=\{v \mid w\R v\}$.
We define the neighbourhood model
$\Mn=\langle \W,\less,\nbox,\ndiam,\V\rangle$  by taking $\W$, $\less$, $\V$ as in $\Mr$, 
and the following neighbourhood functions:

\vspace{0.2cm}
\begin{tabular}{l}
$\nbox(w)=\{\alpha\subseteq \W \mid \textup{for all } v\more w, \rel(v)\subseteq\alpha\}$; \\
$\ndiam(w)=\{\alpha\subseteq \W \mid \textup{there is } v\more w \textup{ s.t. } \rel(v)\subseteq\alpha\}$.
\end{tabular}

\vspace{0.2cm}
Then $\Mn$ is a \intmodel{} for \HW, and it is pointwise equivalent to $\Mr$.
\end{lemma}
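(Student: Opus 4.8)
The plan is to establish two things: first, that $\Mn$ really is a \wmodel{} in the sense of the preceding definition; and second, that $\Mn$ and $\Mr$ force exactly the same formulas at every world, which I would prove by induction on formula complexity. Since $\W$, $\less$ and $\V$ are carried over unchanged from $\Mr$, the fact that $\less$ is a preorder and $\V$ is hereditary is immediate, so only the two neighbourhood functions require attention.

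For the first part I would check each defining condition of a \wmodel{} in turn. Supplementation of $\nbox$ and of $\ndiam$, closure of $\nbox$ under intersection, and $\W\in\nbox(w)$ are all immediate from the set-theoretic shape of the definitions: for instance, if $\rel(v)\ssq\alpha$ for every $v\more w$ and $\alpha\ssq\beta$, then $\rel(v)\ssq\beta$ for every $v\more w$, giving supplementation of $\nbox$. The monotonicity condition $(hp)$ uses transitivity of $\less$: if $w\less v$, then every $u\more v$ also satisfies $u\more w$, so the universal condition defining $\nbox(w)$ transfers to $\nbox(v)$, and dually the existential condition defining $\ndiam(v)$ transfers to $\ndiam(w)$. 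For \wcondition, given $\alpha\in\nbox(w)$ and $\beta\in\ndiam(w)$, I would pick $v\more w$ witnessing $\rel(v)\ssq\beta$; since $\alpha\in\nbox(w)$ also forces $\rel(v)\ssq\alpha$, we get $\rel(v)\ssq\alpha\cap\beta$, hence $\alpha\cap\beta\in\ndiam(w)$. Finally \Connuno{} uses reflexivity of $\less$: if $\alpha\in\nbox(w)$ then in particular $\rel(w)\ssq\alpha$ (instantiating at $v=w$), so $w$ itself witnesses $\alpha\in\ndiam(w)$.

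For the equivalence I would induct on $A$. The non-modal cases are immediate because the two models share $\W$, $\less$, $\V$ and the clauses for atoms and for the propositional connectives coincide; in particular the induction yields $[B]_{\Mn}=[B]_{\Mr}$ for every proper subformula $B$. For $A\equiv\Box B$, unfolding the neighbourhood clause and the definition of $\nbox$ shows that $\Mn,w\Vdn\Box B$ iff $\rel(v)\ssq[B]_{\Mn}$ for all $v\more w$; replacing $[B]_{\Mn}$ by $[B]_{\Mr}$ via the induction hypothesis and reading $\rel(v)\ssq[B]_{\Mr}$ as ``every $\R$-successor of $v$ forces $B$'' gives exactly Wijesekera's relational clause for $\Box B$. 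For $A\equiv\diam B$, $\Mn,w\Vdn\diam B$ means $\W\setminus[B]_{\Mn}\notin\ndiam(w)$, i.e.\ no $v\more w$ satisfies $\rel(v)\ssq\W\setminus[B]_{\Mn}$; negating, every $v\more w$ has some $\R$-successor lying in $[B]_{\Mn}$, which by the induction hypothesis is precisely the relational clause for $\diam B$.

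I do not expect a genuine obstacle here: the argument is essentially a bookkeeping exercise matching the existential/universal shape of the two neighbourhood functions against the ``for all $v\more w$'' guards in the relational clauses. The only points that demand care are the correct use of transitivity of $\less$ when verifying $(hp)$ and of reflexivity when verifying \Connuno, together with keeping the quantifier alternation straight in the $\diam$-case, where a double negation must be unwound. It is worth noting that no frame condition relating $\R$ and $\less$ is needed, precisely because both semantics localise the modal clauses at all $\less$-successors of the world of evaluation.
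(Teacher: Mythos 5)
Your proof is correct and is precisely the direct verification that the paper leaves to the reader (the remark after Theorem \ref{W neigh to rel} omits it; note also that composing the relational-to-Kojima transformation with Lemma \ref{lemma W Kojima to neigh} yields exactly your $\nbox$ and $\ndiam$, since $\bigcup_{v\more w}\rel(v)\ssq\alpha$ iff $\rel(v)\ssq\alpha$ for all $v\more w$). All the required checks are carried out soundly: supplementation, intersection closure and the unit from the shape of the definitions, $(hp)$ from transitivity of $\less$, \wcondition{} by combining the universal guard of $\nbox$ with the existential witness of $\ndiam$, \Connuno{} from reflexivity, and the correct unwinding of the double negation in the $\diam$-clause.
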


\begin{lemma}\label{lemma W neigh to rel}
Let $\Mn=\langle \W, \less, \nbox, \ndiam, \V \rangle$ be a \intmodel{} for \HW.
The relational model $\Mstar= \langle \Wstar, \lessstar, \Rstar, \Vstar \rangle$ is defined as follows:

\vspace{0.4cm}
\begin{tabular}{l }
\vspace{0.2cm}
$\bullet$ \ $\Wstar$  =  $\{(w, \alpha)$ $\mid$ $w\in\W$, $\alpha\in\ndiam(w)$,  and  $\alpha\subseteq\bigcap\nbox(w)\}$; \\
\vspace{0.2cm}
$\bullet$ \ $(w, \alpha)\lessstar (v, \beta)$ \ iff \ $w\less v$; \\
\vspace{0.2cm}
$\bullet$ \ $(w, \alpha)\Rstar (v, \beta)$ \ iff \ $v\in\alpha$; \\
\vspace{0.2cm}
$\bullet$ \ $\Vstar((w, \alpha))=\{p \mid p\in\V(w)\}$  for all $w\in\W$.
\end{tabular}

\vspace{0.4cm}
\noindent
Then $\Mstar$ is a relational model for \HW.
Moreover, for all $A\in\lan$ and $w\in\W$, the following claims are equivalent:

\begin{itemize}
\item[1)] $\Mn, w \Vdn A$.

\item[2)] For all $(w, \alpha)\in\Wstar$, $\Mstar, (w, \alpha)\Vdr A$.

\item[3)] There is $(w, \alpha)\in\Wstar$ such that $\Mstar, (w, \alpha)\Vdr A$.
\end{itemize}
\end{lemma}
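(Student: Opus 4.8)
The plan is to verify first that $\Mstar$ is a relational model for \HW{} in the sense of Definition \ref{relational model for W}, and then to derive the three-way equivalence from a single strengthened induction: the claim that for every $(w,\alpha)\in\Wstar$ and every formula $A$ we have $\Mstar,(w,\alpha)\Vdr A$ iff $\Mn,w\Vdn A$. The point of this formulation is that the truth value of $A$ at $(w,\alpha)$ turns out not to depend on the second coordinate $\alpha$, but only on $w$, and to coincide with forcing in $\Mn$. Granting the claim, the equivalence of 1), 2) and 3) is immediate: 1)$\Rightarrow$2) is direct, 2)$\Rightarrow$3) holds because $\Wstar$ always contains a world over $w$, and 3)$\Rightarrow$1) is again direct. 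Here I use that every $w\in\W$ is represented in $\Wstar$: by intersection closure $\bigcap\nbox(w)\in\nbox(w)$, so by \Connuno{} also $\bigcap\nbox(w)\in\ndiam(w)$, whence $(w,\bigcap\nbox(w))\in\Wstar$. Checking that $\Mstar$ is a relational model is routine, since $\Wstar$ is non-empty, and $\lessstar$ is a preorder and $\Vstar$ is hereditary because both depend only on the first coordinate (and $\less$, $\V$ already have these properties in $\Mn$), while $\Rstar$ is allowed to be an arbitrary relation.

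For the induction, the atomic, $\bot$, $\land$ and $\lor$ cases are immediate, and the value stays independent of $\alpha$. For $A\equiv B\imp C$, since $(w,\alpha)\lessstar(v,\beta)$ holds exactly when $w\less v$, and since every such $v$ is represented by some $(v,\beta)\in\Wstar$, the relational clause at $(w,\alpha)$ reduces, using the induction hypothesis that forcing of $B$ and $C$ is coordinate-independent, to ``for all $v\more w$, $\Mn,v\Vdn B$ implies $\Mn,v\Vdn C$'', which is exactly $\Mn,w\Vdn B\imp C$; independence of $\alpha$ is preserved.

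The modal cases are the heart of the argument. Unfolding the clauses along $\lessstar$ and $\Rstar$, and using that $(v,\beta)\in\Wstar$ means precisely $\beta\in\ndiam(v)$ with $\beta\subseteq\bigcap\nbox(v)$, together with the coordinate-independent induction hypothesis, the $\Box$ clause at $(w,\alpha)$ becomes ``$\beta\subseteq[B]_{\Mn}$ for all $v\more w$ and all such $\beta$'', and the $\diam$ clause becomes ``$\beta\cap[B]_{\Mn}\neq\emptyset$ for all such $v,\beta$''. For $\Box$ I then show equivalence with $[B]_{\Mn}\in\nbox(w)$: from $[B]_{\Mn}\in\nbox(w)$, monotonicity ($hp$) gives $[B]_{\Mn}\in\nbox(v)$ hence $\bigcap\nbox(v)\subseteq[B]_{\Mn}$ for each $v\more w$; conversely, instantiating $v=w$ and $\beta=\bigcap\nbox(w)$ yields $\bigcap\nbox(w)\subseteq[B]_{\Mn}$, and supplementation upgrades this to $[B]_{\Mn}\in\nbox(w)$. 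For $\diam$ I show equivalence with $\W\setminus[B]_{\Mn}\notin\ndiam(w)$: if some witnessing $\beta$ had $\beta\cap[B]_{\Mn}=\emptyset$, then $\beta\subseteq\W\setminus[B]_{\Mn}$, and supplementation of $\ndiam$ together with ($hp$) would force $\W\setminus[B]_{\Mn}\in\ndiam(w)$.

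The step I expect to be the main obstacle is the converse direction of the $\diam$ case. There one must turn the assumption $\W\setminus[B]_{\Mn}\in\ndiam(w)$ into a genuinely failing witness that lives inside $\Wstar$, that is, a neighbourhood $\gamma\subseteq\bigcap\nbox(w)$; this is exactly what \wconditionbis{} supplies, producing $\gamma\in\ndiam(w)$ with $\gamma\subseteq\bigcap\nbox(w)$ and $\gamma\subseteq\W\setminus[B]_{\Mn}$, so that $(w,\gamma)\in\Wstar$ contradicts the assumed clause. A secondary care point is to set up the induction as independence from $\alpha$ from the outset, so that the quantifier over the representatives $(v,\beta)$ collapses without circularity.
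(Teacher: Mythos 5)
Your proof is correct. The paper leaves this lemma to the reader, but its detailed proof of the analogous \CK{} lemma (Lemma \ref{lemma CK neigh to rel}) follows essentially the same route you take---representing every $w$ by $(w,\bigcap\nbox(w))$, using supplementation and ($hp$) for the $\Box$ case, and invoking \wconditionbis{} for the converse $\diam$ direction---with your single coordinate-independent biconditional replacing the paper's 1)$\Rightarrow$2)$\Rightarrow$3)$\Rightarrow$1) cycle, a purely organizational difference.
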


\begin{theorem}\label{W neigh to rel}
A formula $A$ is valid in relational models for \HW{} if and only if
it is valid in \intmodel s for \HW.
\end{theorem}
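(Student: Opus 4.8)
The plan is to obtain this as an immediate corollary of the two transformation lemmas just established, Lemma~\ref{lemma W rel to neigh} and Lemma~\ref{lemma W neigh to rel}, in exact parallel with the proof of Theorem~\ref{W Kojima to neigh}. Since validity is the complement of falsifiability, I would phrase the argument contrapositively: a formula $A$ is \emph{falsified} in some relational model for \HW{} if and only if it is falsified in some \intmodel{} for \HW. Establishing both directions of this falsifiability equivalence is exactly what the two lemmas provide, so the body of the proof is just a matter of assembling them correctly.

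For the first direction, suppose a relational model $\Mr$ falsifies $A$, say $\Mr, w \not\Vdr A$ for some $w$. Applying Lemma~\ref{lemma W rel to neigh} I obtain a \intmodel{} $\Mn$ for \HW{} on the same set of worlds that is pointwise equivalent to $\Mr$; by that equivalence $\Mn, w \not\Vdn A$, so $A$ is falsified in a \intmodel{} for \HW. For the converse, suppose a \intmodel{} $\Mn$ for \HW{} falsifies $A$ at some $w$, i.e.\ $\Mn, w \not\Vdn A$. I would feed $\Mn$ into Lemma~\ref{lemma W neigh to rel} to produce the relational model $\Mstar$, and invoke the equivalence of its claims $(1)$ and $(2)$: since claim $(1)$ fails at $w$, claim $(2)$ must fail as well, which means there is a point $(w,\alpha)\in\Wstar$ with $\Mstar, (w,\alpha)\not\Vdr A$. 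Hence $A$ is falsified in a relational model for \HW. Combining the two directions gives the stated biconditional.

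All the genuine content lives in the two lemmas, so there is no real obstacle in the theorem itself; the single point worth flagging is that the second direction must use claims $(1)$ and $(2)$ of Lemma~\ref{lemma W neigh to rel} rather than $(1)$ and $(3)$, because falsification requires producing a \emph{bad} point in $\Mstar$, not a \emph{good} one. For this to be non-vacuous one needs every world $w$ of $\Mn$ to contribute at least one point to $\Wstar$, and this is guaranteed by the \HW-model conditions: $\nbox(w)$ is closed under intersection and contains the unit, so $\bigcap\nbox(w)\in\nbox(w)$, whence \Connuno{} yields $\bigcap\nbox(w)\in\ndiam(w)$, and trivially $\bigcap\nbox(w)\subseteq\bigcap\nbox(w)$, so $(w,\bigcap\nbox(w))\in\Wstar$. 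With this remark in place the assembly of the two lemmas is routine.
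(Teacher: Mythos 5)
Your proposal is correct and matches the paper's proof, which simply cites Lemma~\ref{lemma W rel to neigh} and Lemma~\ref{lemma W neigh to rel} and leaves the assembly (and the lemmas' proofs) to the reader; your contrapositive falsifiability argument is exactly the intended assembly, mirroring the proof of Theorem~\ref{W Kojima to neigh}. Your flagged subtlety --- using claims $(1)$ and $(2)$ of Lemma~\ref{lemma W neigh to rel} and checking that $(w,\bigcap\nbox(w))\in\Wstar$ --- is a sound observation that the paper itself relies on implicitly (the same nonemptiness fact appears in the proofs of Lemmas~\ref{lemma W neigh to Kojima} and~\ref{lemma W neigh to rel}).
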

\begin{proof}
By Lemma \ref{lemma W rel to neigh} and Lemma \ref{lemma W neigh to rel}.
A direct proof of the two lemmas is left to the reader.
\end{proof}

\subsubsection{Semantic equivalence for \CK}

We now present the relational models for \CK{} by Mendler and de Paiva \cite{Mendler1},
and prove directly their equivalence with \intmodel s.
Relational models for \CK{} are defined by enriching \wij's models for \HW{}
with inconsistent (or ``fallible'') worlds
(\emph{i.e.} worlds satisfying $\bot$) as follows.

\begin{definition}[Relational models for \CK{}]\label{relational model for CK}  %(Mendler and de Paiva \cite{Mendler1})
Relational models for \CK{} are defined exactly as relational models for \HW{} (Definition \ref{relational model for W}),
except that the standard forcing relation for $\bot$ ($w\not\Vdr\bot$ )
is replaced by the following ones:

\vspace{0.2cm}
\begin{tabular}{l}
If $w\Vdr\bot$, \ then for all $v$, $w\less v$ or $w\R v$ implies $v\Vdr\bot$; \\
If $w\Vdr\bot$, \ then $w\Vdr p$ for all propositional variables $p\in\lan$.
\end{tabular}
\end{definition}

Observe that fallible worlds are related through $\less$ and $\R$  only to other fallible worlds.
Moreover, the above definition  preserves the validity of $\top$ and $\bot \imp A$, for all $A$.

\begin{theorem}[Mendler and de Paiva \cite{Mendler1}]
Logic \CK{} is sound and complete w.r.t. relational models for \CK.
\end{theorem}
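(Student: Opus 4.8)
The plan is to prove soundness and completeness by the standard route for intuitionistic modal logics, adapted to the fallible-world clauses of Definition \ref{relational model for CK}.

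For soundness, I would first prove the heredity lemma: for every $A$, if $w\Vdr A$ and $w\less v$ then $v\Vdr A$. This goes by induction on $A$; the propositional cases are standard, using for $A\equiv\bot$ that fallible worlds propagate along $\less$, and for the modal cases the forcing clauses already quantify over all $\more$-successors, so heredity is immediate. I would then check that each axiom of \CK{} is valid and each rule preserves validity. The intuitionistic axioms and \modusponens{} are routine, and \efq{} holds because a fallible world forces every atom, hence by induction every formula. Only three checks are genuinely modal. For \axKbox, assuming $w\Vdr\Box(A\imp B)$ and $w\Vdr\Box A$, one fixes $v\more w$ and $u$ with $v\R u$, obtains $u\Vdr A\imp B$ and $u\Vdr A$ from the two hypotheses, and concludes $u\Vdr B$. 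For \axKdiam, assuming $w\Vdr\Box(A\imp B)$ and $w\Vdr\diam A$, one fixes $v\more w$, extracts a witness $u$ with $v\R u$ and $u\Vdr A$, and applies the $\Box$-clause at $v,u$ to get $u\Vdr B$, so that $v$ has a $B$-witness. For \rulenbox, a valid premiss forces $A$ at every world, hence at every $\R$-successor, giving $w\Vdr\Box A$ everywhere.

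For completeness I would use a canonical relational model. The worlds are the \CK-prime sets together with the inconsistent set of all formulas, which plays the role of a fallible world; $\less$ is inclusion; the valuation is the obvious one; and the relation would be $w\R u$ iff $\Box^{-1}w\subseteq u$ and $u\subseteq\diam^{-1}w$, writing $\Box^{-1}w=\{B\mid\Box B\in w\}$ and $\diam^{-1}w=\{B\mid\diam B\in w\}$. One verifies that fallible worlds are reached only as $\diam$-witnesses and relate only to fallible worlds, so the structure is a genuine relational model for \CK. The core is the truth lemma, $w\Vdr A$ iff $A\in w$, proved by induction; the propositional cases are standard, the fallible world being the set of all formulas making the $\bot$ clause work.

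The main obstacle will be the modal cases of the truth lemma, which rest on two existence lemmas. For $\Box$: if $\Box B\notin w$, I must build a prime $u$ with $\Box^{-1}w\subseteq u\subseteq\diam^{-1}w$ and $B\notin u$, using $\Box^{-1}w\not\vd_{\CK}B$ together with a Lindenbaum extension; the delicate point is respecting the upper bound $u\subseteq\diam^{-1}w$, which relies on \axKdiam{} to transfer consequences of $\Box^{-1}w$ under $\diam$ inside $w$. For $\diam$: if $\diam B\in w$, I must produce an $\R$-successor containing $B$, and here the fallible world is indispensable — when $\Box^{-1}w\cup\{B\}$ is inconsistent the only available witness is the absurd world, which is precisely what lets $\diam\bot$ be satisfiable while $\diam\bot\imp\bot$ stays underivable. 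Once both existence lemmas are secured, completeness follows as usual: an underivable $A$ is excluded from some prime set by a Lindenbaum argument, and the truth lemma then refutes $A$ at that world.
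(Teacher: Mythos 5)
Your soundness half is essentially right, with one caveat you should make explicit: in your induction showing that fallible worlds force every formula, the case $A\equiv\diam B$ needs every fallible world (and each of its $\less$-successors) to have at least one $\R$-successor, which Definition \ref{relational model for CK} as stated does not guarantee; this has to be added as a frame condition or argued around. Your $\diam$ existence lemma is also sound: if $\diam B\in v$ and $\Box^{-1}v\cup\{B\}$ is inconsistent, then \rulenbox, \axKbox{} and \axKdiam{} yield $\diam\bot\in v$, hence $\diam C\in v$ for every $C$, so the fallible world is a legitimate $\R$-successor witnessing $\diam B$.

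The genuine gap is your $\Box$ existence lemma, which is false, in a way that destroys the truth lemma for your canonical model. Take the relational model $\W=\{w_0,w_1,u\}$ with $w_0\less w_1$, $w_0\R u$, $p\notin\V(u)$ and no fallible worlds: there $w_0\Vdr\neg\diam\top$ (since $w_1$ has no $\R$-successor) while $w_0\not\Vdr\Box p$, so by your own soundness theorem $\not\vd_{\CK}\neg\diam\top\imp\Box p$, and Lemma \ref{prime sets} gives a prime theory $w^*$ with $\neg\diam\top\in w^*$ and $\Box p\notin w^*$. In your canonical model, any candidate $\R$-successor $u$ (prime or fallible) of any prime $v\supseteq w^*$ contains $\top$, so your clause $u\subseteq\diam^{-1}v$ forces $\diam\top\in v$, contradicting the consistency of $v$; hence no prime extension of $w^*$ has any $\R$-successor at all, and the only remaining extension of $w^*$, the fallible world, is $\R$-related only to itself, which forces $p$. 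Thus $w^*\Vdr\Box p$ vacuously although $\Box p\notin w^*$, and no cleverness in the Lindenbaum step can help, since every $v\supseteq w^*$ inherits $\neg\diam\top$. Dropping the bound $u\subseteq\diam^{-1}v$ is no repair either: then the fallible world becomes an $\R$-successor of every world and the $\diam$ direction of the truth lemma collapses instead. The moral is that bare prime theories cannot serve as worlds for this semantics. This is in fact how the paper obtains the theorem (it cites Mendler--de Paiva rather than reproving it, but its own machinery yields it): completeness is first proved for neighbourhood \CK-models via the canonical model of Lemma \ref{lemma canonical model CK}, and Lemma \ref{lemma CK neigh to rel} then converts a neighbourhood countermodel into a relational one whose worlds are pairs $(w,\alpha)$ consisting of a world together with a chosen neighbourhood $\alpha\in\ndiam(w)$ with $\alpha\subseteq\bigcap\nbox(w)$ (plus designated pairs when $\ndiam(w)=\emptyset$, and one fallible pair). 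That second component is exactly what lets a single theory such as $w^*$ split into several relational worlds, some with no $\R$-successors (to refute the $\diam$-formulas) and others with enough $\R$-successors to refute $\Box p$; any direct canonical relational model would have to build this pairing into its worlds.
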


In order to prove the equivalence between relational and \intmodel s for \CK,
we consider transformations of models that are relatively similar to those in 
Lemmas \ref{lemma W rel to neigh} and \ref{lemma W neigh to rel}.
However, the transformations are now a bit more complicated
due to the presence of inconsistent worlds.

\begin{lemma}\label{lemma CK rel to neigh}
Let $\Mr=\langle \W,\less,\R,\V\rangle$ be a relational model for \CK.
Moreover,
for all $w\in\W$, let $\rel(w)=\{v \mid w\R v\}$.
We denote with $\W\pos$ the set $\{w\in\W \mid \Mr, w\not\Vdr\bot\}$ ({\it i.e.}~the set of consistent worlds of $\Mr$),
and for all $\alpha\subseteq\W$, we denote with $\alpha\pos$ the set $\alpha\cap\W\pos$.

We define the neighbourhood model $\Mn=\langle \W\pos,\less\pos,\nbox,\ndiam,\V\pos\rangle$, 
where $\less\pos$ and $\V\pos$ are the restrictions to $\W\pos$ of $\less$ and $\V$, 
and $\nbox$, $\ndiam$ are the following neighbourhood functions:

\vspace{0.2cm}
\begin{tabular}{l}
$\nbox(w)=\{\alpha\pos\subseteq \W \mid \textup{for all } v\more w, \rel(v)\subseteq\alpha\}$; \\
$\ndiam(w)=\{\alpha\pos\subseteq \W \mid \textup{there is } v\more w \textup{ s.t. } \rel(v)\subseteq\alpha\pos\}$.
\end{tabular}

\vspace{0.2cm}
\noindent
Then $\Mn$ is a \intmodel{} for \CK{}.
Moreover, for all $A\in\lan$ and $w\in\W\pos$, 
\begin{center}
$\Mn,w\Vdn A$ \ iff \ $\Mr,w\Vdr A$. 
\end{center}
\end{lemma}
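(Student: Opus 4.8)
The statement asserts that the neighbourhood model $\Mn$ built from a relational \CK-model $\Mr$ is itself a \CK-model and that forcing is preserved on consistent worlds. My plan is to prove this in two stages: first verify that $\Mn$ satisfies all the structural requirements of a \CK-model (Definition of \CK-models), and second prove the truth-preservation equivalence $\Mn,w\Vdn A$ iff $\Mr,w\Vdr A$ by induction on $A$, restricted to $w\in\W\pos$. The key design feature to exploit is that fallible worlds of $\Mr$ are related via $\less$ and $\R$ only to other fallible worlds, so passing to $\W\pos$ and intersecting neighbourhoods with $\W\pos$ (the $(-)\pos$ operation) keeps everything coherent.

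\textbf{First stage: $\Mn$ is a \CK-model.} I would check in turn that $\nbox$ is supplemented, closed under intersection, and contains the unit; that $\ndiam$ is supplemented; that the hereditary condition $(hp)$ holds ($w\less v$ implies $\nbox(w)\subseteq\nbox(v)$ and $\ndiam(w)\supseteq\ndiam(v)$); and finally that the \wcondition{} holds, i.e.\ if $\alpha\in\nbox(w)$ and $\beta\in\ndiam(w)$ then $\alpha\cap\beta\in\ndiam(w)$. Supplementation of $\nbox$ is immediate since $\alpha\pos\in\nbox(w)$ is governed by the condition ``$\rel(v)\subseteq\alpha$ for all $v\more w$'', which is upward-closed in $\alpha$; similarly for $\ndiam$. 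Closure under intersection and containment of the unit follow from the universal/existential quantifier over $v\more w$ in the definitions, mirroring the argument in Lemma~\ref{lemma W rel to neigh}. The \wcondition{} is the crucial structural point: given $\alpha\pos\in\nbox(w)$ witnessed by all $v\more w$ having $\rel(v)\subseteq\alpha$, and $\beta\pos\in\ndiam(w)$ witnessed by some $v_0\more w$ with $\rel(v_0)\subseteq\beta\pos$, that same $v_0$ satisfies $\rel(v_0)\subseteq\alpha\cap\beta\pos=(\alpha\cap\beta)\pos$, giving $(\alpha\cap\beta)\pos\in\ndiam(w)$.

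\textbf{Second stage: the induction.} For the truth lemma, the propositional and implicational cases proceed exactly as in the \HW{} case (Lemma~\ref{lemma W rel to neigh}), once I confirm that restricting to consistent worlds does not disturb the intuitionistic clauses — here the condition that fallible worlds only see fallible worlds under $\less$ guarantees that $\less\pos$ on $\W\pos$ behaves correctly and that $\W\pos$ is $\less$-upward closed among consistent worlds. The modal cases $A\equiv\Box B$ and $A\equiv\diam B$ reduce, via the definitions of $\nbox,\ndiam$ and the inductive hypothesis $[B]_{\Mn}=\Bclass$-analogue on $\W\pos$, to the relational clauses: $\Mn,w\Vdn\Box B$ iff $[B]_{\Mn}\in\nbox(w)$ iff for all $v\more w$, $\rel(v)\subseteq[B]$, which matches $w\Vdr\Box B$; and dually for $\diam$.

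\textbf{Main obstacle.} I expect the genuine difficulty to lie in the careful bookkeeping around fallible worlds in the $\bot$ and atomic cases, and in confirming that $[B]_{\Mr}\cap\W\pos=[B]_{\Mn}$ holds throughout the induction. Because a fallible world forces $\bot$ and every atom $p$, the truth set $[B]_{\Mr}$ may contain fallible worlds, whereas $\Mn$ lives entirely on $\W\pos$; so at each modal step I must verify that the witnessing inclusion $\rel(v)\subseteq\alpha$ (over all of $\W$, including fallible successors) correctly corresponds to $\rel(v)\cap\W\pos\subseteq\alpha\pos$ after restriction. The point forcing the argument through is that if $\rel(v)$ contains a fallible world then that world forces $B$ vacuously (it forces everything), so it never obstructs $w\Vdr\Box B$; this is exactly why the $(-)\pos$ truncation in the definition of $\nbox$ and $\ndiam$ is the right one. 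Making this interaction between fallibility, the $\rel(v)\subseteq\alpha$ versus $\rel(v)\subseteq\alpha\pos$ conditions, and the inductive hypothesis fully precise is where the care is needed; the rest is routine.
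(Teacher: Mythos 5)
Your proposal follows the paper's proof essentially step for step: the same verification of \wcondition{} (a single witness $v_0\more w$ with $\rel(v_0)\subseteq\alpha\cap\beta\pos=(\alpha\cap\beta)\pos$), and the same induction establishing $[A]_{\Mn}=[A]_{\Mr}\pos$, with the modal cases unlocked by precisely the paper's observation that fallible worlds force every formula, so a fallible $u\in\rel(v)$ never obstructs the passage between $\rel(v)\subseteq\alpha$ and $\rel(v)\subseteq\alpha\pos$ (in the paper: ``if $u\notin\W\pos$, then $\Mr,u\Vdr\bot$, hence $\Mr,u\Vdr B$''). One minor slip worth noting: $\W\pos$ is $\less$-\emph{downward} rather than upward closed (a consistent world may have fallible $\less$-successors, since the definition only constrains successors of fallible worlds), but this is harmless for the implication clause because such successors force both antecedent and consequent vacuously.
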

\begin{proof}
It is imediate to verify that $\Mn$ is a \intmodel{} for \CK.
In particular, for the \wcondition,
assume $\alpha\pos\in\nbox(w)$ and $\beta\pos\in\ndiam(w)$.
Then there is $v\more w$ s.t. $\rel (v)\subseteq\beta\pos$; thus $\rel (v)\subseteq\alpha$.
Then $\rel (v)\subseteq \alpha\cap\beta\pos = (\alpha\cap\beta)\pos$.
Therefore $(\alpha\cap\beta)\pos = \alpha\pos\cap\beta\pos \in\ndiam(w)$.

We now prove that for all $w\in\W\pos$, $\Mn,w\Vdn A$ if and only if $\Mr,w\Vdr A$.
This is equivalent to say that $[A]_{\Mn} = [A]_{\Mr}\pos$.
As usual we only consider the modal cases.

$A\equiv \Box B$. Let $w\in\W\pos$.
$\Mn, w \Vdn \Box B$ iff $[B]_{\Mn}\in\nbox(w)$
iff (i.h.) $[B]_{\Mr}\pos\in\nbox(w)$ 
iff for all $v\more w$, $\rel(v)\subseteq [B]_{\Mr}$ 
iff $\Mr, w \Vdr \Box B$.

$A\equiv \diam B$.
Assume $\Mr, w \Vdr \diam B$ and $w\in\W\pos$.
Then for all $v\more w$, there is $u\in\W$ s.t. $v\R u$ and $\Mr, u \Vdr B$.
Thus for all $v\more w$, $\R(v)\not\subseteq \W\setminus [B]_{\Mr}$, which in particular implies $\R(v)\not\subseteq (\W\setminus [B]_{\Mr})\pos$.
Moreover, $ (\W\setminus [B]_{\Mr})\pos = \W\pos\setminus [B]_{\Mr}\pos = \textup{(i.h.) } \W\pos\setminus [B]_{\Mn}$.
Then $\W\pos\setminus [B]_{\Mn}\notin\ndiam(w)$, 
therefore $\Mn, w \Vdn \diam B$.
Now assume $\Mn, w \Vdn \diam B$. 
Then $\W\pos\setminus [B]_{\Mn}\notin\ndiam(w)$.
This implies that for all $v\more w$, $\R(v)\not\subseteq\W\pos\setminus [B]_{\Mn}$;
that is, there is $u\in\W$ s.t. $v\R u$ and $u\notin\W\pos\setminus [B]_{\Mn}$.
Thus $u\notin\W\pos$ or $u\in[B]_{\Mn}$.
If $u\notin\W\pos$, then $\Mr, u\Vdr \bot$, hence $\Mr, u\Vdr B$.
If $u\in [B]_{\Mn}$, by i.h. $u\in[B]_{\Mr}\pos$, thus $\Mr, u\Vdr B$.
Therefore $\Mr, w\Vdr \diam B$.
\end{proof}

\begin{lemma}\label{lemma CK neigh to rel}
Let $\Mn=\langle \W, \less, \nbox, \ndiam, \V \rangle$ be a \intmodel{} for \CK,
and take $\f\notin\W$.
The relational model $\Mstar= \langle \Wstar, \lessstar, \Rstar, \Vstar \rangle$ is defined as follows:

\vspace{0.4cm}
\begin{tabular}{l l}
\vspace{0.1cm}
$\bullet$ \ $\Wstar$  = & $\{(w, \alpha)$ $\mid$ $w\in\W$, $\ndiam(w)\not=\emptyset$, $\alpha\in\ndiam(w)$,  and  $\alpha\subseteq\bigcap\nbox(w)\}$ \\
\vspace{0.1cm}
& $\cup$ \ $\{(v, \bigcap\nbox(v)\cup\{\f\}) \mid v\in\W \textup{ and } \ndiam(v)=\emptyset\}$ \\
&  $\cup$ \ $\{(\f, \{\f\})\}$;
\end{tabular}

\vspace{0.2cm}
\begin{tabular}{l}
\vspace{0.2cm}
$\bullet$ \ $(w, \alpha)\lessstar (v, \beta)$ \ iff \ $w\less v$ \  or \ $w,v=\f$; \\
\vspace{0.2cm}
$\bullet$ \ $(w, \alpha)\Rstar (v, \beta)$ \ iff \ $v\in\alpha$; \\
\vspace{0.2cm}
$\bullet$ \ $\Vstar((w, \alpha))=\{p \mid p\in\V(w)\}$  for all $w\in\W$; \ and $\Vstar((\f, \{\f\}))= \atm$; \\
$\bullet$ \ $\Mstar, (\f, \{\f\}) \Vdr \bot$.
\end{tabular}

\vspace{0.4cm}
\noindent
Then $\Mstar$ is a relational model for \CK.
Moreover, for all $A\in\lan$ and $w\in\W$, the following claims are equivalent:

\begin{itemize}
\item[1)] $\Mn, w \Vdn A$.

\item[2)] For all $(w, \alpha)\in\Wstar$, $\Mstar, (w, \alpha)\Vdr A$.

\item[3)] There is $(w, \alpha)\in\Wstar$ such that $\Mstar, (w, \alpha)\Vdr A$.
\end{itemize}
\end{lemma}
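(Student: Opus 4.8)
The plan is to proceed in two stages: first verify that $\Mstar$ is a genuine relational model for \CK{} in the sense of Definition \ref{relational model for CK}, then establish the three-way equivalence by induction on $A$.

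For the structural stage, I would check the conditions one by one. Non-emptiness of $\Wstar$ is clear since $(\f,\{\f\})\in\Wstar$; $\lessstar$ is a preorder because it depends only on first coordinates (lifting $\less$ and treating $\f$ separately); and $\Vstar$ is hereditary because $\V$ is and the only world $\lessstar$-above $(\f,\{\f\})$ is itself. The essential observation is that $(\f,\{\f\})$ is the \emph{unique} fallible world: by construction its only $\lessstar$- and $\Rstar$-successor is $(\f,\{\f\})$, so the two propagation clauses for $\bot$ hold and no world $(w,\alpha)$ with $w\in\W$ is forced to satisfy $\bot$. I would also record two facts for later: (a) $(\f,\{\f\})$ forces \emph{every} formula of $\lan$ (a routine induction, using that it is its own $\lessstar$/$\Rstar$-successor for the $\imp,\Box,\diam$ cases); and (b) for each $w\in\W$ the fibre $\{\alpha\mid(w,\alpha)\in\Wstar\}$ is non-empty --- via the type-2 world when $\ndiam(w)=\emptyset$, and otherwise by \wconditionbis{} (equivalently: supplementation of $\ndiam$ gives $\W\in\ndiam(w)$, and \wcondition{} together with $\bigcap\nbox(w)\in\nbox(w)$ gives $\bigcap\nbox(w)\in\ndiam(w)$). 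Fact (b) yields the implication $(2)\Rightarrow(3)$ at once.

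For the main stage, I would prove simultaneously, by induction on $A$ and for every $w\in\W$, the two implications $(1)\Rightarrow(2)$ and $(3)\Rightarrow(1)$; with $(2)\Rightarrow(3)$ from fact (b) this closes the cycle. The atomic and $\bot$ cases are immediate, since every $(w,\alpha)$ with $w\in\W$ satisfies $\Vstar((w,\alpha))=\V(w)$ and is non-fallible; the $\land,\lor$ cases are routine. For $\imp$, in the forward direction I would split a successor $(v,\beta)\morestar(w,\alpha)$ into $v=\f$ (settled by fact (a)) and $v\more w$ (settled by the induction hypothesis, applied in both directions at $v$); in the backward direction, given $v\more w$ with $v\Vdn B$, I would use fact (b) to pick a future $(v,\beta)\in\Wstar$ and transport forcing through $(w,\alpha)\Vdr B\imp C$.

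The modal cases are the heart of the argument, and the backward direction for $\Box$ is the step I expect to be most delicate. In $(1)\Rightarrow(2)$ for $\Box B$, from $[B]_{\Mn}\in\nbox(w)$ and $(hp)$ I obtain $\bigcap\nbox(v)\subseteq[B]_{\Mn}$ at every $\less$-future $v$, so each $\Rstar$-successor $u\in\beta$ lies in $[B]_{\Mn}$ (or equals $\f$, handled by (a)), whence $(u,\gamma)\Vdr B$ by the induction hypothesis. For $(3)\Rightarrow(1)$ the obstacle is that a single world $(w,\alpha)$ only ``sees'' the points of $\alpha$, so I must exhibit a future whose $\Rstar$-successors exhaust $\bigcap\nbox(w)$: when $\ndiam(w)=\emptyset$ this is the type-2 world $(w,\bigcap\nbox(w)\cup\{\f\})$, and when $\ndiam(w)\neq\emptyset$ it is the type-1 world $(w,\bigcap\nbox(w))$, which lies in $\Wstar$ precisely because $\bigcap\nbox(w)\in\nbox(w)$ (closure under intersection) and $\bigcap\nbox(w)=\bigcap\nbox(w)\cap\W\in\ndiam(w)$ by \wcondition{} and supplementation of $\ndiam$. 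Reading off forcing at this world gives $\bigcap\nbox(w)\subseteq[B]_{\Mn}$, and then $[B]_{\Mn}\in\nbox(w)$ by supplementation of $\nbox$ --- the reliance on $\bigcap\nbox(w)\in\nbox(w)$ being the subtle point (cf. its use in Lemma \ref{lemma W neigh to Kojima}). For $\diam B$ the forward direction uses supplementation of $\ndiam$ to locate, in each neighbourhood of $\ndiam(v)$, a point of $[B]_{\Mn}$; the backward direction argues by contraposition: if $\W\setminus[B]_{\Mn}\in\ndiam(w)$, then \wconditionbis{} supplies $\beta\in\ndiam(w)$ with $\beta\subseteq(\W\setminus[B]_{\Mn})\cap\bigcap\nbox(w)$, so $(w,\beta)\in\Wstar$ is a future of $(w,\alpha)$ all of whose $\Rstar$-successors avoid $[B]_{\Mn}$, contradicting $(w,\alpha)\Vdr\diam B$ through the induction hypothesis.
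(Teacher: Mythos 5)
Your proposal is correct and follows essentially the same route as the paper's proof: induction on $A$ through the cycle $(1)\Rightarrow(2)\Rightarrow(3)\Rightarrow(1)$, the same witnesses $(w,\bigcap\nbox(w))$ resp.\ $(w,\bigcap\nbox(w)\cup\{\f\})$ for the backward $\Box$ case, the same case split on whether $\f\in\beta$, and the same appeal to \wconditionbis{} for the backward $\diam$ case (which you phrase contrapositively, the paper directly). Your explicit recording of facts (a) and (b) merely makes precise what the paper uses tacitly.
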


\begin{proof}
It is immediate to show $\Mstar$ is a relational model for \CK{}, 
in particular the conditions on inconsistent worlds are satisfied.
We prove by induction on $A$ that points 1), 2) and 3) are equivalent.
As usual we only consider the inductive cases $A\equiv \Box B, \diam B$.

\vspace{0.2cm}
\noindent
$\bullet$ \ $A\equiv \Box B$.
\begin{itemize}
\item[$-$]  
1) implies 2). 
Assume $\Mn, w \Vdn \Box B$.
Then $[B]_{\Mn}\in\nbox(w)$, that implies $\bigcap\nbox(w)\subseteq[B]_{\Mn}$.
Let $(w, \alpha)\in\Wstar$, and $(w,\alpha)\lessstar (v, \beta)$.
Then $w\less v$, so $\bigcap\nbox(v)\subseteq\bigcap\nbox(w)$.
We distinguish two cases:

\begin{itemize}
\item[($a$)] $\f\in\beta$. 
Then $(v, \beta)\Rstar (u, \gamma)$ implies $u\in\bigcap\nbox(v)$ or $u=\f$.

If $u=\f$, then $(u, \gamma) = (\f, \{\f\})$, so $\Mstar, (u, \gamma) \Vdr B$.

If $u\in\bigcap\nbox(v)$, then $u\in[B]_{\Mn}$.
By i.h. we have $\Mstar, (u, \gamma) \Vdr B$
for all $\gamma$ s.t. $(u, \gamma)\in\Wstar$.

\item[($b$)] $\f\notin\beta$. 
Then $\beta\subseteq\bigcap\nbox(v)$, thus $\beta\subseteq[B]_{\Mn}$.
Let $(v, \beta)\Rstar (u, \gamma)$. Then $u\in\beta$, so $\Mn, u\Vdn B$.
By i.h. we have $\Mstar, (u, \gamma) \Vdr B$. 
\end{itemize}

By ($a$) and ($b$) we have that for all $(v, \beta)\morestar (w, \alpha)$ 
and all $(u, \gamma)$ s.t. $(v, \beta)\Rstar (u, \gamma)$, $\Mstar, (u, \gamma) \Vdr B$. 
Therefore for all $\alpha$ s.t. $(w, \alpha)\in\Wstar$, 
$\Mstar, (w, \alpha) \Vdr \Box B$.

\item[$-$]  
2) implies 3). 
Immediate because for all $w\in\W$ there is $\alpha$ s.t. $(w, \alpha)\in\Wstar$.

\item[$-$] 
3) implies 1). 
Assume $\Mstar, (w, \alpha) \Vdr \Box B$ for an $\alpha$ s.t. $(w, \alpha)\in\Wstar$.
Then for all $(v, \beta)\morestar (w, \alpha)$ and all $(u, \gamma)$ s.t. $(v, \beta)\Rstar (u, \gamma)$, 
$\Mstar, (u, \gamma) \Vdr B$.
Thus in particular, for all $\delta$ s.t. $(w, \delta)\in\Wstar$, 
for all $(u, \gamma)$ s.t. $(w, \delta)\Rstar (u, \gamma)$,  $\Mstar, (u, \gamma) \Vdr B$.
Take any world $z\in\bigcap\nbox(w)$.
There exists $\gamma$ s.t. $(z,\gamma)\in \Wstar$.
Then $(w, \bigcap\nbox(w))\Rstar (z,\gamma)$ or $(w, \bigcap\nbox(w)\cup \{f\})\Rstar (z,\gamma)$
(depending on whether $\ndiam(w)\not=\emptyset$ or  $\ndiam(w)=\emptyset$;
in the first case $\bigcap\nbox(w)\in\ndiam(w)$).
Thus $\Mstar, (z, \gamma) \Vdr B$;
and by i.h., $\Mn, z \Vdn B$.
So $\bigcap\nbox(w)\subseteq [B]_{\Mn}$, which implies $[B]_{\Mn}\in\nbox(w)$.
Therefore $\Mn, w \Vdn \Box B$.
\end{itemize}

\noindent
$\bullet$ \ $A\equiv \diam B$.
\begin{itemize}
\item[$-$] 
1) implies 2). 
Assume $\Mn, w \Vdn \diam B$, and let $(w, \alpha)\in\Wstar$ and $(w, \alpha) \lessstar (v, \beta)$.
We distinguish two cases:

\begin{itemize}
\item[($a$)] $\f\in\beta$. Then $(y, \beta)\Rstar\ff $, and $\Mstar, \ff\Vdr B$.

\item[($b$)] $\f\notin\beta$. Then $\beta\in\ndiam(y)$, so $\beta\in\ndiam(y)$.
By $\Mn, w \Vdn \diam B$, we have that for all $\gamma\in\ndiam(w)$, $\gamma\cap[B]_{\Mn}\not=\emptyset$;
thus $\beta\cap[B]_{\Mn}\not=\emptyset$.
Then there is $u\in\beta$ s.t. $\Mn, u\Vdn B$.
By i.h., for all $\delta$ s.t. $(u, \delta)\in\Wstar$, $\Mstar, (u, \delta)\Vdr B$.
Moreover, there is $\epsilon$ s.t. $(u, \epsilon)\in\Wstar$.
Thus $(v, \beta)\Rstar (u, \epsilon)$ and $\Mstar, (u, \epsilon)\Vdr B$.
\end{itemize}

By ($a$) and ($b$) we have that for all $(v, \beta)\morestar (w, \alpha)$,
there is $(u, \gamma)$ s.t. $(v, \beta)\Rstar (u, \gamma)$ and $\Mstar, (u, \gamma)\Vdr B$.
Therefore, for all $\alpha$ s.t. $(w, \alpha)\in\Wstar$, $\Mstar, (w, \alpha) \Vdr \diam B$.

\item[$-$] 
2) implies 3). 
Immediate because for all $w\in\W$ there is $\alpha$ s.t. $(w, \alpha)\in\Wstar$.

\item[$-$] 
3) implies 1). 
Assume $\Mstar, (w, \alpha)\Vdr \diam B$ for a $\alpha$ s.t. $(w, \alpha)\in\Wstar$.
Then for all $(v, \beta)\morestar (w, \alpha)$, 
there is $(u, \gamma)$ s.t. $(v, \beta)\Rstar (u, \gamma)$ and $\Mstar, (u, \gamma)\Vdr B$.
Thus in particular, for all $\delta$ s.t. $(w, \delta)\in\Wstar$, 
there is $(u, \gamma)$ s.t. $(w, \delta)\Rstar (u, \gamma)$ and $\Mstar, (u, \gamma)\Vdr B$.
We distinguish two cases:

\begin{itemize}
\item[($a$)] $\f\in\delta$ for a $(w, \delta)\in\Wstar$. Then $\ndiam(w)=\emptyset$, so $\Mn, w \Vdn \diam B$.

\item[($b$)] $\f\notin\delta$ for all $(w, \delta)\in\Wstar$. 
Then by i.h. we have that for all $(w, \delta)\in\Wstar$, 
there is $(u, \gamma)$ s.t. $(w, \delta)\Rstar (u, \gamma)$ and $\Mn, u\Vdn B$.
So $u\in\delta$.
This means that for all $\delta\in\ndiam(w)$ s.t. $\delta\subseteq\bigcap\nbox(w)$, $\delta\cap[B]_{\Mn}\not=\emptyset$.
Then by \wconditionbis, 
we have that for all $\epsilon\in\ndiam(w)$, $\epsilon\cap[B]_{\Mn}\not=\emptyset$.
Therefore $\Mn, w \Vdn \diam B$.
\end{itemize}
\end{itemize}
\end{proof}

\begin{theorem}
A formula $A$ is valid in relational models for \CK{}
if and only if it is valid in \intmodel s for \CK.
\end{theorem}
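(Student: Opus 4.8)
The plan is to read off this biconditional as an immediate corollary of the two model transformations already established in Lemmas~\ref{lemma CK rel to neigh} and~\ref{lemma CK neigh to rel}, in exact analogy with the proof of Theorem~\ref{W neigh to rel} for \HW{}. Both directions are handled by contraposition: I turn a model of one kind that falsifies $A$ into a model of the other kind that also falsifies $A$, using the pointwise (or claim-wise) equivalences supplied by the two lemmas.

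For the direction stating that validity in relational models implies validity in \intmodel s, I would argue the contrapositive. Suppose $A$ fails in some \CK-model $\Mn$, say $\Mn, w \not\Vdn A$ for some $w \in \W$. Applying Lemma~\ref{lemma CK neigh to rel} I build the relational model $\Mstar$ attached to $\Mn$. Since claim~1 of that lemma, namely $\Mn, w \Vdn A$, is false, the equivalent claim~2 must fail as well, so there is $(w,\alpha) \in \Wstar$ with $\Mstar, (w,\alpha) \not\Vdr A$; hence $A$ is falsified in a relational model for \CK. This direction is routine once the lemma is in hand.

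For the converse I again argue the contrapositive: suppose $A$ fails in some relational model $\Mr$ for \CK, say $\Mr, w \not\Vdr A$. Here the point requiring care is that Lemma~\ref{lemma CK rel to neigh} only guarantees the pointwise equivalence $\Mn, w \Vdn A$ iff $\Mr, w \Vdr A$ for \emph{consistent} worlds $w \in \W\pos$, unlike the \HW{} transformation which covers all of $\W$. I would therefore first show that the falsifying world $w$ must be consistent, i.e.\ $w \in \W\pos$: a world at which some formula fails cannot be fallible, since every fallible world forces all formulas. Granting this, Lemma~\ref{lemma CK rel to neigh} applied to $\Mr$ yields a \CK-model $\Mn$ on $\W\pos$ with $\Mn, w \not\Vdn A$, as required.

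The main obstacle is exactly this auxiliary fact that every fallible world forces every formula, which is what reconciles the asymmetry of Lemma~\ref{lemma CK rel to neigh} with the symmetric statement of the theorem. It is proved by induction on the formula, using from Definition~\ref{relational model for CK} that both the $\less$- and the $\R$-successors of a fallible world are again fallible, together with the base clause that fallible worlds force all propositional variables; the Boolean and $\Box$ cases then follow immediately. The delicate inductive case is $\diam B$, whose forcing at a fallible world requires that the world possess at least one $\R$-successor, which, being fallible, forces $B$ by the induction hypothesis. I would check that the intended reading of the \CK{} relational models supplies such a successor, so that $\bot \imp \diam B$ is genuinely validated and the explosion principle is preserved at fallible worlds; once this is settled, the theorem follows just as in the \HW{} case.
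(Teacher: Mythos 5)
Your proof is correct and takes essentially the same approach as the paper: one direction applies Lemma \ref{lemma CK neigh to rel} to a falsifying \intmodel{}, and the other combines the observation that a world falsifying $A$ in a relational model must be consistent (because fallible worlds force every formula) with Lemma \ref{lemma CK rel to neigh}. The only difference is that the paper asserts the fallible-worlds-force-everything fact outright (it is part of the setup around Definition \ref{relational model for CK}, which notes that $\bot \imp A$ remains valid), whereas you sketch its inductive proof, correctly flagging that the $\diam$ case rests on fallible worlds having (fallible) $\R$-successors.
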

\begin{proof}
Assume $A$ not valid in relational models for \CK{}.
Then there are a relational model $\Mr$ and a world $w$ such that
$\Mr, w \not\Vdr A$.
World $w$ is consistent (\emph{i.e.} $\Mr, w \not\Vdr \bot$) as inconsistent worlds satisfy all formulas.
Then by Lemma \ref{lemma CK rel to neigh}, there is a \intmodel{} $\Mn$ for \CK{} such that $\Mn, w \not\Vdn A$.

Now assume $A$ not valid in \intmodel s for \CK.
Then there are $\Mn$ and $w$ such that $\Mn, w \not\Vdn A$.
By Lemma \ref{lemma CK neigh to rel}, there are 
a relational model $\Mstar$ and a world $(w, \alpha)$ such that $\Mstar, (w, \alpha)\not\Vdr A$.
\end{proof}

\section{Conclusion and further work}
This work represents the initial step towards a general investigation of non-normal modalities with an intuitionistic base.
We have defined a new family of \intlogic s 
that can be seen as intuitionistic counterparts of classical non-normal modal logics.
In particular, we have defined 12 monomodal logics 
-- 8 logics with $\Box$ modality and 4 logics with $\diam$ modality -- and 24 bimodal logics. 
For each of them we have provided both a Hilbert axiomatisation and a cut-free sequent calculus.
All logics are decidable
and contain some of the modal axioms characterising the classical cube.
In addition, bimodal logics contain interactions between the modalities that 
can be seen as ``weak duality principles'', and express under which conditions two formulas $\Box A$ and $\diam B$ are jointly inconsistent.
On the basis of the different strength of such interactions we identify different intuitionistic counterparts of a given classical logic.

Subsequently, we have given a modular semantic characterisation of the logics by means of so-called \cupledintmodel s.
The models contain an order relation and two neighbourhood functions
handling the modalities separately. 
For the two functions we consider the standard properties of neighbourhood models,
moreover they can be combined in different ways reflecting the possible interactions between $\Box$ and $\diam$.   
Through a filtration argument we have also proved that most of the logics enjoy the finite model property.
Our semantics turned out to be a versatile tool to analyse intuitionistic non-normal modal logics,
which is capable of capturing further well-known \intbilogic s as Constructive \K{} and the propositional fragment of \wij's \CCDL.

Our results can be extended in several directions. First of all  we can study further extensions of the cube by axioms  analogous to the standard modal ones such as \axT, \axD, \axquattro, \axcinque, \emph{etc.} 
(some cases have already been considered by Witczak \cite{Witczak2}).
Furthermore, we can study  computational and proof-theoretical properties such as complexity bounds and interpolation. 
To this regard we plan to develop  sequent calculi   with invertible rules  and that allow for  
direct countermodel extraction.

From the semantical side we intend to investigate  whether it can be given a semantic characterisation of axiom \axCdiam{}, 
that to our knowledge has not been captured yet.

Finally, it would be interesting to see whether these logics, similarly to \CK{}, can be given a type-theoretical interpretation by a suitable extension of the typed lambda-calculus. 
All of this will be part of our future research.

\end{document}